\newcommand{\blind}{1}
\newtheorem{lem}{Lemma}
\newtheorem{thm}{Theorem}
\newtheorem{cor}{Corollary}
\newtheorem{rem}{Remark}
\newtheorem{defi}{Definition}
\numberwithin{equation}{section}
\DeclareMathOperator*{\argmin}{arg\,min}
\DeclareMathOperator*{\argmax}{arg\,max}
\newcommand{\xmark}{\ding{55}}%
\newcommand{\bA}{\boldsymbol{A}}
\newcommand{\br}{\boldsymbol{r}}
\newcommand{\bx}{\bm{x}}
\newcommand{\by}{\bm{y}}
\newcommand{\bu}{\bm{u}}
\newcommand{\bz}{\bm{z}}
\newcommand{\bX}{\boldsymbol{X}}
\newcommand{\bQ}{\boldsymbol{Q}}
\newcommand{\bw}{\boldsymbol{w}}
\newcommand{\bM}{\boldsymbol{M}}
\newcommand{\bP}{\boldsymbol{P}}
\newcommand{\bbeta}{\boldsymbol{\beta}}
\newcommand{\btheta}{\boldsymbol{\theta}}
\newcommand{\bgamma}{\boldsymbol{\gamma}}
\newcommand{\balpha}{\boldsymbol{\alpha}}
\newcommand{\bzero}{\boldsymbol{0}}
\newcommand{\var}{\mathrm{var}}
\newcommand{\pr}{\mathrm{pr}}
\newcommand{\intd}{\mathrm{d}}
\newcommand{\logit}{\mathrm{logit\,}}
\newcommand{\Cov}{\mathrm{cov}}
\newcommand{\T}{\mathrm{T}}
\newcommand{\norm}[1]{\Vert#1\Vert}
\begin{document}



\if1\blind
{
  \title{\bf Functional Calibration under Non-Probability Survey Sampling}
  \author{Zhonglei Wang \\
    Wang Yanan Institute for Studies in Economics and School of Economics, \\
    Xiamen University\\
    and \\
    Xiaojun Mao\thanks{
    Zhonglei Wang and Xiaojun Mao contribute equally.}\hspace{.2cm} \\
    School of Mathematical Sciences, Shanghai Jiao Tong University\\
    and\\
    Jae Kwang Kim\\
    Department of Statistics, Iowa State University}
    \date{}
  \maketitle
} \fi

\if0\blind
{
\title{\bf Functional Calibration under Non-Probability Survey Sampling}
\author{}
\date{}
\maketitle
} \fi

\bigskip
\begin{abstract}
 Non-probability sampling is prevailing in survey sampling, but ignoring its selection bias leads to erroneous inferences. We offer a unified nonparametric calibration method to estimate the sampling weights for a non-probability sample  by calibrating functions of auxiliary variables in a reproducing kernel Hilbert space. The consistency and the limiting distribution of the proposed estimator are established, and the corresponding variance estimator is also investigated. Compared with existing works, the proposed method is more robust since no parametric assumption is made for the selection mechanism of the non-probability sample. Numerical results demonstrate that the proposed method outperforms its competitors, especially when the model is misspecified. The proposed method is applied to analyze the average total cholesterol of Korean citizens based on a non-probability sample from the National Health Insurance Sharing Service and a reference probability sample from the Korea National Health and Nutrition Examination Survey.
\end{abstract}

\noindent%
{\it Keywords:}   Data integration;  Missing at random; Nonparametric weighting; Reproducing kernel Hilbert space.
\vfill

\def\spacingset#1{\renewcommand{\baselinestretch}%
{#1}\small\normalsize} \spacingset{1}
\spacingset{1.9} 

\section{Introduction}
Probability sampling serves as a golden standard to estimate finite population parameters in social science \citep{elliott2017inference,haziza2017construction}, but low response rates and inevitable dropouts have made it ``tarnished gold'' recently \citep{keiding2016perils}. Moreover, it is costly and time-consuming to conduct probability sampling, so it is only feasible for well-funded and socially important surveys \citep{baker2013summary,o2014generalizing}.   On the other hand, due to its feasibility and low cost, non-probability sampling, especially web surveys, has become increasingly popular \citep{couper2000web,couper2008web,dever2008internet,tourangeau2013science,dever2014estimation}. Nonetheless, a non-probability sample is rarely  representative of the target population because of its unknown selection mechanism. If such a selection mechanism is not properly incorporated, it may lead to erroneous inferences.  Therefore, adjusting the selection bias for a non-probability sample has become a hot research topic in survey sampling. 

If an additional reference probability sample is available, there are primarily two techniques to adjust the selection bias for a non-probability sample. One method involves combining the non-probability sample and the reference probability sample to calculate propensity scores \citep{rosenbaum1983central}.   Under a parametric assumption for the response model, \citet{lee2006propensity} developed a quasi-randomization method to estimate the propensity scores for the pooled sample. The pooled sample is then divided into groups based on the 
estimated propensity scores, and modified sampling weights for the non-probability sample are calculated for each group.  \citet{lee2009estimation} generalized the quasi-randomization method \citep{lee2006propensity} by an additional calibration adjustment \citep{deville1992calibration} for the case when marginal population totals of  auxiliaries are available. \citet{valliant2011estimating} compared several propensity-score-based estimators and concluded that sampling weights of the reference probability sample should be incorporated when estimating the propensity scores. Also see \citet{rivers2007sampling}, \citet{bethlehem2010selection}, \citet{brick2015compositional}, \citet{elliott2017inference} and the references within for more details. The second approach uses calibration   to adjust the selection bias of a non-probability sample. \citet{kim2019sampling} estimated the ``importance weights'' for a non-probability sample based on the Kullback-Leibler (KL) divergence, and they only assumed the availability of the marginal population totals for auxiliaries. \citet{chen2020doubly} proposed to estimate the parameters in the propensity score model based on a calibration constraint \citep{wu2001model,beaumont2005calibrated}, and a reference probability sample is used to estimate the population totals of a specific estimating function.

Existing works  assume a  parametric model either for the propensity scores \citep{elliott2007use,chen2020doubly} or for the sampling weights \citep{kim2019sampling}, so they suffer from model misspecification \citep{robins1994estimation,Han2013}. In this paper, we present a nonparametric method based on functional calibration in a reproducing kernel Hilbert space \citep[RKHS]{wahba1990spline} for adjusting 
the selection bias for a non-probability sample. Specifically, uniformly calibrating functions in an RKHS is utilized to get the estimated sampling weights of a non-probability sample, and the reference probability sample is used to estimate the associated population totals as \citet{chen2020doubly}.  In addition, we propose to use the KL divergence as a penalty to avoid  overfitting. 
 Under some regularity conditions, asymptotic properties of the proposed method are investigated, and numerical results demonstrate the advantages of the proposed method over its alternatives.

The proposed method differs from existing ones in the following aspects. Some existing methods \citep{valliant2011estimating,chen2020doubly} estimate propensity scores for a non-probability sample, so the corresponding estimator is inefficient if estimated propensity scores are close to zero. To avoid such inefficiency, we propose directly estimating the sampling weights and applying penalties  as well. Unlike \citet{chen2020doubly}, we do not make any parametric assumption for calibration.  Rather than that, the sampling weights of a non-probability sample are calculated via uniform  minimization of a calibration-based objective function over an RKHS. Thus, the proposed method outperforms the method of \citet{chen2020doubly} and other existing ones in terms of robustness.  {Since the proposed method uniformly calibrates functions in an RKHS, it is essentially a multitask-oriented learning  in the sense that the estimated sampling weights can be used to estimate several different parameters from  the non-probability sample. This property is appealing especially when the non-probability sample consists of  many survey questions.} Up to our knowledge, we are the first to  use  uniform calibration to  adjust the selection bias for a non-probability sample.

Although the proposed method is motivated by  \citet{wong2018kernel}, instead of assuming the auxiliaries to be available for every element in a finite population, we consider the setup when  a reference probability sample is used to estimate the population totals for functions in the RKHS. It is worth pointing out that assuming the availability of auxiliary information for each element in a finite population is generally unrealistic under survey sampling. Different from \citet{wong2018kernel}, we propose  a penalty based on a nonparametric density ratio model, and numerical results indicate that the proposed method is more efficient than theirs in terms of computation and estimation; see Section~\ref{sec: simu} for details. Besides, since the sampling indicators are no longer independent for the reference probability sample under rejective sampling, the  theoretical results from \citet{wong2018kernel} are not applicable to our method, and  we adopt a different empirical process technique instead. Even though empirical processes have been studied under survey sampling,  most of them assumed that the sample size is of the same order with the population size asymptotically \citep{breslow2007,conti2014estimation,bertail2017empirical,han2021}, but it is rarely the case for a reference probability sample in practice due to the limited budget.  \citet{Boistard2017} relaxed that stringent condition on the sample size, but they focused on single-stage sampling designs. 
In this paper, theoretical properties of the proposed method are investigated without assuming that the sizes of the reference probability sample and the finite population are of the same asymptotic order, and the proposed method applies as long as the reference probability sample is generated by a rejective sampling design, not limited to single-stage sampling.

The remaining of this paper is organized as follows. The motivation of the proposed method is introduced in Section~\ref{sec: basic setup}. The proposed method is presented in Section~\ref{sec: prop}, and its theoretical properties are investigated in Section~\ref{sec: asymptot}. Simulation studies are demonstrated in Section~\ref{sec: simu}. The proposed method is applied to estimate the average total cholesterol of the Korean citizens in Section~\ref{sec: application}. Concluding remarks are provided in Section~\ref{sec: conclusion}.

\section{Motivation}\label{sec: basic setup}

\subsection{Basic setup} 
To introduce the idea of uniform calibration,  assume  that  the finite population $\mathcal{F}_N = \{(\bx_i,y_i):i=1,2, \ldots, N\}$ is a random sample of size $N$ from a super-population model,
 	 	\begin{equation}
 	 		y_i = m(\bx_i) + \epsilon_i\quad (i=1,\ldots,N),\label{eq: popu model}
 	 	\end{equation}
  	where $\bx_i\in\mathcal{X}$, $\mathcal{X}\subset\mathbb{R}^d$ is a $d$-dimensional compact set, $y_i$ is the response of interest, $m(\bx_i)=E(y_i\mid\bx_i)$ is a smooth function \citep{wahba1990spline},  $\epsilon_i$ is independent with $\bx_i$, $E(\epsilon_i)=0$ and $E(\epsilon_i^2)=\sigma_i^2$, and $\sigma_1,\ldots,\sigma_N$ are positive constants with respect to the super-population model. We adopt a design-based framework and assume that the finite population $\mathcal{F}_N$ is fixed once it is generated; see Part~I of \citet{sarndal2003model}, Chapter~1 of \citet{fuller2009} and \citet{chen2020doubly} for details. The parameter of interest is the population mean $\bar{Y}_N = N^{-1}\sum_{i=1}^Ny_i$. For simplicity, assume that  the population size $N$ is known.

Let $A$ be a non-probability sample with observations on both the auxiliary vector and the response of interest, and $B$ be a reference probability sample with information on the auxiliary vector only. That is, both $\{(\bx_i,y_i):i\in A\}$ and $\{(\bx_i,\pi_{B,i}):i\in B\}$ are available, where $\pi_{B,i}$ is the first-order inclusion probability of the $i$-th element with respect to the probability sample $B$. Since the selection mechanism of the non-probability sample $A$ is unknown, this sample may  not  represent the finite population. Table \ref{table1} shows the general data structure of the two samples.  How to adjust the selection bias of the non-probability sample $A$ using the auxiliary information available from the probability sample $B$ is an important practical problem in survey sampling.

\begin{table}
\centering
\caption{Data structure of the two  samples. ``$\bX$'' denotes the auxiliary vector, and ``$Y$'' denotes the response of interest. ``\Checkmark'' is used if the information is available and ``\xmark'' otherwise.}
\label{table:1}
\begin{center}
\begin{tabular}{ccccc}
\hline
$\;\;\;$ Sample $\;\;\;$ & $\;\;\;$ Type $\;\;\;$ & $\;\;\;\;$ $\bX$ $\;\;\;\;$ & $\;\;\;\;$ $Y$ $\;\;\;\;$ & $\;$ Representativeness $\;$ \\
\hline
$A$ & Non-probability Sample & \Checkmark & \Checkmark & No \\
$B$ & Probability Sample & \Checkmark & \xmark & Yes \\
\hline 
\end{tabular}
\end{center}
\label{table1} 
\end{table}

A special case is  when the probability  sample $B$ is a census, and it was investigated by \citet{wong2018kernel}. However, a census is usually hard to obtain in practice. Thus, we focus on a more general case assuming that the probability sample $B$ is generated by a rejective sampling design. Under rejective sampling,  a sample is only acceptable  if a certain criterion is satisfied; see \citet[Section~1.2.6]{fuller2011sampling} and \citet{fuller2009} for details. Besides, the corresponding sampling indicators $\{\delta_{B,i}:i=1,\ldots,N\}$ are negatively associated \citep{bertail2016sharp}, where $\delta_{B,i}=1$ if $i\in B$ and 0 otherwise.
\subsection{Uniform calibration} 

To motivate the proposed method, we make a stronger assumption for (\ref{eq: popu model}) that there exists a positive constant $\sigma_0$ such that $\sigma_i=\sigma_0$ for $i=1,\ldots,N$. We consider an estimator of the form $\hat{Y}=N^{-1}\sum_{i\in A}{\omega}_iy_i$, where $\{{\omega}_i:i\in A\}$ are a set of weights to be determined. Under the super-population model  (\ref{eq: popu model}), we have
\begin{eqnarray}
    \hat{Y} - \bar{Y}_N &=& N^{-1} \left\{ \sum_{i \in A} {\omega}_i m(\bx_i) - \sum_{i =1}^N  m(\bx_i)  \right\} +   N^{-1}\left\{ \sum_{i \in A} {\omega}_i \epsilon_i  - \sum_{i=1}^N  \epsilon_i   \right\}\notag \\ 
    &:=& C+D.\label{eq: mot 1}
\end{eqnarray}
Thus, the weights $\{{\omega}_i:i\in A\}$ are optimal if  $Q=C^2+E\{D^2\}$ is minimized, where the expectation is taken with respect to the super-population model (\ref{eq: popu model}) conditional on the non-probability sample $A$. 
If the true mean function satisfied
\begin{equation} 
m(\bx) \in \mbox{span}\{ b_1(\bx),\ldots,b_L(\bx)\} \equiv \mathcal{H}_0 
\label{assume1}
\end{equation}  
for some basis functions $b_1(\bx),\ldots,b_L(\bx)$, and if $\{\bx_1,\ldots,\bx_N\}$ were available, then we could obtain $C^2=0$ by imposing 
\begin{equation} 
 \sum_{i \in A} \omega_i \left[ b_1(\bx_i), \ldots, b_L(\bx_i) \right] = \sum_{i=1}^N \left[ b_1(\bx_i), \ldots, b_L(\bx_i) \right] 
\label{calib2}
\end{equation} 
as the calibration equation for determining $\{\omega_i:i\in A\}$. Thus, the optimal weights are those minimizing 
$$ \sigma_0^{-2} E\{D^2 \} = \sum_{i \in A} \left( \omega_i-1 \right)^2 + \mbox{const} 
$$
subject to (\ref{calib2}). Assumption (\ref{assume1}) can be restrictive as the mean function is linear in the basis functions. However, we can still use the basis functions in the calibration equation to provide   nonparametric calibration estimates 
by allowing that the basis functions give a uniform approximation of the nonlinear function $m(\bx)$  with increasing dimension $L$.   For examples of nonparametric calibration estimation, \cite{montanari2005}  used a single-layer neural network model and  \cite{breidt2005model} considered a  penalized spline  model.

In our setup, instead of observing $\{\bx_i:i=1,\ldots,N\}$, however, only a reference probability sample $B$ is available. 
Since $\sum_{i \in B} \pi_{B,i}^{-1}  \left[ b_1(\bx_i), \ldots, b_L(\bx_i) \right]$ is design-unbiased for $\sum_{i=1}^N [ b_1(\bx_i), \ldots, b_L(\bx_i) ]$ \citep{horvitz1952generalization}, rather than  (\ref{calib2}), we impose
\begin{equation} 
 \sum_{i \in A} \omega_i \left[ b_1(\bx_i), \ldots, b_L(\bx_i) \right] = \sum_{i \in B} \pi_{B,i}^{-1}  \left[ b_1(\bx_i), \ldots, b_L(\bx_i) \right] 
\label{calib3}
\end{equation} 
as the calibration equation. Recall that the calibration (\ref{calib3}) is justified under (\ref{assume1}). Now, if assumption (\ref{assume1}) does not hold, then we  may intuitively consider minimizing 
\begin{equation}
 Q =  \sup_{ u \in \mathcal{H}} \left\{ \sum_{i \in A} {\omega}_i u(\bx_i) - \sum_{i \in B} \pi_{B,i}^{-1} u(\bx_i)  \right\}^2 + \sigma_0^2 \sum_{i\in A}({\omega}_i-1)^2 \label{eq: I II PRIME}
\end{equation}
directly for some function space $\mathcal{H}$. The first term of (\ref{eq: I II PRIME}) achieves the approximate uniform   calibration and the second term achieves the weight stabilization.   We assume that the function space $\mathcal{H}$ is an RKHS, so that we can construct certain basis functions in $\mathcal{H}$ from the sample; see Section~\ref{ss RKH} of the Supplementary Material for a brief introduction to an RKHS. 
In the next section, we also propose a different penalty term to stabilize the estimated weights, {and the advantage of the new penalty term is shown in Section~\ref{sec: simu}.}
\subsection{Assumptions} 
Before closing this section, we make the following assumptions  for the non-probability sample $A$: 
 \begin{enumerate}
 \renewcommand{\labelenumi}{A\arabic{enumi}.}
     \item The sampling indicators $\{\delta_{A,i}:i=1,\ldots,N\}$ are mutually independent, where $\delta_{A,i}=1$ if $i\in A$ and 0 otherwise. \label{ass: A ind}
    \item The sampling indicator $\delta_{A,i}$  is independent with  the response of interest $y_i$ given $\bx_i$. That is, 
$\pr(\delta_{A,i}=1\mid \bx_i,y_i)=\pi_A(\bx_i).$\label{ass: MAR}
 \end{enumerate}
 
The independence assumption in Assumption~A\ref{ass: A ind} is widely adopted for non-probability sampling \citep{keiding2016perils,chen2020doubly}; also see Section~17.2 of \citet{wu2021} for details.
The non-informative   assumption \citep{pfeffermann1993} in Assumption~A\ref{ass: MAR} is also commonly presumed for observational studies with  a sample similar as the non-probability one; see \citet{rosenbaum1983central} for details.

By Assumption~\ref{ass: MAR} and the Bayes formula, we have
 \begin{equation} 
 \pi_A( \bx_i) =  \frac{\pi_1 f( \bx_i \mid  \delta_{A,i}=1)  }{
 \pi_1 f ( \bx_i \mid \delta_{A,i}=1) + \pi_0 f (\bx_{i} \mid \delta_{A,i}=0)  }
 = \frac{\pi_1 f_1( \bx_i )  }{
 \pi_1 f_1 ( \bx_i ) + \pi_0 f_0 (\bx_i)  },
 \label{11-3-4}
 \end{equation} 
 where $\pi_1 = \int \pi_{A}(\bx)\intd \bx$, and $\pi_0 =1-\pi_1$, and $f_1(\bx_i)$ and $f_0(\bx_i)$ are the conditional probability densities of  $\bx_i$ given $\delta_{A,i}=1$ and $\delta_{A,i}=0$  with respect to a certain dominating measure $\mu$, respectively. For simplicity, we assume  the dominating measure $\mu$ to be the Lebesgue measure in this paper.  Writing $\omega^\star (\bx)= \{ \pi_A( \bx)\}^{-1}$, by (\ref{11-3-4}), we can obtain 
 \begin{equation}\label{eq: wr} \omega^\star (\bx)  = 1+ \frac{\pi_0}{ \pi_1 }r^{\star} (\bx) ,
 \end{equation}
 where 
 $r^{\star}( \bx) = f_0 ( \bx )/ f_1 ( \bx) $.  
That is, there is a one-to-one correspondence between  $ \omega_i^\star =\omega^\star( \bx_i)$ and  $r^\star_i=r^{\star} (\bx_i)$ for $i\in A$. In this paper, we propose to estimate $r^\star_i$ instead of $\omega_i^\star$, and its advantage is discussed in Remark~\ref{rem: KL diver} of the next section.

To regulate the selection probabilities associated with the non-probability sample $A$, we make the following assumption.
\begin{enumerate}
 \renewcommand{\labelenumi}{A\arabic{enumi}.}
 \setcounter{enumi}{2}
        \item There exist two positive constants $0<C_{A,1}<C_{A,2}<1$, such that  $C_{A,1}\leq \pi_A(\bx)\leq C_{A,2}$ for $\bx\in\mathcal{X}$. \label{ass:piB}
\end{enumerate}
The assumption $\pi_{A}(\bx)>C_{A,1}$ is slightly stronger than  assuming $\pi_A(\bx)>0$ for $\bx\in\mathcal{X}$; see Section~17.2 of \citet{wu2021} and Assumption~A2 of \citet{chen2020doubly} for comparison. However, such an assumption is required to derive the asymptotic properties of the proposed method; see the proof of Lemma~\ref{lemma: R S2} in Section~\ref{supp: proof of lemma R S2} of the Supplementary Material for details. Besides, Assumption~A\ref{ass:piB} implies that $n_B$ asymptotically has the same order as the population size $N$, and it makes sense in practice since a non-probability sample usually corresponds to  a big data source. The condition $\pi_A(\bx)<C_{A,2}$ for $\bx\in\mathcal{X}$ guarantees that the corresponding density ratio function $r^{\star}(\bx)$ is positive, so that the loss function (\ref{eq: popu level}) in the next section is valid for $r^{\star}(\bx)$. Specifically, by  (\ref{11-3-4}), we have 
\begin{equation}\label{eq: R x}
    r^\star(\bx) = \frac{f_0(\bx)}{f_1(\bx)} = \frac{\pi_1\{1-\pi_A(\bx)\}}{\pi_0\pi_A(\bx)},
\end{equation}
and by Assumption~A\ref{ass:piB}, we conclude that there exists two positive  constant $0<C_{r,1}< C_{r,2}$ depending only on $C_{A,1}$ and $C_{A,2}$, such that for $\bx\in\mathcal{X}$, we have                                 
\begin{equation}\label{eq: range r}
 C_{r,1}\leq r^\star(\bx)\leq C_{r,2}.
\end{equation}

\section{Proposed method}\label{sec: prop}
In Section 2, 
we have seen that the propensity score estimation problem reduces to the density ratio estimation problem. Density ratio estimation (DRE), the problem of estimating the ratio of two density functions for two different populations, is a fundamental problem in machine learning \citep{sugiyama2012}.
By partitioning the sample into two groups based on the response status, we can apply the DRE method and thus obtain the inverse propensity scores.
One important method of DRE is so called the maximum entropy method, which minimizes the KL divergence (or negative entropy)
subject to a normalization constraint \citep{nguyen2010}.

Applying the maximum entropy method of  \citet{nguyen2010}, the density ratio function $r^{\star}(\bx)$ can be understood as the maximizer of 
 \begin{equation}\label{eq: popu level}
     Q (r) =  \int  r\log \left( r \right) f_{1}  d \mu  - \int r f_1 d \mu + 1,
\end{equation} 
where $r=r(\bx)>0$ for $\bx\in\mathcal{X}$. 
A sample version of (\ref{eq: popu level}) is 
 \begin{equation} 
{Q}_A(\bgamma) =  \frac{1}{n_A} \sum_{i=1}^{N}   \delta_{A,i}   r_i  \left\{  \log ( r_i) -1 \right\}+1
\label{qq0},
\end{equation}  
where $\bgamma = (r_1,\ldots,r_N)^{\T}$, $r_i=r(\bx_i)$ if $i\in A$ and $r_i=0$ otherwise; see  (7.26) of \citet{kim2013statistical} for details. 

We propose to estimate $\{r_i^\star:i\in A\}$  by uniformly calibrating functions in an RKHS $\mathcal{H}$. Consider 
\begin{equation}
	\hat{\bgamma} = \argmin_{\xi_1\leq r_i\leq  \xi_2}\left[\sup_{u\in{\mathcal{H}}}\left\{\frac{S(\bgamma,u)}{\lVert u\rVert_2^2}-\lambda_1\frac{\lVert u\rVert_{\mathcal{H}}^2}{\lVert u\rVert_2^2}\right\}- \lambda_2 Q_A (\bgamma) \right],\label{eq: object2}
\end{equation}
where  $\xi_1\leq\xi_2$ are  predetermined numbers, $\lVert u\rVert_{\mathcal{H}}$ is the norm associated with the RKHS $\mathcal{H}$,  $\lVert u\rVert_2^2 = n^{-1}\sum_{i=1}^N(\delta_{A,i}+\delta_{B,i})u(\bx_i)^2$, $n=n_A+n_B$, $\lambda_1>0$ and $\lambda_2>0$ are two tuning parameters, and
\begin{eqnarray}
	S(\bgamma,u) &=& \left[ N^{-1}\sum_{i=1}^N\delta_{A,i} \left\{ 1+ \left( \frac{N}{n_A} -1\right)r_i \right\}u(\bx_i) - N^{-1}\sum_{i=1}^N\delta_{B,i} \pi_{B,i}^{-1}u(\bx_i)\right]^2. \notag \\
\label{eq:S2}
\end{eqnarray} 
In the optimization problem (\ref{eq: object2}), we should choose a sufficiently small $\xi_1$ and  a sufficiently large $\xi_2$ to guarantee $\xi_1\leq C_{r,1}<C_{r,2}\leq \xi_2$ by (\ref{eq: range r}). In practice, we can set $\xi_1 = 10^{-8}$ and $\xi_2 = 10^8$, for example. 
Since $\pi_0\pi_1^{-1}$ is generally unavailable, we replace it by $Nn_A^{-1}-1$ in (\ref{eq:S2}).
Different from \citet{wong2018kernel}, we assume an upper bound for $\{r_i:i=1,\ldots,N\}$ in the optimization problem (\ref{eq: object2}), and such an assumption is used to guarantee the convergence rate of $S(\hat{\bgamma},u)$ for $u\in\mathcal{H}$; see (\ref{eq: rate of s gamma hat}) in the Supplementary Material for details.  For simplicity, we implicitly assume that the auxiliary vectors  $\{\bx_i:i\in A\}$ and $\{\bx_i:i\in B\}$ are pairwise distinct. Otherwise, the objective function  should be minimized only by  distinct auxiliaries in $\{\bx_i:i\in A\}\cup\{\bx_i:i\in B\}$, and $n$ is the corresponding size of the pooled set.   The values for the two tuning parameters $\lambda_1$ and $\lambda_2$ are determined by  five-fold cross validation.

Intuition for the objective function (\ref{eq: object2}) is briefly discussed. First, $S(\bgamma,u)$ in (\ref{eq:S2}) balances  the two estimators for the population mean $N^{-1}\sum_{i=1}^Nu(\bx_i)$ over $u\in\mathcal{H}$. As discussed in Section~\ref{sec: basic setup}, since the sampling weights are incorporated for the probability sample $B$, the estimator  $N^{-1}\sum_{i=1}^N\delta_{B,i} \pi_{B,i}^{-1}u(\bx_i)$  is design-unbiased for $N^{-1}\sum_{i=1}^Nu(\bx_i)$ and $u\in\mathcal{H}$ \citep{horvitz1952generalization}. On the other hand, if $1+(Nn_A^{-1}-1)r_i$ is close to $\omega_i^{\star}$ for $i\in A$, the first term in (\ref{eq:S2}) is also approximately design-unbiased, so $S(\bgamma,u)$ should be small. However, $S(\bgamma,u)$ is not scale invariant, and $S(\bgamma,cu) = c^2S(\bgamma,u)$ for $c\in\mathbb{R}$. Thus, to make it scale-invariant, we consider $S(\bgamma,u)/\lVert u\rVert^2_2$ in the objective function (\ref{eq: object2}). Since $\mathcal{H}$ is large, there may not exist $\bgamma$ such that $S(\bgamma,u)=0$ holds for every $u\in\mathcal{H}$, so we uniformly balance the two estimators by  minimizing $\sup_{u\in\mathcal{H}}\{S(\bgamma,u)/\lVert u\rVert^2_2\}$. Because we have assumed that  $m(\bx)$ is a smooth function in (\ref{eq: popu model}),  a penalty on the smoothness of a function $u$ is incorporated in the objective function (\ref{eq: object2}). To stabilize the estimated weights, we use $-\lambda_2{Q}_A(\bgamma)$ in (\ref{qq0}) as another penalty.

\begin{rem}
We highlight the difference between the proposed method and existing ones. \citet{chen2020doubly} proposed a parametric model for $\pr(\delta_{A,i}=1\mid \bx_i)$, and the model parameters are estimated by a  calibration method based on a pre-specified smooth function $h(\bx)$. There exist different choices for $h(\bx)$, including basis functions of P-splines \citep{breidt2005model}, neural network estimators \citep{montanari2005}, and other modern machine learning methods \citep{breidt2017model}.  Even though the aforementioned works were not proposed to adjust the selection bias for a non-probablity sample, their methods can be easily implemented in the framework of \citet{chen2020doubly}. 
Rather than calibrating the predetermined basis function as existing works, we proposed to uniformly calibrate functions in an RKHS, and the limiting properties of the proposed estimator are also investigated; see Section~\ref{sec: asymptot} for details.   
\end{rem}

\begin{rem}\label{rem: KL diver}
\citet{wong2018kernel} used $\lambda_2N^{-1}\sum_{i=1}^N\delta_{A,i}\{1+(Nn_A^{-1}-1)r_i\}^2$ as a penalty to avoid extremely large sampling weights, and it is similar to the second term in (\ref{eq: I II PRIME}). For such a penalty, as $\lambda_2\to\infty$, all estimated sampling weights are close to 1. Then,  $\bar{Y}_N$ is estimated merely by the mean $n_A^{-1}\sum_{i\in A}y_i$ of the non-probability sample $A$, and this estimator is biased since the selection mechanism  for the non-probability $A$   is overlooked. To avoid this possible estimation bias when $\lambda_2$ is large, we propose $-\lambda_2{Q}_A(\bgamma)$ instead. Then, as $\lambda_2\to\infty$, we can still get reasonable estimates for the sampling weight  due to the fact that the density ratio function $r^\star( \bx)$ is the maximizer of $Q (r)$ in (\ref{eq: popu level}), which can be unbiased estimated by ${Q}_A(\bgamma)$ in (\ref{qq0}). Numerical results demonstrate the superior performance of the new objective function compared with \citet{wong2018kernel}; see  Section~\ref{sec: simu} for details.
\end{rem}

By the representer theorem, the solution of the inner optimization of (\ref{eq: object2}) lies in the space spanned by $\{K(\bx_i,\cdot):i\in A\cup B\}$, where $K(\bx,\by)$ is the kernel function associated with the RKHS $\mathcal{H}$. We can adopt a similar procedure as in Section~2.3 of \citet{wong2018kernel} to solve the optimization problem (\ref{eq: object2}); see Section~\ref{append: deri} of the Supplementary Material for details.
Once $\{\hat{r}_i:i\in A\}$ are obtained, we get $ \hat{\omega}_i = 1+ ( N n_A^{-1}- 1 )\hat{r}_i
 $ for $i\in A$, and the parameter of interest $\bar{Y}_N$ can be  estimated by 
\begin{equation}
\hat{Y}_N = N^{-1}\sum_{i\in A}\hat{\omega}_iy_i.\label{eq: proposed estimator kl}
\end{equation}
Asymptotic properties of the proposed estimator $\hat{Y}_N$ in (\ref{eq: proposed estimator kl}) are discussed in the next section. 

{
\begin{rem}
\citet{pmlr-v80-hebert-johnson18a} and \citet{kim2022universal} proposed a multicalibration framework to estimate $m(\bx)$ in (\ref{eq: popu model}), and they showed that their method can be applied to analyze different target (sub-)populations. Even though their methods are not discussed under non-probability sampling, they essentially use uniform calibration. Different from \citet{pmlr-v80-hebert-johnson18a} and \citet{kim2022universal}, our proposed method can be regarded as multitask-oriented. Although we explicitly propose a regression model in (\ref{eq: popu model}), the response of interest is not involved in the objective function (\ref{eq: object2}). Thus, a single set of  the estimated sampling weights $\{\hat{\omega}_i:i\in A\}$ can be  applied to different $Y$ variables and the internal consistency among survey estimates can be achieved  in the non-probability sample. 
\end{rem}
}

\section{Asymptotic theory}\label{sec: asymptot}
Since   $\mathcal{X}$ is compact, we set $\mathcal{X}=[0,1]^d$ for simplicity. A Sobolev space is commonly used when the underlying regression function is smooth, and by Proposition~12.31 of \citet{wainwright2019high},  we  consider a tensor product RKHS $\mathcal{H}=\bigotimes_{j=1}^d\mathcal{H}_j$, where $\mathcal{H}_j$ is an $l$-th order Sobolev space, $$W^{l,2}[0,1] = \{f:f,f^{(1)},\ldots,f^{(l-1)}\mbox{ are absolutely continuous}, f^{(l)}\in L^2([0,1])\},$$
    and $f^{(k)}$ is the $k$-th derivative of a function $f$ for $k=1,\ldots,l$. The corresponding reproducing kernel of $\mathcal{H}$ is $K(\bz_1,\bz_2) = \prod_{j=1}^dK_s(z_{1j},z_{2j})$, where $\bz_i= (z_{i1},\ldots,z_{id})^{\T}\in[0,1]^d$ for $i=1,2$, and $K_s(\cdot,\cdot)$ is the  reproducing kernel of $W^{l,2}[0,1]$ \citep[Section~1.2]{wahba1990spline}. See Section~12.2 of \citet{wainwright2019high} for discussion about other reproducing kernels. For $u\in\mathcal{H}$, we also assume  $\lVert u\rVert_{\mathcal{H}}^2<\infty$  for the sequential analysis.
    
To investigate the theoretical properties of the proposed method, we adopt the asymptotic framework of \citet{isaki1982survey} and consider a sequence of finite populations and samples. Besides, we make the following additional assumptions.
\begin{enumerate}
 \renewcommand{\labelenumi}{A\arabic{enumi}.}
 \setcounter{enumi}{3}
     \item The true regression function $m\in\mathcal{H}$ and $d/l< 2$.\label{ass:ratio}
    \item There exist positive constants $C_{\sigma,1}<C_{\sigma,2}$, $\delta$ and $C_{\delta}$ with respect to $N$, such that $C_{\sigma,1}\leq \sigma_i^2\leq C_{\sigma,2}$ and $E\{\lvert\epsilon_i\rvert^{2+\delta}\}<C_{\delta}$ for $i=1,\ldots,N$. Besides, the errors terms $\{\epsilon_i:i=1,\ldots, N\}$ are independent with the sampling indicators $\{\delta_{B,i}:i=1,\ldots, N\}$ for the probability sample $B$. \label{ass: epsilon}
    \item  The  rejective sampling design satisfies $N^{-1}\sum_{i=1}^N(\delta_{B,i}\pi_{B,i}^{-1}-1)y_i = O_p(n_B^{-1/2})$.\label{ass: sample B convergence result}
    \item There exist positive constants $C_{B,1}\leq C_{B,2} $ with respect to $n_B$ and $N$, such that $C_{B,1}\leq \pi_{B,i}Nn_B^{-1}\leq C_{B,2}$ for $i=1,\ldots,N$. Besides,   $n_BN^{-1}=o(1)$ and $N^{1/2}n_B^{-1}=o(1)$. \label{ass: prob sample inc prob}
    \item There exists a positive constant $M^\star$ such that $\lVert r^\star\rVert_\mathcal{H}\leq M^\star$.\label{ass: r bound}
\end{enumerate}


In Assumption~A\ref{ass:ratio}, we assume that the true regression function $m$ lies in $\mathcal{H}$, so it can be well approximated by a certain function in $\mathcal{H}$. Besides, the assumption $d/l<2$  regulates the complexity of the RKHS to guarantee theoretical properties of the proposed method;  see  Lemma~S6 of \citet{wong2018kernel} for details. The first part of Assumption~A\ref{ass: epsilon} is a common condition to show the limiting distribution of the proposed estimator. Since the responses of interest $\{y_1,\ldots,y_N\}$ are not available in the probability sample $B$, it is reasonable to postulate independence between the response of interest and the sampling indicators for the probability sample $B$ in the second part of Assumption~A\ref{ass: epsilon}. Assumption~A\ref{ass: sample B convergence result} guarantees the convergence rate of the estimator $N^{-1}\sum_{i=1}^N\delta_{B,i}\pi_{B,i}y_i$, and it is also a common assumption in survey sampling; see Section~1.3.2 of \citet{fuller2009} for details. 
Assumption~A\ref{ass: prob sample inc prob} is widely used to regulate $\pi_{B,i}$; see Theorem~1.3.5 of \citet{fuller2009} and condition C5 of \citet{chen2020doubly} for details. Rather than assuming that $n_B$ has asymptotically the same order as $N$ as in \citet{breslow2007}, \citet{han2021} and other references on empirical process for survey sampling, we make a more practically reasonable assumption  $n_B=o(N)$ for a probability sample  in Assumption~A\ref{ass: prob sample inc prob}. The technical condition $N^{1/2}n_B^{-1}=o(1)$ guarantees the convergence rate in Lemma~\ref{lemma: R S2} below; see Lemma~\ref{lemma: B part exponential inequality} in Section~\ref{supp: proof of lemma R S2} of the Supplementary Material for details. In Assumption~A\ref{ass: prob sample inc prob}, we implicitly assume that $n_B$ is non-stochastic, and we should use $o_p(1)$ instead if such an assumption fails. Even though we can show that $r^\star(\bx)$ is bounded by Assumption~A\ref{ass:piB}, Assumption~A\ref{ass: r bound} is a condition on its smoothness, and a similar condition is also implicitly assumed in (10) of \citet{nguyen2010}.

\begin{lem}\label{lemma: R S2}
Suppose that Assumptions~A\ref{ass: A ind}--A\ref{ass:ratio} and Assumption~A\ref{ass: prob sample inc prob} hold.  Then, there exists a positive constant $c$, such that for all $T\geq c$, 
\begin{equation*}
    P\left\{\sup_{u\in\widetilde{\mathcal{H}}_N}\frac{n_BS_N(\bgamma^{\star},u)}{\lVert u\rVert^{d/l}_{\mathcal{H}}}\ge T^2\right\} \leq c\exp\left(-\frac{16T^2}{c^2}\right),
\end{equation*}
where $\bgamma^{\star} = (r_1^\star,\ldots,r_N^\star)^{\T}$ and $\tilde{\mathcal{H}}_N =\{u\in\mathcal{H}:\lVert u\rVert_2=1\}$.
\end{lem}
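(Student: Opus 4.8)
The plan is to expand the square in $S_N(\bgamma^\star,u)$, split the linear form inside it into three pieces according to the source of randomness each one carries, and then bound the normalized supremum of each squared piece separately by a chaining argument over balls of the RKHS $\mathcal{H}$. Writing $\hat\omega_i^\star = 1+(Nn_A^{-1}-1)r_i^\star$ and recalling from (\ref{eq: wr}) that $\omega^\star(\bx_i)=1+(\pi_0/\pi_1)r_i^\star=\{\pi_A(\bx_i)\}^{-1}$, note that $\hat\omega_i^\star-\omega^\star(\bx_i)=(Nn_A^{-1}-\pi_1^{-1})r_i^\star$, so that
\[
N^{-1}\sum_{i=1}^N\delta_{A,i}\hat\omega_i^\star u(\bx_i) - N^{-1}\sum_{i=1}^N\delta_{B,i}\pi_{B,i}^{-1}u(\bx_i) = L_N^A(u)+L_N^B(u)+L_N^c(u),
\]
with $L_N^A(u)=N^{-1}\sum_i\{\delta_{A,i}\omega^\star(\bx_i)-1\}u(\bx_i)$ a sum of independent mean-zero bounded terms (by Assumptions A\ref{ass: A ind} and A\ref{ass:piB}), $L_N^B(u)=N^{-1}\sum_i\{1-\delta_{B,i}\pi_{B,i}^{-1}\}u(\bx_i)$ the Horvitz--Thompson error of the rejective design $B$, and $L_N^c(u)=(Nn_A^{-1}-\pi_1^{-1})\,N^{-1}\sum_i\delta_{A,i}r_i^\star u(\bx_i)$ a lower-order correction arising from replacing $\pi_0/\pi_1$ by $Nn_A^{-1}-1$. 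Since $S_N(\bgamma^\star,u)\le 3\{L_N^A(u)^2+L_N^B(u)^2+L_N^c(u)^2\}$, it suffices to control $\sup_{u\in\tilde{\mathcal{H}}_N} n_B L_N^\bullet(u)^2/\lVert u\rVert_{\mathcal{H}}^{d/l}$ for each of the three terms. Two reductions I would record at the start: on $\tilde{\mathcal{H}}_N$ the reproducing property gives $1=\lVert u\rVert_2\le\lVert u\rVert_\infty\le\sup_\bx\sqrt{K(\bx,\bx)}\,\lVert u\rVert_{\mathcal{H}}$, so $\lVert u\rVert_{\mathcal{H}}$ is bounded below by a positive constant and the peeling over $\lVert u\rVert_{\mathcal{H}}$ can start at a fixed dyadic level; and each $L_N^\bullet$ is linear in $u$, so $\sup_{\lVert u\rVert_{\mathcal{H}}\le R,\lVert u\rVert_2=1}|L_N^\bullet(u)|\le\sup_{\lVert u\rVert_{\mathcal{H}}\le R,\lVert u\rVert_2\le1}|L_N^\bullet(u)|$.

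The dominant term is $L_N^B$. For a fixed $u$ with $\lVert u\rVert_\infty\le1$, Lemma~\ref{lemma: B part exponential inequality} (the exponential inequality for rejective sampling, which exploits the negative association of $\{\delta_{B,i}\}$ together with Assumption A\ref{ass: prob sample inc prob}) yields a sub-Gaussian deviation bound for $L_N^B(u)$ at scale $n_B^{-1/2}$, with variance proxy of order $n_B^{-1}\lVert u\rVert_2^2$. I would then run Dudley's chaining over the localized ball $\{\lVert u\rVert_{\mathcal{H}}\le R\}\cap\{\lVert u\rVert_2\le1\}$, using the metric-entropy estimate $\log N(\varepsilon,\{u:\lVert u\rVert_{\mathcal{H}}\le R\},\lVert\cdot\rVert_\infty)\lesssim(R/\varepsilon)^{d/l}$ for a tensor-product Sobolev RKHS, together with $d/l<2$ from Assumption A\ref{ass:ratio}, which makes $\int_0\varepsilon^{-d/(2l)}\,\intd\varepsilon<\infty$; this gives $\sup_{\lVert u\rVert_{\mathcal{H}}\le R,\lVert u\rVert_2\le1}|L_N^B(u)|=O_p(n_B^{-1/2}R^{d/(2l)})$ with a sub-Gaussian tail, hence $\sup$ of $n_B L_N^B(u)^2/\lVert u\rVert_{\mathcal{H}}^{d/l}$ over a shell $\lVert u\rVert_{\mathcal{H}}\in[R,2R]$ is $O_p(1)$ uniformly in $R$. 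Summing the resulting sub-Gaussian tails over the dyadic shells $R\in\{c,2c,4c,\dots\}$ (the geometric decay of the per-shell exponents absorbs the union bound) produces a bound of the stated form $c\exp(-16T^2/c^2)$ for $T\ge c$. This mirrors Lemma~S6 of \citet{wong2018kernel}, except that their i.i.d.\ empirical-process input is replaced by the concentration inequality for a rejective design.

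For $L_N^A$ I would rerun the same chaining, now feeding in the elementary Hoeffding/Bernstein bound for a sum of independent bounded terms (Assumptions A\ref{ass: A ind}, A\ref{ass:piB}), obtaining $\sup_{u\in\tilde{\mathcal{H}}_N}L_N^A(u)^2/\lVert u\rVert_{\mathcal{H}}^{d/l}=O_p(N^{-1})$ with a sub-Gaussian tail; since $n_B=o(N)$ by Assumption A\ref{ass: prob sample inc prob}, multiplying by $n_B$ renders this term $o_p(1)$ with a tail dominated by the $L_N^B$ one. For the correction $L_N^c$, I would first note $|Nn_A^{-1}-\pi_1^{-1}|=O_p(N^{-1/2})$ (Hoeffding for $n_A=\sum_i\delta_{A,i}$, with $\pi_1$ bounded away from $0$ and $1$ by A\ref{ass:piB}), then split $N^{-1}\sum_i\delta_{A,i}r_i^\star u(\bx_i)$ into its mean-zero part (handled exactly as $L_N^A$, so $O_p(N^{-1/2})$ after normalization) plus $N^{-1}\sum_i\pi_A(\bx_i)r_i^\star u(\bx_i)=\pi_1\pi_0^{-1}N^{-1}\sum_i\{1-\pi_A(\bx_i)\}u(\bx_i)$, which is $O(1)$ uniformly on $\tilde{\mathcal{H}}_N$ because it is a bounded-coefficient average of $u$ controlled by $\lVert u\rVert_2=1$ via the standard ($d/l<2$) equivalence of the empirical and population $L^2$ norms on the RKHS ball; combining with the fact that $\lVert u\rVert_{\mathcal{H}}^{d/l}$ is bounded below on $\tilde{\mathcal{H}}_N$ gives $\sup n_B L_N^c(u)^2/\lVert u\rVert_{\mathcal{H}}^{d/l}=o_p(1)$ with a negligible tail. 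Assumption A\ref{ass: r bound} is used here and in the $L_N^A$ step to guarantee that $r^\star$, equivalently $\omega^\star$, is a genuine norm-bounded element of $\mathcal{H}$, so the calibrated functions have controlled complexity. Combining the three tail bounds and enlarging $c$ yields the lemma.

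The step I expect to be the main obstacle is the uniform control of $L_N^B$: one needs a sub-Gaussian concentration inequality for the Horvitz--Thompson error of a rejective design that is valid in the regime $n_B=o(N)$ — unlike most of the survey-sampling empirical-process literature, which assumes $n_B\asymp N$ — and that is sharp enough in its dependence on $\lVert u\rVert_\infty$ and on the variance proxy $n_B^{-1}\lVert u\rVert_2^2$ to survive Dudley's chaining over the Sobolev ball and the subsequent peeling in $\lVert u\rVert_{\mathcal{H}}$. Establishing precisely this inequality is the content of the auxiliary Lemma~\ref{lemma: B part exponential inequality}; conditional on it, the remainder is comparatively routine chaining and bookkeeping.
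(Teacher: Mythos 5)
Your proposal is correct and follows essentially the same route as the paper: decompose $S_N(\bgamma^{\star},u)$ into a non-probability-sample term and a Horvitz--Thompson term for $B$, control the former by chaining with independent bounded summands (the paper simply invokes Lemma~S1 of \citet{wong2018kernel}, whose weights already carry the $N/n_A$ factor, so your separate correction term $L_N^c$ is absorbed there) and the latter via the negative-association Hoeffding inequality, entropy bound and peeling for rejective sampling, then add the two exponential tails. The only cosmetic difference is your three-way split with the $Nn_A^{-1}-\pi_1^{-1}$ correction, which the paper avoids by keeping the random $N/n_A$ weights inside the $A$-part.
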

The proof of Lemma~\ref{lemma: R S2} is relegated to Section~\ref{supp: proof of lemma R S2} of the Supplementary Material. Lemma~\ref{lemma: R S2} is a counterpart of Lemma~S1 of \citet{wong2018kernel}, and it establishes the convergence rate of $S_N(\bgamma^{\star},u)$ when the true density ratios $\{r_i^\star:i=1,\ldots,N\}$ are available. In addition, it serves as a building block to investigate the consistency and the limiting distribution of the proposed estimator. Rather than assuming the availability of $\{\bx_i:i=1,\ldots,N\}$ in \citet{wong2018kernel}, we consider the case when only a probability sample $\{\bx_i:i\in B\}$ is available. Besides, under rejective sampling, the sampling indicators $\{\delta_{B,i}:i=1,\ldots,N\}$ are negatively associated, so we develop a different proof to incorporate the design features from the probability sample $B$.

\begin{thm}\label{theorem: convergence rate of proposed estimator}
Suppose that Assumptions~A\ref{ass: A ind}--A\ref{ass: r bound}, $\lambda_1\asymp n_B^{-1}$ and $\lambda_2\asymp n_B^{-1}$  hold. Then, we have 
\begin{equation*}
    N^{-1}\sum_{i=1}^N(\delta_{A,i}\hat{w}_{i}-1)y_i = O_p(n_B^{-1/2}).
\end{equation*}
\end{thm}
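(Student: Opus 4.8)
The plan is to decompose the estimation error into a ``bias'' piece coming from how well the calibrated weights reproduce the regression function $m$, and a ``noise'' piece coming from the errors $\epsilon_i$. Write
\[
N^{-1}\sum_{i=1}^N(\delta_{A,i}\hat w_i-1)y_i
= N^{-1}\sum_{i=1}^N(\delta_{A,i}\hat w_i-1)m(\bx_i)
+ N^{-1}\sum_{i=1}^N(\delta_{A,i}\hat w_i-1)\epsilon_i .
\]
Since $m\in\mathcal{H}$ by Assumption~A\ref{ass:ratio}, the first term is controlled by $S_N(\hat\bgamma,u)$ evaluated at $u=m$: indeed $N^{-1}\sum_i(\delta_{A,i}\hat w_i-1)m(\bx_i)$ equals the difference $N^{-1}\sum_i\delta_{A,i}\hat w_i m(\bx_i)-N^{-1}\sum_i m(\bx_i)$, which up to the design-unbiasedness error of the probability sample $B$ (an $O_p(n_B^{-1/2})$ term by Assumption~A\ref{ass: sample B convergence result} applied with $y_i$ replaced by $m(\bx_i)$) is $\{S_N(\hat\bgamma,m)\}^{1/2}$ in absolute value. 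Hence I need $S_N(\hat\bgamma,m)=O_p(n_B^{-1})$, equivalently $S_N(\hat\bgamma,u)/\lVert u\rVert_2^2=O_p(n_B^{-1}\lVert u\rVert_{\mathcal H}^{d/l-2}\cdots)$ uniformly — this is where the variational characterisation of $\hat\bgamma$ enters.

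The core of the argument is a standard ``basic inequality'' for the M-estimator $\hat\bgamma$ defined by \eqref{eq: object2}. Since $\hat\bgamma$ minimises the objective and $\bgamma^\star$ is feasible (its entries lie in $[\xi_1,\xi_2]$ by \eqref{eq: range r} once $\xi_1,\xi_2$ are chosen as prescribed), we get
\[
\sup_{u\in\mathcal H}\Bigl\{\frac{S_N(\hat\bgamma,u)}{\lVert u\rVert_2^2}-\lambda_1\frac{\lVert u\rVert_{\mathcal H}^2}{\lVert u\rVert_2^2}\Bigr\}-\lambda_2 Q_A(\hat\bgamma)
\le \sup_{u\in\mathcal H}\Bigl\{\frac{S_N(\bgamma^\star,u)}{\lVert u\rVert_2^2}-\lambda_1\frac{\lVert u\rVert_{\mathcal H}^2}{\lVert u\rVert_2^2}\Bigr\}-\lambda_2 Q_A(\bgamma^\star).
\]
Lemma~\ref{lemma: R S2} bounds the right-hand supremum: with $\lambda_1\asymp n_B^{-1}$, Young's inequality on $T^2\lVert u\rVert_{\mathcal H}^{d/l}/n_B$ versus $\lambda_1\lVert u\rVert_{\mathcal H}^2$ (using $d/l<2$) shows $\sup_{u}\{S_N(\bgamma^\star,u)/\lVert u\rVert_2^2-\lambda_1\lVert u\rVert_{\mathcal H}^2/\lVert u\rVert_2^2\}=O_p(n_B^{-1})$. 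For the penalty terms, since $Q$ in \eqref{eq: popu level} is (strictly) convex with maximiser $r^\star$ and $Q_A$ is its unbiased sample analogue, $\lambda_2\{Q_A(\bgamma^\star)-Q_A(\hat\bgamma)\}$ contributes a nonnegative curvature term in $\hat\bgamma-\bgamma^\star$ plus a stochastic remainder of order $\lambda_2 O_p(n_A^{-1/2})=O_p(n_B^{-1/2}n_A^{-1/2})$, which is negligible by Assumption~A\ref{ass: prob sample inc prob}. Rearranging yields $\sup_u\{S_N(\hat\bgamma,u)/\lVert u\rVert_2^2-\lambda_1\lVert u\rVert_{\mathcal H}^2/\lVert u\rVert_2^2\}=O_p(n_B^{-1})$, and evaluating at $u=m$ (with $\lVert m\rVert_{\mathcal H}\le M^\star$-type bound finite by assumption, so $\lambda_1\lVert m\rVert_{\mathcal H}^2=O(n_B^{-1})$) gives $S_N(\hat\bgamma,m)=O_p(n_B^{-1})$, hence the bias term is $O_p(n_B^{-1/2})$.

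For the noise term $N^{-1}\sum_i(\delta_{A,i}\hat w_i-1)\epsilon_i$, I would split it as $N^{-1}\sum_i\delta_{A,i}\hat w_i\epsilon_i-N^{-1}\sum_i\epsilon_i$; the second part is $O_p(N^{-1/2})=o_p(n_B^{-1/2})$ by Assumption~A\ref{ass: epsilon}. The first part I write as $N^{-1}\sum_i\delta_{A,i}\omega_i^\star\epsilon_i+N^{-1}\sum_i\delta_{A,i}(\hat w_i-\omega_i^\star)\epsilon_i$. The leading piece is a sum of independent mean-zero terms (Assumptions~A\ref{ass: A ind}, A\ref{ass: MAR}) with bounded weights (Assumption~A\ref{ass:piB}) and $2+\delta$ moments, so it is $O_p(N^{-1/2})$. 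The cross term I control by Cauchy--Schwarz together with a consistency bound $N^{-1}\sum_i\delta_{A,i}(\hat w_i-\omega_i^\star)^2=o_p(1)$, which follows from the density-ratio curvature of $Q_A$ at $\bgamma^\star$ combined with the $S_N$ bound (the one-to-one map \eqref{eq: wr} between $\omega$ and $r$ transfers this to the $r$-scale); here the uniform $L^2$-type argument of \citet{wong2018kernel} adapted via Lemma~\ref{lemma: R S2} gives the needed rate, and conditioning on the samples makes $\epsilon_i$ independent of the weights up to the dependence through $\bx_i$, which is handled by A\ref{ass: MAR}. Collecting the pieces gives the claimed $O_p(n_B^{-1/2})$.

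The main obstacle I expect is making the passage from ``the supremum over $\mathcal H$ of the penalised $S_N$ is small'' to ``$S_N(\hat\bgamma,m)$ is small'' fully rigorous: the supremum is of a ratio with $\lVert u\rVert_2^2$ in the denominator and a penalty $\lambda_1\lVert u\rVert_{\mathcal H}^2$ subtracted, so one cannot simply plug in $u=m$ without first arguing that the penalised supremum dominates $S_N(\hat\bgamma,m)/\lVert m\rVert_2^2$ up to an additive $\lambda_1\lVert m\rVert_{\mathcal H}^2/\lVert m\rVert_2^2$ term, and then showing $\lVert m\rVert_2^2$ is bounded away from $0$ and $\lVert m\rVert_{\mathcal H}$ finite. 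A secondary difficulty is handling the replacement of the unknown $\pi_0/\pi_1$ by $Nn_A^{-1}-1$ inside $S_N$ and $\hat w_i$ uniformly; this requires a law-of-large-numbers rate $n_A/N-\pi_1=O_p(N^{-1/2})$ and care that the induced perturbation in the weights is $O_p(N^{-1/2})$ uniformly, which feeds back into both the bias and noise decompositions above.
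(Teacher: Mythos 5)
Your decomposition and your treatment of the bias piece are essentially the paper's route: the paper also reduces $N^{-1}\sum_i(\delta_{A,i}\hat w_i-\delta_{B,i}\pi_{B,i}^{-1})m(\bx_i)$ to $\{S(\hat\bgamma,m)\}^{1/2}$ and proves $S(\hat\bgamma,u)=O_p(n_B^{-1})\lVert u\rVert_2^2$ via exactly the basic inequality you write down, combined with Lemma~\ref{lemma: R S2}; the penalty term there is handled not by any curvature of $Q_A$ but simply by noting $Q_A(\hat\bgamma)=O_p(1)$ because the box constraint forces $\hat r_i\le\xi_2$, so $\lambda_2 Q_A(\hat\bgamma)=O_p(n_B^{-1})$. (Your worry about $\lVert m\rVert_2$ being bounded away from zero is not needed; what is needed is $\lVert m\rVert_2^2=O_p(1)$ and $\lVert m\rVert_{\mathcal H}<\infty$, which the paper supplies.)

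The genuine gap is in your noise term. You bound $N^{-1}\sum_i\delta_{A,i}(\hat w_i-\omega_i^\star)\epsilon_i$ by Cauchy--Schwarz together with a claimed consistency bound $N^{-1}\sum_i\delta_{A,i}(\hat w_i-\omega_i^\star)^2=o_p(1)$. This fails on two counts. First, even granting that bound, Cauchy--Schwarz only yields $o_p(1)$ for the cross term, not the $O_p(n_B^{-1/2})$ rate the theorem asserts; to get the rate this way you would need $N^{-1}\sum_i\delta_{A,i}(\hat w_i-\omega_i^\star)^2=O_p(n_B^{-1})$, which is nowhere available. Second, the consistency claim itself does not follow from the ingredients you cite: the uniform calibration bound controls $\sum_i\delta_{A,i}\hat w_i u(\bx_i)$ for $u\in\mathcal H$, not the weights pointwise, and the KL penalty enters the objective with coefficient $\lambda_2\asymp n_B^{-1}\to 0$, so its ``curvature'' cannot force $\hat\bgamma$ toward $\bgamma^\star$ at any rate. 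The paper avoids weight consistency altogether: conditioning on $\mathcal{A}_N=\{(\delta_{A,i},\delta_{B,i},\bx_i)\}$, the errors $\epsilon_i$ are independent of the weights and indicators (Assumptions~A\ref{ass: MAR} and A\ref{ass: epsilon}), so
\begin{equation*}
\var\Bigl\{N^{-1}\sum_{i=1}^N(\delta_{A,i}\hat w_i-\delta_{B,i}\pi_{B,i}^{-1})\epsilon_i\Bigr\}
\le 2C_{\sigma,2}\,E\Bigl(N^{-2}\sum_{i=1}^N\delta_{B,i}\pi_{B,i}^{-2}\Bigr)+2C_{\sigma,2}\,E\Bigl(N^{-2}\sum_{i=1}^N\delta_{A,i}\hat w_i^2\Bigr),
\end{equation*}
and the boundedness $\hat w_i\le 1+C\xi_2$ (from the box constraint and $N/n_A=O_p(1)$ under Assumption~A\ref{ass:piB}) together with Assumption~A\ref{ass: prob sample inc prob} makes both terms $O(n_B^{-1})$, giving $O_p(n_B^{-1/2})$ directly. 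You should replace your cross-term argument with this conditional-variance/boundedness argument. A minor further point: Assumption~A\ref{ass: sample B convergence result} is stated for $y_i$, so ``applying it with $y_i$ replaced by $m(\bx_i)$'' is not literally licensed; the paper sidesteps this by keeping $y_i$ in the $B$-sample Horvitz--Thompson term of its decomposition and splitting $y_i=m(\bx_i)+\epsilon_i$ only inside the $A$-versus-$B$ difference.
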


Theorem~\ref{theorem: convergence rate of proposed estimator} establishes the consistency of the estimator in (\ref{eq: proposed estimator kl}), and its proof is in Section~\ref{supp: proof theorem convergence rate} of the Supplementary Material. By Theorem~\ref{theorem: convergence rate of proposed estimator}, $\hat{Y}_N$ in (\ref{eq: proposed estimator kl}) achieves  a parametric convergence rate $O_p(n_B^{-1/2})$, even though we do not assume  any parametric models for $m(\bx)$ in (\ref{eq: popu model}) and the selection mechanism for the non-probability sample $A$. 
{The proposed estimator in  (\ref{eq: proposed estimator kl}) is consistent by Theorem~\ref{theorem: convergence rate of proposed estimator}, but it is hard to derive an unbiased variance estimator for it. Instead, we propose to use the bootstrap variance estimator discussed in \cite{kim2019hypothesis}; see Section~\ref{supp: bootstrap variance estimator} of the Supplementary Material for details.  } 

\begin{thm}\label{theorem: CLT}
Suppose that Assumptions~A\ref{ass: A ind}--A\ref{ass: r bound} hold. Let $h=\hat{m} - m\in\mathcal{H}$ such that $\lVert h\rVert_\mathcal{H}=O_p(1)$, $\lVert h\rVert_2=o_p(1)$, $\lambda_2\lVert h\rVert_2^2=o_p(n_B^{-1})$, $\lambda_1=o(n_B^{-1})$ and $\lambda_1^{-1}\lVert h\rVert_2^{2(2l-d)/d}=o_p(n_B)$, where $\hat{m}$ is a kernel estimator of $m$. Then, we have 
\begin{equation*}
   B_N^{-1}\left\{\sum_{i=1}^N(\delta_{A,i}\hat{w}_{i}-\delta_{B,i}\pi_{B,i}^{-1})y_i - \sum_{i=1}^N(\delta_{A,i}\hat{w}_i -\delta_{B,i}\pi_{B,i}^{-1})\hat{m}(\bx_i)\right\} \to N(0,1)
\end{equation*}
in distribution, where $B_N^2 = \sum_{i=1}^N(\delta_{A,i}\hat{w}_i-\delta_{B,i}\pi_{B,i}^{-1})^2\sigma_i^2$. In addition, $B_N\asymp N^2 n_B^{-1/2}$  in probability.
\end{thm}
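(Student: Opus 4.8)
The plan is to establish a central limit theorem for the centered and scaled error term by decomposing it into a "main" linear term whose asymptotic normality follows from a Lyapunov-type CLT, plus remainder terms that are negligible thanks to the calibration property of $\hat{w}_i$ and the rate conditions on $h=\hat{m}-m$. First I would write
\begin{equation*}
\sum_{i=1}^N(\delta_{A,i}\hat{w}_i-\delta_{B,i}\pi_{B,i}^{-1})\{y_i-\hat{m}(\bx_i)\}
= \sum_{i=1}^N(\delta_{A,i}\hat{w}_i-\delta_{B,i}\pi_{B,i}^{-1})\epsilon_i
+ \sum_{i=1}^N(\delta_{A,i}\hat{w}_i-\delta_{B,i}\pi_{B,i}^{-1})\{m(\bx_i)-\hat{m}(\bx_i)\},
\end{equation*}
using $y_i=m(\bx_i)+\epsilon_i$ from (\ref{eq: popu model}). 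The second sum equals $-\sum_{i=1}^N(\delta_{A,i}\hat{w}_i-\delta_{B,i}\pi_{B,i}^{-1})h(\bx_i)$ with $h\in\mathcal{H}$. Since $\lVert u\rVert_2=1$-normalized version of $h$ lies in $\widetilde{\mathcal{H}}_N$, the calibration term $S(\hat{\bgamma},u)$ controls exactly this inner product: $|N^{-1}\sum_{i=1}^N(\delta_{A,i}\hat{w}_i-\delta_{B,i}\pi_{B,i}^{-1})h(\bx_i)| = \lVert h\rVert_2\, S(\hat{\bgamma},h/\lVert h\rVert_2)^{1/2}$, and I would bound $S(\hat{\bgamma},u)$ uniformly over $\widetilde{\mathcal{H}}_N$ using the optimality of $\hat{\bgamma}$ in (\ref{eq: object2}) together with Lemma~\ref{lemma: R S2} applied at $\bgamma^\star$. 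The rate conditions $\lambda_1=o(n_B^{-1})$, $\lambda_2\lVert h\rVert_2^2=o_p(n_B^{-1})$ and $\lambda_1^{-1}\lVert h\rVert_2^{2(2l-d)/d}=o_p(n_B)$ are precisely what is needed to show this second sum is $o_p(N^2 n_B^{-1/2})$, i.e. negligible relative to $B_N$.

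The core of the argument is the first sum, $T_N:=\sum_{i=1}^N(\delta_{A,i}\hat{w}_i-\delta_{B,i}\pi_{B,i}^{-1})\epsilon_i$. Here I would condition on the auxiliary variables and the sampling indicators $\{\delta_{A,i},\delta_{B,i}\}$ (legitimate by Assumptions~A\ref{ass: MAR} and A\ref{ass: epsilon}, which make $\epsilon_i$ independent of the selection and of $\{\delta_{B,i}\}$), so that $T_N$ is a weighted sum of independent mean-zero errors with conditional variance $B_N^2=\sum_{i=1}^N(\delta_{A,i}\hat{w}_i-\delta_{B,i}\pi_{B,i}^{-1})^2\sigma_i^2$. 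I would then verify the Lyapunov condition of order $2+\delta$: using Assumption~A\ref{ass: epsilon} ($E|\epsilon_i|^{2+\delta}<C_\delta$, $\sigma_i^2$ bounded below), it suffices to show $\sum_i |\delta_{A,i}\hat{w}_i-\delta_{B,i}\pi_{B,i}^{-1}|^{2+\delta} = o_p(B_N^{2+\delta})$. This reduces to controlling the weights: $\hat{w}_i = 1+(Nn_A^{-1}-1)\hat{r}_i$ with $\hat{r}_i\in[\xi_1,\xi_2]$ — more usefully, one argues via the consistency established in Theorem~\ref{theorem: convergence rate of proposed estimator} and the bounds (\ref{eq: range r}) that $\hat{w}_i$ is $O_p(Nn_A^{-1})$ uniformly, while $\pi_{B,i}^{-1}\asymp Nn_B^{-1}$ by Assumption~A\ref{ass: prob sample inc prob}. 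Since $n_A\asymp N$ (Assumption~A\ref{ass:piB}) the $A$-weights are $O_p(1)$ and the $B$-weights dominate, giving the order claim $B_N\asymp N^2 n_B^{-1/2}$ (there are $n_B$ terms of size $(N/n_B)^2$, so $B_N^2 \asymp n_B (N/n_B)^2 = N^2/n_B$), and the Lyapunov ratio is $O_p(n_B\cdot (N/n_B)^{2+\delta}/(N^2/n_B)^{(2+\delta)/2}) = O_p(n_B^{-\delta/2})\to 0$.

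After the Lyapunov CLT gives $T_N/B_N\to N(0,1)$ conditionally (hence unconditionally, since the limit does not depend on the conditioning), I would combine this with the negligibility of the $h$-term and with $B_N\asymp N^2 n_B^{-1/2}$ to conclude that
\begin{equation*}
B_N^{-1}\left\{\sum_{i=1}^N(\delta_{A,i}\hat{w}_i-\delta_{B,i}\pi_{B,i}^{-1})y_i - \sum_{i=1}^N(\delta_{A,i}\hat{w}_i-\delta_{B,i}\pi_{B,i}^{-1})\hat{m}(\bx_i)\right\}
= B_N^{-1}T_N + o_p(1)\to N(0,1).
\end{equation*}
The main obstacle I anticipate is the rigorous control of the $h$-term: one must show that the empirical calibration residual $\sup_{u\in\widetilde{\mathcal{H}}_N} S(\hat{\bgamma},u)$ inherits a sharp rate from the optimization (\ref{eq: object2}), which requires feeding the plug-in bound $\lVert\hat{m}-m\rVert_2=\lVert h\rVert_2=o_p(1)$ through the supremum while tracking how the $\lambda_1\lVert u\rVert_{\mathcal{H}}^2/\lVert u\rVert_2^2$ penalty interacts with the RKHS complexity $\lVert u\rVert_{\mathcal{H}}^{d/l}$ appearing in Lemma~\ref{lemma: R S2}; this is where the precise rate conditions on $\lambda_1,\lambda_2$ and $\lVert h\rVert_2$ must be balanced, and it is delicate because the interpolation inequality relating $\lVert u\rVert_\infty$, $\lVert u\rVert_2$ and $\lVert u\rVert_{\mathcal{H}}$ (valid since $d/l<2$ by Assumption~A\ref{ass:ratio}) has to be applied uniformly over the data-dependent maximizer.
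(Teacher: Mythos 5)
Your proposal is correct and takes essentially the same route as the paper: the same decomposition of $y_i-\hat{m}(\bx_i)$ into $\epsilon_i$ plus $m(\bx_i)-\hat{m}(\bx_i)$, a conditional Lyapunov/Lindeberg CLT for $\sum_i(\delta_{A,i}\hat{w}_i-\delta_{B,i}\pi_{B,i}^{-1})\epsilon_i$ given the covariates and sampling indicators using the $2+\delta$ moment bound and the boundedness of $\hat{w}_i$ and $\pi_{B,i}^{-1}\asymp N/n_B$, negligibility of the $h$-term via the calibration functional $S(\hat{\bgamma},\cdot)$, the optimality of $\hat{\bgamma}$, Lemma~\ref{lemma: R S2} and the stated rate conditions, and the same dominance argument (the $B$-sample weights) for the order of $B_N$. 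One small point: your own computation $B_N^2\asymp N^2 n_B^{-1}$ (i.e.\ $B_N\asymp N n_B^{-1/2}$) is the correct one and matches the paper's supporting lemma, whereas the ``$B_N\asymp N^2 n_B^{-1/2}$'' you echo from the theorem statement is a typo in the paper.
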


Theorem~\ref{theorem: CLT} establishes the limiting distribution for the proposed method, and its proof is relegated to Section~\ref{supp: proof theorem CLT} of the Supplementary Material.  We have validated the convergence rate of $\hat{Y}_N$ in Theorem~\ref{theorem: convergence rate of proposed estimator}, but it is hard to get its limiting distribution. Instead, we propose the following ``calibrated'' estimator, 
\begin{equation}
    \hat{Y}_{prop} = N^{-1}\sum_{i=1}^N\delta_{B,i}\pi_{B,i}^{-1}\hat{m}(\bx_i) + N^{-1}\sum_{i=1}^N\delta_{A,i}\hat{w}_i\{y_i - \hat{m}(\bx_i)\}. \label{eq: estimator}
\end{equation}
By Theorem~\ref{theorem: CLT} and the fact that $N^{-1}\sum_{i=1}^N\delta_{B,i}\pi_{B,i}^{-1}y_i$ is design-unbiased for $\bar{Y}_N$, we conclude that  $\hat{Y}_{prop}$ is asymptotically unbiased. The proposed estimator $\hat{Y}_{prop}$ is similar to a doubly robust estimator \citep{chen2020doubly}, but it is more attractive since that we do not make any model assumption for the regression model $m(\bx)$ in (\ref{eq: popu model}) and the response model $\pi_A(\bx)$.  
The corresponding variance estimator of $\hat{Y}_{prop}$ is established in the following corollary.

\begin{cor}\label{cor: variance estimator}
Suppose that the assumptions in Theorem~\ref{theorem: CLT} hold. Then, a plug-in variance estimator of $ \hat{Y}_{prop}$ in (\ref{eq: estimator}) is 
\begin{equation*}
    \hat{V}_{prop} = \hat{V}\left\{N^{-1}\sum_{i=1}^N\delta_{B,i}\pi_{B,i}^{-1}\hat{m}(\bx_i) \right\} + N^{-2}\sum_{i=1}^N\delta_{A,i}\hat{w}_i^2\{y_i-\hat{m}(\bx_i)\}^2,
\end{equation*}
where $\hat{V}\left\{N^{-1}\sum_{i=1}^N\delta_{B,i}\pi_{B,i}^{-1}z_i \right\}$ is a design-based variance estimator of $N^{-1}\sum_{i=1}^N\delta_{B,i}\pi_{B,i}^{-1}z_i$ for a fixed sequence $\{z_1,\ldots,z_N\}$. 
\end{cor}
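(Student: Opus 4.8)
The plan is to decompose $\hat{Y}_{prop} - \bar{Y}_N$ into the two pieces suggested by its construction in (\ref{eq: estimator}) and to verify that the proposed variance estimator captures the variability of each, plus show that the cross term is asymptotically negligible. Write $\hat{Y}_{prop} - \bar{Y}_N = T_1 + T_2$, where $T_1 = N^{-1}\sum_{i=1}^N(\delta_{B,i}\pi_{B,i}^{-1}-1)\hat{m}(\bx_i)$ is the design-based error of the projection estimator built on the reference probability sample $B$, and $T_2 = N^{-1}\sum_{i=1}^N(\delta_{A,i}\hat{w}_i-\delta_{B,i}\pi_{B,i}^{-1})\{y_i-\hat{m}(\bx_i)\} + N^{-1}\sum_{i=1}^N(\delta_{B,i}\pi_{B,i}^{-1}-1)\{y_i-\hat{m}(\bx_i)\} + \cdots$; more cleanly, regroup exactly as in the CLT statement so that $N^{-1}\sum_{i=1}^N(\delta_{A,i}\hat{w}_i-\delta_{B,i}\pi_{B,i}^{-1})\{y_i-\hat{m}(\bx_i)\}$ is the term handled by Theorem~\ref{theorem: CLT}, whose (conditional) variance is consistently estimated by $N^{-2}\sum_{i=1}^N\delta_{A,i}\hat{w}_i^2\{y_i-\hat{m}(\bx_i)\}^2$ up to the lower-order $\delta_{B,i}\pi_{B,i}^{-1}$ contribution, which is itself $o_p$ of the leading order by Assumption~A\ref{ass: prob sample inc prob} since $n_B=o(N)$. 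The first step, then, is to argue that conditionally on the finite population and on the non-probability sample $A$, the term $T_1$ is a Horvitz--Thompson-type error for the fixed sequence $\{\hat{m}(\bx_i)\}$, so its design variance is estimated by $\hat{V}\{N^{-1}\sum_{i=1}^N\delta_{B,i}\pi_{B,i}^{-1}\hat{m}(\bx_i)\}$ by definition of a design-based variance estimator under the rejective design.

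The second step is to establish that $T_1$ and the $A$-driven term are asymptotically uncorrelated, so that the variance of the sum is the sum of the variances. This follows because the randomness in $T_1$ comes only from $\{\delta_{B,i}\}$ acting on the $\bx$-measurable quantities $\hat{m}(\bx_i)$, whereas the $A$-term, after conditioning on $\{\bx_i\}$ and $\{\delta_{B,i}\}$, has conditional mean zero and conditional variance $N^{-2}\sum_{i=1}^N(\delta_{A,i}\hat{w}_i-\delta_{B,i}\pi_{B,i}^{-1})^2\sigma_i^2 = N^{-2}B_N^2$ by Assumption~A\ref{ass: epsilon} (independence of $\epsilon_i$ from both sets of sampling indicators) and Assumption~A\ref{ass: MAR} (non-informativeness of $A$); this is exactly the normalizing constant in Theorem~\ref{theorem: CLT}. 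Hence $\Var(\hat{Y}_{prop}) = \Var(T_1) + N^{-2}B_N^2 + o(\cdot)$, and since $B_N\asymp N^2 n_B^{-1/2}$ while $\Var(T_1)=O(n_B^{-1})$ by Assumption~A\ref{ass: sample B convergence result} applied to $z_i=\hat{m}(\bx_i)$, both pieces are of the same order $n_B^{-1}$ and neither is negligible — this is why both terms appear in $\hat{V}_{prop}$.

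The third step is the plug-in consistency: replace $B_N^2$ by $\sum_{i=1}^N\delta_{A,i}\hat{w}_i^2\{y_i-\hat{m}(\bx_i)\}^2$ and argue the ratio tends to $1$ in probability. Here one uses $y_i-\hat{m}(\bx_i) = \epsilon_i + \{m(\bx_i)-\hat{m}(\bx_i)\} = \epsilon_i - h(\bx_i)$, expands the square, and invokes $\lVert h\rVert_2 = o_p(1)$ together with the moment bound $E\{|\epsilon_i|^{2+\delta}\}<C_\delta$ from Assumption~A\ref{ass: epsilon} to show $N^{-2}\sum_{i\in A}\hat{w}_i^2\{y_i-\hat{m}(\bx_i)\}^2 = N^{-2}\sum_{i\in A}\hat{w}_i^2\sigma_i^2 + o_p(n_B^{-1})$, using boundedness of $\hat{w}_i$ via $\xi_1\le\hat r_i\le\xi_2$ and (\ref{eq: range r}); then a further approximation $\delta_{A,i}\hat{w}_i \approx \omega_i^\star\pi_A(\bx_i)$-type argument, or simply the identity $\delta_{A,i}\hat w_i^2 = (\delta_{A,i}\hat w_i - \delta_{B,i}\pi_{B,i}^{-1})^2 + 2\delta_{A,i}\hat w_i\delta_{B,i}\pi_{B,i}^{-1} - \delta_{B,i}\pi_{B,i}^{-2}$ combined with $n_B=o(N)$ to discard the $\delta_{B,i}$ cross terms, recovers $B_N^2$ up to lower order. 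The main obstacle I anticipate is precisely controlling the error $m-\hat{m}=h$ inside the squared residuals uniformly: one must check that the rate conditions imposed on $\lVert h\rVert_2$ and $\lVert h\rVert_\mathcal{H}$ in Theorem~\ref{theorem: CLT} are strong enough that $N^{-2}\sum_{i\in A}\hat{w}_i^2 h(\bx_i)^2$ and the cross term $N^{-2}\sum_{i\in A}\hat{w}_i^2\epsilon_i h(\bx_i)$ are both $o_p(n_B^{-1})$, which requires relating the empirical $L^2$ norm on $A$ to $\lVert h\rVert_2$ (a uniform-law-of-large-numbers argument over the RKHS ball, analogous to Lemma~\ref{lemma: R S2}) and then the Cauchy--Schwarz bound $|N^{-2}\sum_{i\in A}\hat w_i^2\epsilon_i h(\bx_i)| \le (N^{-2}\sum \hat w_i^2\epsilon_i^2)^{1/2}(N^{-2}\sum \hat w_i^2 h(\bx_i)^2)^{1/2}$; verifying the first factor is $O_p(n_B^{-1})$ and the second is $o_p(n_B^{-1})$ closes the argument. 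Assembling these three steps gives $\hat{V}_{prop}/\Var(\hat{Y}_{prop}) \to 1$ in probability, which, combined with Theorem~\ref{theorem: CLT} and Slutsky's theorem, is the asserted plug-in variance consistency.
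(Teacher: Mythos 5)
Your overall plan (split $\hat{Y}_{prop}$ into a $B$-design part and an $A$-residual part, argue orthogonality by conditioning, then plug residuals in for $\sigma_i^2$) is in the right spirit, but two concrete steps fail. First, the order comparison driving your Steps 1 and 3 is backwards. Under Assumptions~A\ref{ass:piB} and A\ref{ass: prob sample inc prob} we have $n_A\asymp N$ and $\hat w_i$ bounded, so $\sum_{i=1}^N\delta_{A,i}\hat w_i^2\sigma_i^2=O_p(N)$, whereas $\sum_{i=1}^N\delta_{B,i}\pi_{B,i}^{-2}\sigma_i^2\asymp N^2n_B^{-1}$; since $n_B=o(N)$, the $\delta_{B,i}\pi_{B,i}^{-2}$ piece of $B_N^2$ is the \emph{dominant} one, not a negligible one (this is exactly how the paper gets $B_N^2\asymp N^2n_B^{-1}$ in the proof of Lemma~\ref{lemma: CLT for epsilon}). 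Consequently $N^{-2}\sum_{i=1}^N\delta_{A,i}\hat w_i^2\{y_i-\hat m(\bx_i)\}^2$ is $O_p(N^{-1})$ and cannot be ratio-consistent for $N^{-2}B_N^2\asymp n_B^{-1}$; your identity $\delta_{A,i}\hat w_i^2=(\delta_{A,i}\hat w_i-\delta_{B,i}\pi_{B,i}^{-1})^2+2\delta_{A,i}\delta_{B,i}\hat w_i\pi_{B,i}^{-1}-\delta_{B,i}\pi_{B,i}^{-2}$ is algebraically fine, but the discarded $\delta_{B,i}\pi_{B,i}^{-2}$ term is the largest one, so Step 3 collapses. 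Second, your decomposition mixes two incompatible regroupings: with $T_1=N^{-1}\sum_i(\delta_{B,i}\pi_{B,i}^{-1}-1)\hat m(\bx_i)$ the companion term is $N^{-1}\sum_i(\delta_{A,i}\hat w_i-1)\{y_i-\hat m(\bx_i)\}$, while "regrouping as in the CLT" makes the $B$-part the Horvitz--Thompson error for $y_i$ (not $\hat m(\bx_i)$), whose design variance is not what the first term of $\hat V_{prop}$ estimates, and the two pieces are then correlated through $\delta_{B,i}$ and $\epsilon_i$, so the claimed asymptotic uncorrelatedness in Step 2 is not available under that grouping.

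The paper's proof does not go through $B_N^2$ at all. It defines $\tilde\theta=N^{-1}\sum_i\delta_{B,i}\pi_{B,i}^{-1}m(\bx_i)+N^{-1}\sum_i\delta_{A,i}\hat w_i\{y_i-m(\bx_i)\}$, uses (\ref{eq: neg dif}) to show $\hat{Y}_{prop}-\tilde\theta=o_p(n_B^{-1/2})$, and then applies the law of total variance conditioning on $\mathcal{A}_N$ (the sampling indicators and covariates): by Assumption~A\ref{ass: epsilon} the conditional mean is the $B$-sample HT estimator of $m$, whose design variance is estimated by the first term of $\hat V_{prop}$ with $\hat m$ plugged in, and the conditional variance is $N^{-2}\sum_i\delta_{A,i}\hat w_i^2\sigma_i^2$, estimated by the residual-based second term. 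In that derivation the second term is only meant to capture the $A$-side conditional variance, not the full $N^{-2}B_N^2$, which is precisely where your argument diverges; to repair your proof you should adopt the paper's decomposition (a) above, replace $\hat m$ by $m$ via (\ref{eq: neg dif}), and justify the plug-in only for $\sum_{i\in A}\hat w_i^2\sigma_i^2$, where your Cauchy--Schwarz and $\lVert h\rVert_2=o_p(1)$ arguments are the right tools.
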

The proof of Corollary~\ref{cor: variance estimator} is relegated to Section~\ref{ss: cor1} of the Supplementary Material. The first term of $\hat{V}_{prop}$ estimates the variability due to probability sampling, and the second term estimates $N^{-2}B_N^2$ in Theorem~\ref{theorem: CLT}.

\section{Simulation study}\label{sec: simu}
In this section, the performance of the proposed estimator (\ref{eq: proposed estimator kl}) is compared with its alternatives in terms of estimating the population mean $\bar{Y}_N$. The finite population $\{(y_i,\bx_i):i=1,\ldots,N\}$ and the two samples $A$ and $B$ are generated by the following setups. 
\begin{enumerate}
    \item[Linear.] $m(\bx_i)=10+2x_{1i}+2x_{2i}$ and $\epsilon_i\sim N(0,1)$, where $\bx_i=(x_{1i},x_{2i})^{\T}$, $x_{1i}=z_{1i}$, $x_{2i}=0.3x_{1i}+z_{2i}$, $z_{ki} = 2(\xi_{ki}-0.5)$ for $k=1,2$, $\xi_{ki}\sim\mbox{Beta}(3,3)$, and $\mbox{Beta}(\alpha,\beta)$ is a beta distribution with two shape parameters $\alpha$ and $\beta$. A non-probability sample $A$ is generated by Assumptions~\ref{ass: A ind}--\ref{ass: MAR}, where  $\pi_A(\bx_i)\propto \{m(\bx_i) - m_{min} + 0.25\}$, $\sum_{i=1}^N\pi_A(\bx_i) = n_{A0}$, $m_{min} = \min\{m(\bx_i):i=1,\ldots,N\}$, and $n_{A0}$ is the expected size of the non-probability sample $A$.

    \item[Nonlinear.] $m(\bx_i) = 3 + 2z_{1i} + z_{2i}$ and  $\epsilon_i\sim N(0,0.5^2)$,  where $\bx_i=(x_{1i},x_{2i})^{\T}$, $x_{1i} =\lvert z_{1i}\rvert\exp(-z_{1i})$, $x_{2i} = \lvert z_{2i}\rvert\exp(z_{2i})$, and $z_{1i}$ and $z_{2i}$  are independently generated by a truncated normal distribution restricted on the interval $[-3,3]$ with mean zero and standard deviation one. The response probability of the non-probability sample $A$ is $\logit\{c_A\pi_{A}(\bx_i;\btheta_0)\} =  1 -0.8z_{1i} -0.8 z_{2i}$, where $c_A$ is chosen such that $\sum_{i=1}^N\pi_{A}(\bx_i;\btheta_0)=n_{A0}$. 
\end{enumerate}
For each setup, we conduct Poisson sampling to generate a probability sample $B$, and the corresponding including probability satisfies $\pi_{B,i} \propto \log\{m(\bx_i)-m_{min} + 2\}$ and $\sum_{i=1}^N\pi_{B,i} =  n_{B0}$,  where  $n_{B0}$ is the expected size of the probability sample $B$.
The linear model setup is similar to \citet{chen2020doubly}, but the selection mechanism for the non-probability sample $A$ is not based on a logistic regression model.  A nonlinear regression model is considered in the second setup, and it is similar to \citet{wong2018kernel}. Even though we adopt a logistic regression model for the selection mechanism of the non-probability sample $A$, it is not linear in  $\bx_i$. 

We consider $(N, n_{A0},n_{B0})\in\{ (5\,000, 1\,000,100), (10\,000, 2\,000,200)\}$, and the following estimators are compared:
\begin{enumerate}
    \item Naive sample mean (NSM): $\hat{Y}_{NSM}=n_A^{-1}\sum_{i\in A}y_i$.
    \item Quasi-randomization estimator \citep{elliott2017inference} with $N$ known (EV1): $\hat{Y}_{EV1}=N^{-1}\sum_{i\in A}\tilde{w}_iy_i$, where $\tilde{w}_i=\tilde{d}_i\tilde{p}_i$, $\tilde{d}_i$ is obtained by a linear regression model for $d_{B,i}$ against $\bx_i$ based on the probability sample $B$, $\tilde{p}_i = \hat{P}(\delta_{A,i}\mid \bx_i, \delta_{A,i}+\delta_{B,i}\geq  1)/\hat{P}(\delta_{B,i}\mid \bx_i, \delta_{A,i}+\delta_{B,i}\geq1)$, and $\hat{P}(\delta_{A,i}\mid \bx_i, \delta_{A,i}+\delta_{B,i}\geq  1)$ and $\hat{P}(\delta_{B,i}\mid \bx_i, \delta_{A,i}+\delta_{B,i}\geq1)$ are estimated by a logistic regression model; see \citet{elliott2017inference} and \citet[Section~11.2]{kim2013statistical} for details.
    \item Quasi-randomization estimator \citep{elliott2017inference} with $N$ estimated (EV2): $\hat{Y}_{EV2}=(\sum_{i\in A}\tilde{w}_i)^{-1}\sum_{i\in A}\tilde{w}_iy_i$, where $\tilde{w}_i$ is estimated in the same way as EV1.
    \item Doubly robust estimator \citep{chen2020doubly} with $N$ known (DR1); see Section~\ref{ss: DRE} of the Supplementary Material for details.
     \item Doubly robust estimator \citep{chen2020doubly} with $N$ estimated (DR2).
   \item HT estimator (\ref{eq: proposed estimator kl}) with the proposed penalty (HT\_KL). 
   \item Balancing estimator of \citet{wong2018kernel} adapted to survey sampling (BSS). That is, instead of using the KL-divergence as the penalty term, we consider
     \begin{equation}
	\hat{\bgamma} = \argmin_{\xi_1\leq r_i\leq  \min\{\xi_2,C_{N}\}}\left[\sup_{u\in{\mathcal{H}}}\left\{\frac{S(\bgamma,u)}{\lVert u\rVert_2^2}-\lambda_1\frac{\lVert u\rVert_{\mathcal{H}}^2}{\lVert u\rVert_2^2}\right\}+ \lambda_2 Q_2 (\bgamma) \right],\label{eq: object222}
\end{equation}
where $Q_2(\bgamma) = n_A^{-1}\sum_{i\in A} \{1 + (Nn_A^{-1}-1)r_i\}^2$. 
     \item Proposed  estimator in (\ref{eq: estimator}) (Prop).
\end{enumerate}
For the two doubly robust estimators, we make an assumption as in \citet{chen2020doubly} that the underlying regression model $m(\bx)$ is linear in the auxiliary vector $\bx$ and the response model  $\pi_{A}(\bx)$ is logistic. It is worth pointing out that the BSS estimator has not been proposed by other researchers yet, and we use an $l_2$ penalty for the sampling weights in the proposed method for comparison.

We conduct $M=1\,000$ Monte Carlo simulations for each model setup, and Figure~\ref{fig: 1} shows the  Monte Carlo bias of the corresponding estimators, where the Monte Carlo bias  is 
$
    \hat{Y}_N^{(m)}-\bar{Y}_N^{(m)}
$
for $m=1,\ldots,M$, $\hat{Y}_N^{(m)}$ is a specific estimator with respect to  the $m$-th Monte Carlo simulation, and $\bar{Y}_N^{(m)}$ is the corresponding finite population mean. Regardless of model setups, NSM is biased since it fails to incorporate the selection mechanism for the non-probability sample. When the regression model is correctly specified,  EV2 and DR2 with population size estimated are more efficient than the others.  HT\_KL  is slightly more efficient than EV1, DR1, BSS and Prop, but it has a positive bias, especially when the size of the non-probability sample is large.  Prop is nearly as efficient as EV1, DR1 and BSS. However, when the regression model $m(\bx)$ and the response model $\pi_{A}(\bx)$ are wrongly specified,  all estimators other than BSS and Prop are biased, regardless of the sample sizes. A similar phenomenon for the doubly robust estimators was also discussed by \citet{kang2007demystifying}. Besides, the efficiency gain by EV2 and DR2 is less compared with their counterparts. {Although we have established the consistency of   HT\_KL  in Theorem~\ref{theorem: convergence rate of proposed estimator}, its finite sample performance is questionable when the true model is complex.} On the contrary, both BSS and Prop are unbiased. Compared with BSS, Prop is slightly more efficient, especially for the nonlinear model setup. 

\begin{figure}[!th]
    \centering
    \includegraphics[width = \textwidth]{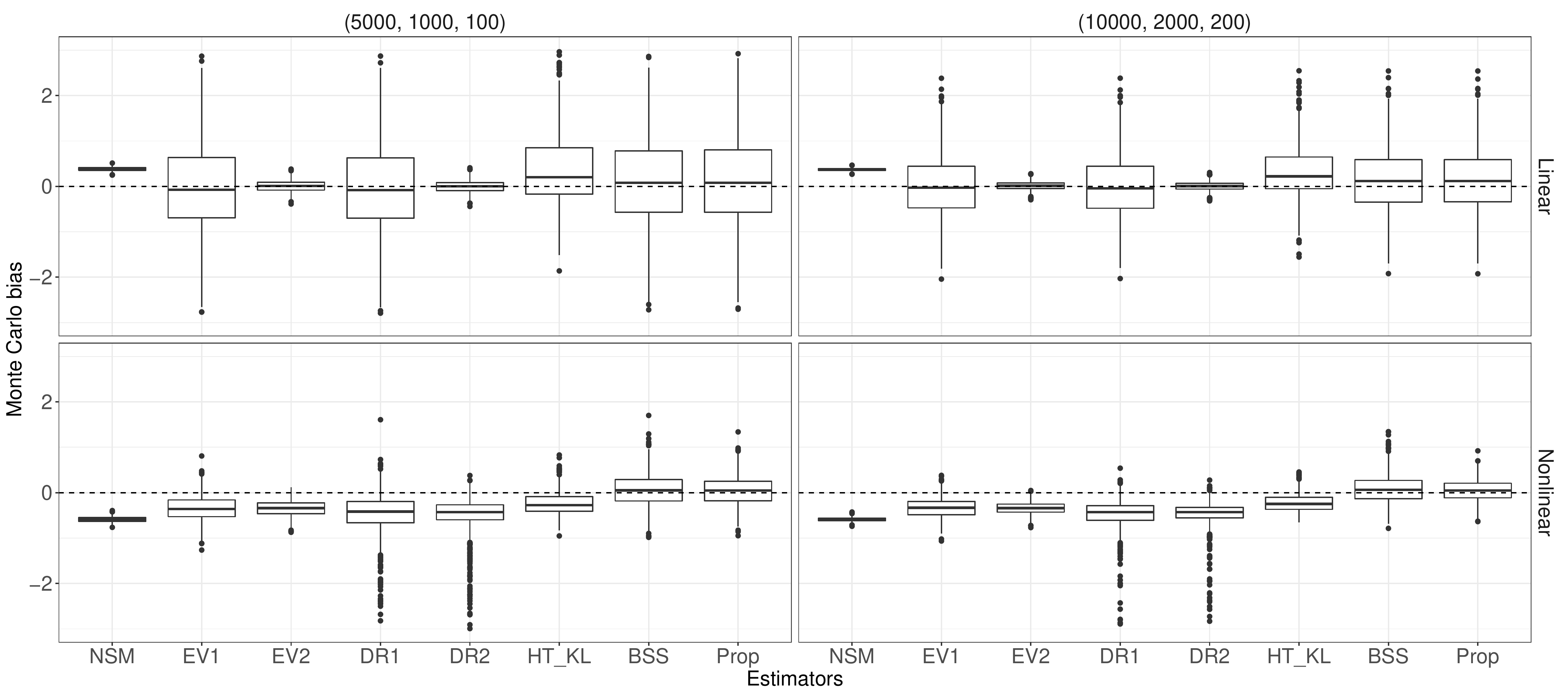}
    \caption{Boxplots for the Monte Carlo bias of different estimators under different setups. The horizontal dashed line corresponds to no bias.}
    \label{fig: 1}
\end{figure}

We also compare the computation efficiency of BSS and Prop, and the average computation time required by each estimator is shown in Table~\ref{tab: 2}. Regardless of the model setups, the proposed estimator (\ref{eq: estimator}) is more computationally efficient than BSS.

\begin{table}
\centering
\caption{Computation efficiency of BSS and Prop in terms of average computation time based on 1\,000 Monte Carlo simulations (unit: second). }\label{tab: 2}
\begin{center}
\begin{tabular}{crrrrr}
  \hline
 \multirow{2}{*}{Sizes}& \multicolumn{2}{c}{Linear} & &\multicolumn{2}{c}{Nonlinear}\\
 &BSS & Prop &&BSS & Prop\\ 
  \hline
 (5\,000, 1\,000,100) &  35.84 &   7.26 &&31.11 &   5.22  \\ 
 (10\,000, 2\,000,200) & 66.59 &  31.03&& 137.69 &  42.47  \\ 
   \hline
\end{tabular}
\end{center}
\end{table}

The coverage rates of the interval estimator with 95\% confidence level is also investigated for Prop, and we use Corollary~\ref{cor: variance estimator} to estimate its variance. Specifically, under Poisson sampling, a plug-in variance estimator is 
\begin{equation*}
    \hat{V}_{prop} = N^{-1}\sum_{i=1}^N\delta_{B,i}\pi_{B,i}^{-2}(1-\pi_{B,i})\hat{m}(\bx_i)^2  +\hat{\sigma}^2 N^{-2}\sum_{i=1}^N\delta_{A,i}\hat{w}_i^2,
\end{equation*}
where  $\hat{m}(\bx)$ is obtained by the generalized additive model \citep{Wood2003}, and $\hat{\sigma}^2$ is the sample variance of $\{y_i-\hat{m}(\bx_i):i\in A\}$. The coverage rates are close to 0.95 in different model setups, especially when the sample sizes are large.  
\begin{table}
    \centering
    \caption{Coverage rate of the interval estimator with 95\% confidence level based on 1\,000 Monte Carlo simulations. }
    \label{tab:my_label}
    \begin{center}
    \begin{tabular}{ccc}
         \hline 
       Model  &(5\,000, 1\,000,100) & (10\,000, 2\,000,200)\\
         \hline
         Linear&0.960&0.959\\
         Nonlinear&0.967&0.966\\
         \hline
    \end{tabular}
    \end{center}
\end{table}

{
\begin{rem}
Although the performance of   HT\_KL  is questionable under the nonlinear model setup, we still consider the performance of its bootstrap variance estimator, and the number of bootstrap replication is $B=200$. We relegate the simulation results to Section~\ref{ss: bootstrap variance estimator} of the Supplementary Material, and its performance is satisfactory in terms of relative bias.
\end{rem}
}

\section{Application} \label{sec: application}
We compare the performance of the proposed estimator and its alternatives based on a non-probability sample $A$ from the National Health Insurance Sharing Service (NHISS) and a probability sample $B$ from the Korea National Health and Nutrition Examination Survey (KNHANES). National Health Insurance was implemented in 1963 by the Medical Insurance Act, and whole Korean citizens are virtually enrolled in building a healthcare system; see  \citet{choi2015effect}, \citet{jee2019gender} and the references within for details. KNHANES, on the other hand, is a national survey conducted by the  Korea Centers for Disease Control and Prevention since 1998, and it is mainly adopted to assess the health and nutrition status of Korean citizens and provide health-related statistics in Korea. Therefore, the sample of KNHANES is nationally representative, and health-related information, including socioeconomic status, quality of life, health-related behaviors, and healthcare utilization, has been collected; see \citet{kweon2014data} and the references within for details. 

In this section, a non-probability sample $A$ contains $n_A=20\,000$ elements randomly selected from an NHISS dataset.  Demographic information, including age and gender, and health-related information, such as total cholesterol (mg/dL), hemoglobin (HGB), triglyceride (TG), and high-density lipoprotein cholesterol (HDL, mg/dL), is available. The probability sample $B$ is a subset of the blood test results in the 2014 KNHANES, and it was obtained by a multi-stage clustered probability design with sample size  $n_B = 4\,929$. The probability sample $B$ contains the health-related information as that in the non-probability sample. We are interested in estimating the average total cholesterol for different age and gender groups by incorporating information from the two samples.  

Even though the average total cholesterol can be estimated by $\hat{N}^{-1}\sum_{i\in B}\pi_{B,i}^{-1}y_i$ with $\hat{N}=\sum_{i\in B}\pi_{B,i}^{-1}$, we treat it as unavailable and use it as a benchmark to evaluate the performance of different methods, where $y_i$ is the total cholesterol for the $i$th person. That is, we only assume $\{(\bx_i,y_i):i\in A\}$ and $\{(\bx_i,\pi_{B,i}):i\in B\}$ are available, where $\bx_i$ contains the covariates, including HGB, TG and HDL. The population size $N$ is not available, and we use $\hat{N}$ instead. Both samples can be categorized into three age groups, including 20--40, 40--60, and more than 60 years old. Table~\ref{tab: marginal app} summarizes the marginal means of the covariates within each age and gender group, and we conclude that there exists a difference for the covariates in the two samples. 

\begin{table}[ht]
\centering
\caption{Marginal means of covariates for the non-probability sample $A$ and the probability sample $B$ for different domains. ``20--40'' stands for the group with age between 20 and 40, ``40--60'' for the group with age between 40 and 60, and ``60+'' for the group with age more than 60.}\label{tab: marginal app}
\begin{center}
\begin{tabular}{ccrrrrrr}
  \hline
\multirow{2}{*}{Gender} & \multirow{2}{*}{Covariate} & \multicolumn{2}{c}{20--40}&\multicolumn{2}{c}{40--60} & \multicolumn{2}{c}{60+ } \\
&& $A$ & $B$ & $A$ & $B$ & $A$ & $B$\\
  \hline 
  \multirow{3}{*}{Female} & HDL & 63.82 & 57.28 &  59.74 &  54.72 &  55.20 &  49.90 \\ 
   & TG & 83.72 & 89.16 & 110.68 & 118.31 & 128.28 & 137.86 \\ 
   & HGB & 12.95 & 13.06 &  12.96 &  13.24 &  12.89 &  13.18 \\ 
   &  &  &  &  &  &  &  \\ 
\multirow{3}{*}{Male} & HDL &  53.14 &  48.19 &  51.78 &  47.12 &  51.09 &  46.95 \\ 
   & TG & 147.21 & 160.74 & 162.87 & 184.98 & 133.50 & 141.65 \\ 
   & HGB &  15.47 &  15.68 &  15.18 &  15.41 &  14.39 &  14.58 \\ 
   \hline
\end{tabular}
\end{center}
\end{table}

For each age and gender group, consider a regression model (\ref{eq: popu model}) for the proposed method, and the corresponding benchmark is
$
\hat{Y}_{BM} = (\sum_{i\in D}\pi_{B,i}^{-1})^{-1}\sum_{i\in D}\pi_{B,i}^{-1}y_i,
$
where $D\subset B$ consists of elements in the group. We also consider NSM, EV2 and DR2 in Section~\ref{sec: simu} for comparison, and different methods are evaluated by the  estimation error $\hat{Y} -\hat{Y}_{BM}$, where $\hat{Y}$ is  a specific estimator. 

Figure~\ref{fig: estimation error} summarizes the estimation errors of different methods for each age and gender group, and we can reach the following conclusions.  NSM overestimates the average total cholesterol for each age and gender group, and its performance is questionable. Even though  HT\_KL performs better than NSM, it is still much worse than EV2, DR2, BSS and Prop. Prop  and BBS perform at least as well as EV2 and DR2 in all groups, and they outperform EV2 and DR2 for some groups. For example, in the female group with age 20--40, the estimation errors of Prop and BSS are less than EV2 and DR2, and a similar observation holds for the male group with age 20--40. 

\begin{figure}[!ht]
    \centering
    \includegraphics[width=\textwidth]{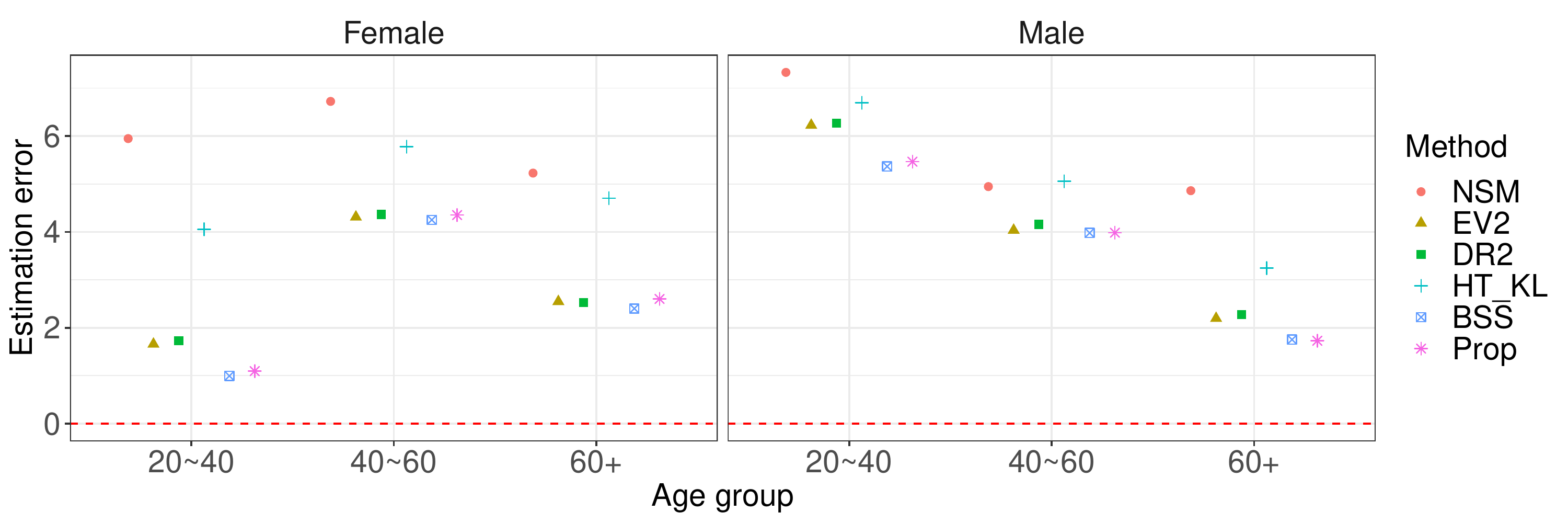}
    \caption{Estimation error of different methods for each age and gender group. ``NSM'' is the naive sample mean estimator using the non-probability sample, ``EV2'' is the method considered by \citet{elliott2017inference}, ``DR2'' is the one proposed by \citet{chen2020doubly},``HT\_KL'' is the estimator in (\ref{eq: proposed estimator kl}), ``BSS'' is the balancing estimator of \citet{wong2018kernel} adopted to survey sampling, and ``Prop'' is the proposed method.}
    \label{fig: estimation error}
\end{figure}

\section{Concluding remarks}\label{sec: conclusion}
We propose a uniform function calibration method to estimate the sampling weights of a non-probability sample based on a probability sample, which is generated by a rejective sampling design. Compared with existing methods, the proposed method does not make any parametric assumption either for the regression model or the response model, so it can be widely adopted in practice. Besides, different from existing works, a KL-divergence-based penalty is proposed to improve the performance of the proposed method. 
Consistency and the asymptotic normality of the proposed estimator are established under regularity conditions. Numerical results show that the proposed method outperforms its alternatives, especially when both regression and response models are wrongly specified. 
The proposed method can be viewed as a ``soft'' calibration method, since we do not require that $S(\hat{\bgamma},u)=0$ holds for every $u\in\mathcal{H}$. In survey sampling, however, we may would like to achieve ``hard'' calibration for certain functions of the covariates, it would be an interesting project to incorporate a ``hard'' calibration component in the proposed objective function. 

\bibliographystyle{apalike}

\bibliography{reference}  
\newpage
\begin{center}
	\LARGE\bfseries Supplemental Material for ``Functional Calibration under Non-Probability Survey Sampling''
\end{center}

\renewcommand{\theequation}{S\arabic{section}.\arabic{equation}}
\renewcommand{\thelem}{S\arabic{lem}}
\renewcommand{\thethm}{S\arabic{thm}}
\setcounter{equation}{0}
\setcounter{thm}{0}
\setcounter{lem}{0}
\setcounter{section}{0}
\renewcommand{\thesection}{S\arabic{section}}

\section{Brief introduction to RKHS}\label{ss RKH}
A symmetric bivariate function $K:\mathcal{X}\times\mathcal{X}\to \mathbb{R}$ is a positive semidefinitive kernel function, if for all integer $n\geq1$ and elements $\{\bx_1,\ldots,\bx_n\}\subset \mathcal{X}$, the $n\times n$ matrix $\bM$ is positive semidefinitive, where  $K(\bx_i,\bx_j)$ serves as its $(i,j)$-th entry for $i,j=1,\ldots,n$; see Definition~12.6 of \citet{wainwright2019high} for details.

Let $K(\bx,\by)$ be a positive semidefinitive kernel function, and consider a functional space $${\mathcal{H}}^\dagger = \left\{f:f(\cdot)=\sum_{i=1}^n\alpha_jK(\cdot,\bx_j)\mbox{ for some }n\geq1, \{\alpha_1,\ldots,\alpha_n\}\subset\mathbb{R}, \{\bx_1,\ldots,\bx_n\}\subset\mathcal{X}\right\}.$$Then, by Theorem~12.11 of \citet{wainwright2019high}, the complement of ${\mathcal{H}}^\dagger$, say $\mathcal{H}$, is an RKHS with the reproducing kernel $K(\bx,\by)$.

Furthermore, suppose that the kernel function $K(\bx,\by)$ has the following eigen-decomposition:
$$
K(\bx,\by) = \sum_{j=1}^\infty \mu_j\psi_j(\bx)\psi_j(\by),
$$
where $\{\mu_j:j=1,2,\ldots\}$ are non-negative eigenvalues satisfying $\sum_{j=1}^\infty\mu_j^2<\infty$, and $\{\psi_j(\bx):j=1,2,\ldots\}$ are the corresponding eigenfunctions. Then, for any $f\in\mathcal{H}$, there exist $\{c_j:j=1,2,\ldots\}$ such that 
$$
f(\bx) = \sum_{j=1}^\infty c_j\psi_j(\bx),
$$
and the corresponding norm associated with $\mathcal{H}$ is defined as 
$$
\lVert f\rVert_\mathcal{H}^2 = \sum_{j=1}^\infty c_j^2/\mu_j.
$$
See Section~12.2.3 of \citet{wainwright2019high} and Section~5.8.1 of \citet{friedman2001elements} for details.

\section{Numerical solution of the optimization problem}\label{append: deri}
Consider $u(\bx) = \sum_{i=1}^n\alpha_iK(\bx_i,\cdot)$, and denote $\bu = \bM\balpha$, where $\bu = (u(\bx_1),\ldots,u(\bx_n))^{\T}$, $\balpha=(\alpha_1,\ldots,\alpha_n)^{\T}$, and  $\bM$ is an  $n\times n$ Gram matrix with $(i,j)$-th element being $K(\bx_i,\bx_j)$. Assume that the eigen-decomposition of $\bM$ is 
\begin{equation}
	\bM = \begin{pmatrix}\bP_1&\bP_2\end{pmatrix}\begin{pmatrix}\bQ_1&\bzero\\\bzero&\bQ_2\end{pmatrix}\begin{pmatrix}\bP_1^{\T}\\\bP_2^{\T}\end{pmatrix},
\end{equation}
where $\bQ_1$ is a diagonal matrix consisting the positive eigenvalues, and $\bQ_2$ is a zero matrix. Notice that  $\bQ_2$ may be a null matrix. Then, 	$S(\bgamma,u) = N^{-2}\balpha^{\T}\bM\bA(\bgamma)\bM\balpha$ and $\lVert u\rVert_2^2 = n^{-1}\balpha^{\T}\bM^2\balpha$, where $\bA(\bgamma) = \bw(\bgamma)\bw(\bgamma)^{\T}$, and $\bw(\bgamma) = (w_1(\bgamma),\ldots,w_n(\bgamma))^{\T}$. Denote $\bbeta = n^{-1/2}\bQ_1\bP_1^{\T}\balpha$, the inner optimization problem becomes 
\begin{equation}
	\sup_{\bbeta:\lVert\bbeta\rVert_2\leq 1}\bbeta^{\T}\left\{\frac{n}{N^2}P_1^{\T}A(\bgamma)P_1 - n\lambda_1Q_1^{-1}\right\}\bbeta.
\end{equation}
Then, we can use a similar procedure in Section~2.3 of \citet{wong2018kernel} to solve the optimization problem (\ref{eq: object2}).

\section{Proof of Lemma~\ref{lemma: R S2}} \label{supp: proof of lemma R S2}
First, we present a definition of negative association as in Definition~2.1 of \citet{joag1983negative}.
\begin{defi}\label{def: NA}
	Random variables $X_1,\ldots,X_N$ are said to be negatively associated if for every pair of disjoint subsets $A_1$, $A_2$ of $\{1,\ldots,N\}$, 
	\begin{equation}\label{eq: negative associated inequality}
		\mathrm{Cov}\{f_1(X_i,i\in A_1),f_2(X_i,i\in A_2)\}\leq 0,
	\end{equation}
	where $f_1$ and $f_2$ are increasing functions in every variable.
\end{defi}

It can be easily shown that if both $f_1$ and $f_2$ are decreasing functions in every variable, we still get (\ref{eq: negative associated inequality}) for negatively associated random variables since $-f_1$ and $-f_2$ are increasing and $\text{Cov}\{-f_1(X_i,i\in A_1),-f_2(X_i,i\in A_2)\}=\text{Cov}\{f_1(X_i,i\in A_1),f_2(X_i,i\in A_2)\}$.

\begin{lem}\label{lemma: exp W2}
	Suppose that Assumption~A\ref{ass: prob sample inc prob} holds. Then, we have 
	$$
	E \{\exp(W_{B,i}^2)\}-1\leq \sigma_0^2,
	$$
	uniformly for $i=1,\ldots,N$, where $W_{B,i} = (\delta_{B,i}\pi_{B,i}^{-1}-1)n_BN^{-1}$ and $\sigma_0^2 = \exp\{\max\{1,(C_{B,2}^{-1}-1)^{2},C_{B,1}^{-2}\}\}-1$. 
\end{lem}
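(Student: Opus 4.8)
The plan is to exploit that, conditional on the finite population, $\delta_{B,i}$ is a Bernoulli random variable equal to $1$ with probability $\pi_{B,i}$, so that $W_{B,i}$ is supported on just two points. Writing $w_1 = (\pi_{B,i}^{-1}-1)n_BN^{-1}$ and $w_0 = -n_BN^{-1}$, we have $W_{B,i}=w_1$ with probability $\pi_{B,i}$ and $W_{B,i}=w_0$ with probability $1-\pi_{B,i}$, hence
\[
E\{\exp(W_{B,i}^2)\} = \pi_{B,i}\exp(w_1^2)+(1-\pi_{B,i})\exp(w_0^2),
\]
a convex combination of $\exp(w_1^2)$ and $\exp(w_0^2)$. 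It therefore suffices to show that both $w_0^2$ and $w_1^2$ are bounded above by $\rho:=\max\{1,(C_{B,2}^{-1}-1)^2,C_{B,1}^{-2}\}$; the claim then follows from $E\{\exp(W_{B,i}^2)\}\le\exp(\rho)=1+\sigma_0^2$, uniformly in $i$ since $\rho$ does not depend on $i$.

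For the $\delta_{B,i}=0$ branch, $w_0^2=(n_BN^{-1})^2\le 1$ because $n_B\le N$ (indeed $n_BN^{-1}=o(1)$ under Assumption~A\ref{ass: prob sample inc prob}); this accounts for the term ``$1$'' in $\rho$. For the $\delta_{B,i}=1$ branch, I would put $c_i=\pi_{B,i}Nn_B^{-1}$, so that Assumption~A\ref{ass: prob sample inc prob} gives $c_i\in[C_{B,1},C_{B,2}]$, and rewrite $w_1=c_i^{-1}-n_BN^{-1}$. Using $0<n_BN^{-1}\le 1$ one obtains the two-sided bound $C_{B,2}^{-1}-1\le w_1\le C_{B,1}^{-1}$, and since $\sup_{x\in[a,b]}x^2=\max\{a^2,b^2\}$ for any interval $[a,b]$, this yields $w_1^2\le\max\{(C_{B,2}^{-1}-1)^2,C_{B,1}^{-2}\}\le\rho$. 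Combining the two branches, $\max\{w_0^2,w_1^2\}\le\rho$, and the convex-combination identity above finishes the argument.

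There is no substantial obstacle; the only point needing care is the sign bookkeeping on the $\delta_{B,i}=1$ branch. Since $c_i$ may exceed $1$ while $n_BN^{-1}$, though at most $1$, is not negligible at a fixed $N$, the quantity $w_1=c_i^{-1}-n_BN^{-1}$ need not be nonnegative, which is precisely why $(C_{B,2}^{-1}-1)^2$ (and not merely $C_{B,1}^{-2}$) must be included in the maximum defining $\rho$. One should also record the trivial inequality $C_{B,1}\le Nn_B^{-1}$ (from $\pi_{B,i}\le 1$), which guarantees $C_{B,1}^{-1}-n_BN^{-1}\ge 0$ so that $|w_1|\le C_{B,1}^{-1}$ on the upper side and the stated endpoint bounds are valid.
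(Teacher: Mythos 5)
Your proposal is correct and follows essentially the same route as the paper: write $E\{\exp(W_{B,i}^2)\}$ as the two-point mixture $\pi_{B,i}\exp\{(\pi_{B,i}^{-1}-1)^2n_B^2N^{-2}\}+(1-\pi_{B,i})\exp(n_B^2N^{-2})$, bound the first exponent by $\max\{(C_{B,2}^{-1}-1)^2,C_{B,1}^{-2}\}$ via Assumption~A\ref{ass: prob sample inc prob} and the second by $1$ via $n_B\le N$, and conclude that the convex combination is at most $\exp(\rho)=1+\sigma_0^2$ uniformly in $i$. Your explicit interval bookkeeping for $w_1=c_i^{-1}-n_BN^{-1}$ just makes transparent the inequality the paper states in one line.
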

\begin{proof}[Proof of Lemma~\ref{lemma: exp W2}]
	Consider 
	\begin{eqnarray}
		E \{\exp(W_{B,i}^2)\} 
		&=& \pi_{B,i}\exp\{(\pi_{B,i}^{-1}-1)^2n_B^2N^{-2}\}+(1-\pi_{B,i})\exp(n_B^2N^{-2})\notag \\ 
		&\leq&\pi_{B,i}\exp(\max\{(C_{B,2}^{-1}-1)^{2},C_{B,1}^{-2}\}) + (1-\pi_{1})e\notag \\ 
		&\leq&\exp(\max\{1,(C_{B,2}^{-1}-1)^{2},C_{B,1}^{-2}\}),\label{eq: W2 1}
	\end{eqnarray}
	where first inequality holds by Assumption~A\ref{ass: prob sample inc prob}.
	By (\ref{eq: W2 1}), we have proved Lemma~\ref{lemma: exp W2}.
\end{proof}

The next lemma is a straightforward result from Definition~\ref{def: NA}, so we omit its proof. 
\begin{lem} \label{lemma: NA decreasing}
	For any $m\geq 2$ and mutually disjoint subsets $A_1,\ldots,A_m$ of $\{1,\ldots,N\}$ and negatively associated random variables $X_1,\ldots,X_N$,  we have 
	\begin{equation}
		E\left\{\prod_{k=1}^mf_k(X_{i}:i\in A_k)\right\}\leq \prod_{k=1}^mE\left\{f_k(X_{i}:i\in A_k)\right\},  
	\end{equation}
	where $f_1,\ldots,f_m$ are decreasing and non-negative functions in every variable.
\end{lem}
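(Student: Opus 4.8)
The plan is to prove the inequality by induction on the number of factors $m$, reducing the general case to the pairwise definition of negative association in Definition~\ref{def: NA} together with the remark immediately following it. The base case $m=2$ is immediate: since $f_1$ and $f_2$ are decreasing in every variable (so that $-f_1$ and $-f_2$ are increasing) and depend only on the disjoint index sets $A_1$ and $A_2$, the remark after Definition~\ref{def: NA} gives $\mathrm{Cov}\{f_1(X_i,i\in A_1),f_2(X_i,i\in A_2)\}\leq 0$. Expanding the covariance then yields $E\{f_1 f_2\}=\mathrm{Cov}(f_1,f_2)+E\{f_1\}E\{f_2\}\leq E\{f_1\}E\{f_2\}$, which is the claim for two functions.

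For the inductive step I would assume the result for $m-1$ factors and set $G:=\prod_{k=1}^{m-1}f_k(X_i,i\in A_k)$. Because the subsets $A_1,\ldots,A_{m-1}$ are pairwise disjoint, each coordinate $X_i$ enters at most one factor of $G$, so $G$ is a function of the variables indexed by $A_1\cup\cdots\cup A_{m-1}$ alone. As that index set is disjoint from $A_m$, I would apply the base case to the pair $(G,f_m)$ to get $E\{G f_m\}\leq E\{G\}E\{f_m\}$, then invoke the inductive hypothesis to bound $E\{G\}\leq\prod_{k=1}^{m-1}E\{f_k\}$. Multiplying the latter by the nonnegative scalar $E\{f_m\}$ and chaining the two inequalities produces $E\{\prod_{k=1}^{m}f_k\}\leq\prod_{k=1}^{m}E\{f_k\}$, closing the induction.

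The one point that genuinely needs verification — and the reason non-negativity is assumed in the hypotheses — is that $G$ is again a \emph{non-negative, decreasing} function of each of its coordinates, so that the base case may legitimately be applied to the pair $(G,f_m)$. This is where disjointness of the $A_k$ enters: raising a single coordinate $X_i$ changes only the unique factor $f_k$ with $i\in A_k$, decreasing it, and since every factor is non-negative the whole product decreases. Non-negativity is also what makes the final multiplication of the two intermediate inequalities valid, since each expectation involved is then non-negative and the inequality direction is preserved. Once this monotonicity-preservation observation is recorded, no further estimates are required, which is precisely why the text states the result without proof.
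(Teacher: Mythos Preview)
Your proof is correct and is exactly the standard induction argument from the pairwise definition that the paper has in mind; the paper itself simply declares the lemma ``a straightforward result from Definition~\ref{def: NA}'' and omits the details, so there is nothing further to compare.
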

The next lemma shows a Hoeffding's inequality \citep[Lemma 3.5]{geer2000empirical} under rejective sampling. 
\begin{lem}\label{lemma: Hoeffding}
	Suppose Assumption~A\ref{ass: prob sample inc prob} holds. Then, there exist positive constants $C_{B,3}$ and $C_{B,4}$, such that for any $a\geq 0$ and  $\{\gamma_i:i=1,\ldots,N\}\subset\mathbb{R}$, we have 
	\begin{equation}
		P\left(\left\lvert\sum_{i=1}^NW_{B,i}\gamma_i\right\rvert\geq a\right)\leq C_{B,3}\exp\left\{- \frac{ a^2}{C_{B,4}\sum_{i=1}^N\gamma_i^2}\right\},\label{eq: P final}
	\end{equation}
	where $W_{B,i}$ is defined in Lemma~\ref{lemma: exp W2}.
\end{lem}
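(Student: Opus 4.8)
The plan is to prove the Hoeffding-type bound in Lemma~\ref{lemma: Hoeffding} by a standard Chernoff (exponential moment) argument, adapted to the negative association of the sampling indicators under rejective sampling. First I would reduce to the case $\sum_{i=1}^N \gamma_i^2 = 1$ by rescaling, and then handle the two tails $P(\sum_i W_{B,i}\gamma_i \ge a)$ and $P(-\sum_i W_{B,i}\gamma_i \ge a)$ separately, since the union of the two bounds only costs a factor of $2$ absorbed into $C_{B,3}$. Note that $E(W_{B,i}) = 0$ by design-unbiasedness of the Horvitz--Thompson weights, so each summand is centered.

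The core step is to control the exponential moment $E\{\exp(t\sum_i W_{B,i}\gamma_i)\}$ for a suitable $t>0$. The obstacle is that the $W_{B,i}$ are \emph{not} independent; however, since $\{\delta_{B,i}\}$ are negatively associated under rejective sampling (as cited from \citet{bertail2016sharp}), and each $W_{B,i} = (\delta_{B,i}\pi_{B,i}^{-1}-1)n_BN^{-1}$ is a monotone function of $\delta_{B,i}$ alone, I can apply the negative-association property to decouple the product. Concretely, writing $\exp(t\sum_i W_{B,i}\gamma_i) = \prod_i \exp(tW_{B,i}\gamma_i)$ and splitting the index set according to the sign of $\gamma_i$, each factor $\exp(tW_{B,i}\gamma_i)$ is monotone in $\delta_{B,i}$; applying Lemma~\ref{lemma: NA decreasing} (or its increasing-function analogue via the remark after Definition~\ref{def: NA}) yields
\begin{equation*}
E\left\{\exp\left(t\sum_{i=1}^N W_{B,i}\gamma_i\right)\right\} \le \prod_{i=1}^N E\left\{\exp(tW_{B,i}\gamma_i)\right\}.
\end{equation*}
Then I reduce to the independent case: each $W_{B,i}$ is bounded (by Assumption~A\ref{ass: prob sample inc prob}, $|W_{B,i}| \le \max\{1, C_{B,1}^{-1}-1\}$ essentially, after using $\pi_{B,i} N n_B^{-1} \in [C_{B,1}, C_{B,2}]$ and $n_B \le N$), so by the classical Hoeffding lemma for bounded centered random variables, $E\{\exp(tW_{B,i}\gamma_i)\} \le \exp(c_0 t^2 \gamma_i^2)$ for a constant $c_0$ depending only on $C_{B,1}, C_{B,2}$. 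Multiplying over $i$ gives $E\{\exp(t\sum_i W_{B,i}\gamma_i)\} \le \exp(c_0 t^2 \sum_i \gamma_i^2)$.

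Finally I would close with the usual Chernoff optimization: $P(\sum_i W_{B,i}\gamma_i \ge a) \le \exp(-ta)\exp(c_0 t^2\sum_i \gamma_i^2)$, and choosing $t = a/(2c_0\sum_i\gamma_i^2)$ gives $P(\sum_i W_{B,i}\gamma_i \ge a) \le \exp\{-a^2/(4c_0\sum_i\gamma_i^2)\}$. Adding the symmetric tail bound and setting $C_{B,3} = 2$, $C_{B,4} = 4c_0$ completes the proof. The main subtlety I expect to spend care on is justifying the decoupling inequality precisely — making sure the functions I feed into the negative-association bound are genuinely monotone in each coordinate (which forces the sign split on $\gamma_i$) and that the boundedness constants are uniform in $i$ and $N$, using only Assumption~A\ref{ass: prob sample inc prob} and $n_B \le N$. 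The exponential-moment bound of Lemma~\ref{lemma: exp W2} is a closely related ingredient and can be invoked or mimicked to get the per-coordinate control without redoing the Hoeffding lemma from scratch.
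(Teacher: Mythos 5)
There is a genuine gap at your decoupling step. You claim that, after noting each factor $\exp(tW_{B,i}\gamma_i)$ is monotone in $\delta_{B,i}$, negative association gives
\begin{equation*}
E\left\{\exp\left(t\sum_{i=1}^N W_{B,i}\gamma_i\right)\right\} \le \prod_{i=1}^N E\left\{\exp(tW_{B,i}\gamma_i)\right\}
\end{equation*}
for the \emph{full} index set. But when the $\gamma_i$ have mixed signs, the factors with $\gamma_i>0$ are increasing in $\delta_{B,i}$ while those with $\gamma_i<0$ are decreasing, and the negative-association product inequalities (Definition~\ref{def: NA}, Lemma~\ref{lemma: NA decreasing}, Property~P2 of Joag-Dev and Proschan) require all functions involved to share the same monotonicity direction. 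For functions of opposite monotonicity on disjoint coordinate blocks, NA does not give nonpositive covariance, and the inequality can genuinely reverse. A concrete counterexample: take $N=2$, simple random sampling of size $1$, so $\delta_{B,1}+\delta_{B,2}=1$ (these indicators are NA), center to $Z_i=\delta_{B,i}-1/2$, and take $\gamma=(1,-1)$. Then $Z_1-Z_2=2Z_1=\pm1$ with equal probability, so $E\{\exp(t(Z_1-Z_2))\}=\cosh(t)$, whereas $E\{\exp(tZ_1)\}E\{\exp(-tZ_2)\}=\cosh^2(t/2)$, and $\cosh(t)=2\cosh^2(t/2)-1>\cosh^2(t/2)$ for all $t\neq0$. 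So the moment generating function of the mixed-sign sum exceeds the product of the marginal ones; the step as written is false, not merely unjustified.

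The repair is to split \emph{before} taking exponential moments, which is what the paper's proof does: with $\mathcal{I}=\{i:\gamma_i\ge0\}$, bound
\begin{equation*}
P\left(\Bigl\lvert\sum_{i=1}^N W_{B,i}\gamma_i\Bigr\rvert\ge a\right)\le P\left(\Bigl\lvert\sum_{i\in\mathcal{I}} W_{B,i}\gamma_i\Bigr\rvert\ge \frac{a}{2}\right)+P\left(\Bigl\lvert\sum_{i\notin\mathcal{I}} W_{B,i}\gamma_i\Bigr\rvert\ge \frac{a}{2}\right),
\end{equation*}
and then run the Chernoff argument separately on each sub-sum and each tail (four probabilities in all); within each sub-sum all factors $\exp(\pm\beta W_{B,i}\gamma_i)$ have the same monotonicity, so the NA decoupling applies legitimately, at the mere cost of larger constants ($C_{B,3}=4$ and an extra factor in $C_{B,4}$). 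The rest of your argument is sound and in fact slightly cleaner than the paper's at the per-coordinate level: you invoke the classical Hoeffding lemma using the uniform bound $\lvert W_{B,i}\rvert\le\max\{1,C_{B,1}^{-1}\}$ from Assumption~A\ref{ass: prob sample inc prob} and $n_B\le N$, whereas the paper routes through its Lemma~\ref{lemma: exp W2} together with a sub-Gaussian moment bound of van de Geer; both give $E\{\exp(tW_{B,i}\gamma_i)\}\le\exp(c_0t^2\gamma_i^2)$ with $c_0$ depending only on $C_{B,1},C_{B,2}$, and the final optimization over $t$ is the same.
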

\begin{proof}[Proof of Lemma~\ref{lemma: Hoeffding}]
	Since the probability sample $B$ is generated by a rejective sampling design, the corresponding sampling indicators are negatively associated by Theorem~3 of \citet{bertail2016sharp}. Given the sequence  $\{\gamma_i:i=1,\ldots,N\}$, denote $\mathcal{I} = \{i:\gamma_i\geq 0\}$. Then, for any positive constant $a$, we have 
	\begin{eqnarray}
		P\left(\left\lvert\sum_{i=1}^NW_{B,i}\gamma_i\right\rvert\geq a\right) &\leq& P\left(\left\lvert\sum_{i\in\mathcal{I}}W_{B,i}\gamma_i\right\rvert\geq \frac{a}{2}\right) +  P\left(\left\lvert\sum_{i\notin\mathcal{I}}W_{B,i}\gamma_i\right\rvert\geq \frac{a}{2}\right)\notag \\
		&\leq& P\left(\sum_{i\in\mathcal{I}}W_{B,i}\gamma_i\geq \frac{a}{2}\right) + 
		P\left(\sum_{i\in\mathcal{I}}W_{B,i}\gamma_i\leq -\frac{a}{2}\right)\notag \\ 
		&&+P\left(\sum_{i\notin\mathcal{I}}W_{B,i}\gamma_i\geq \frac{a}{2}\right)
		+P\left(\sum_{i\notin\mathcal{I}}W_{B,i}\gamma_i\leq -\frac{a}{2}\right).\notag\\\label{eq: prob 12}
	\end{eqnarray}
	Assume $ P\left(\left\lvert\sum_{i\in\mathcal{I}}W_{B,i}\gamma_i\right\rvert\geq a/2\right)=0$ if $\mathcal{I}=\emptyset$ and $ P\left(\left\lvert\sum_{i\notin\mathcal{I}}W_{B,i}\gamma_i\right\rvert\geq a/2\right)=0$ if $\mathcal{I}=\{1,\ldots,N\}$. 
	
	Without loss of generality, assume that $\mathcal{I}\neq\emptyset$ and $\mathcal{I}\neq\{1,\ldots,N\}$. For any $\beta>0$, we have
	\begin{eqnarray}
		P\left(\sum_{i\in\mathcal{I}}W_{B,i}\gamma_i\geq \frac{a}{2}\right)
		&\leq&\exp\left(- \frac{\beta a}{2}\right)E\left\{\exp\left(\beta\sum_{i\in\mathcal{I}}W_{B,i}\gamma_i\right)\right\}\notag \\ 
		&\leq&\exp\left(- \frac{\beta a}{2}\right)\prod_{i\in\mathcal{I}}E\left\{\exp\left(\beta W_{B,i}\gamma_i\right)\right\}\notag \\ 
		&\leq&\exp\left\{2(1+\sigma_0^2)\beta^2\sum_{i\in\mathcal{I}}\gamma_i^2- \frac{\beta a}{2}\right\}\notag \\ 
		&\leq&\exp\left\{2(1+\sigma_0^2)\beta^2\sum_{i=1}^N\gamma_i^2- \frac{\beta a}{2}\right\},\label{eq: first P}
	\end{eqnarray}
	where the first inequality holds by Cramer's inequality \citep[Corollary~3.1.5]{athreya2006measure}, the second inequality holds by Property~P2 of \citet{joag1983negative} and the fact that $\beta\gamma_i>0$ for $i\in \mathcal{I}$, the third inequality by Lemma~8.1 of \citet{geer2000empirical} and Lemma~\ref{lemma: exp W2}, the last inequality holds since $2(1+\sigma_0^2)>0$, and  $\sigma_0^2$ is defined in Lemma~\ref{lemma: exp W2}. If we set 
	$$
	\beta = \frac{a}{8(1+\sigma_0^2)\sum_{i=1}^N\gamma_i^2},
	$$
	by (\ref{eq: first P}), we have 
	\begin{equation}
		P\left(\sum_{i\in\mathcal{I}}W_{B,i}\gamma_i\geq \frac{a}{2}\right)\leq \exp\left\{- \frac{ a^2}{32(1+\sigma_0^2)\sum_{i=1}^N\gamma_i^2}\right\}.\label{eq: P part one final}
	\end{equation}
	
	Next, for any $\beta>0$, consider 
	\begin{eqnarray}
		P\left(\sum_{i\in\mathcal{I}}W_{B,i}\gamma_i\leq -\frac{a}{2}\right)&=&P\left(\sum_{i\in\mathcal{I}}W_{B,i}\tilde{\gamma}_i\geq \frac{a}{2}\right)\notag\\
		&\leq&  \exp\left(- \frac{\beta a}{2}\right)E\left\{\exp\left(\beta\sum_{i\in\mathcal{I}}W_{B,i}\tilde{\gamma}_i\right)\right\}\notag \\ 
		&\leq&\exp\left(- \frac{\beta a}{2}\right)\prod_{i\in\mathcal{I}}E\left\{\exp\left(\beta W_{B,i}\tilde{\gamma}_i\right)\right\}\notag
	\end{eqnarray}
	where $\tilde{\gamma}_i=-\gamma_i$, and the last inequality holds by Lemma~\ref{lemma: NA decreasing} since $\beta\tilde{\gamma}_i\leq 0$ for $i\in\mathcal{I}$.  Then, we can use a similar argument leading to (\ref{eq: P part one final}) to get
	\begin{equation}
		P\left(\sum_{i\in\mathcal{I}}W_{B,i}\gamma_i\leq -\frac{a}{2}\right)\leq \exp\left\{- \frac{ a^2}{32(1+\sigma_0^2)\sum_{i=1}^N\gamma_i^2}\right\}.\label{eq: P part one final 2}
	\end{equation}
	Besides, we can also get 
	\begin{eqnarray}
		P\left(\sum_{i\notin\mathcal{I}}W_{B,i}\gamma_i\geq \frac{a}{2}\right)\leq \exp\left\{- \frac{ a^2}{32(1+\sigma_0^2)\sum_{i=1}^N\gamma_i^2}\right\},\label{eq: P part one final 3}
	\end{eqnarray}
	since $\gamma_i\leq0$ for $i\notin\mathcal{I}$.
	\begin{equation}
		P\left(\sum_{i\notin\mathcal{I}}W_{B,i}\gamma_i\leq -\frac{a}{2}\right) = P\left(\sum_{i\notin\mathcal{I}}W_{B,i}\tilde{\gamma}_i\geq \frac{a}{2}\right).\notag
	\end{equation}
	Then, we can use a similar procedure as (\ref{eq: first P})--(\ref{eq: P part one final}) to verify 
	\begin{equation}
		P\left(\sum_{i\notin\mathcal{I}}W_{B,i}\gamma_i\leq -\frac{a}{2}\right)\leq \exp\left\{- \frac{ a^2}{32(1+\sigma_0^2)\sum_{i=1}^N\gamma_i^2}\right\},\label{eq: P part one final 3}
	\end{equation}
	since $\tilde{\gamma}_i\geq0$ for $i\notin\mathcal{I}$.
	
	By (\ref{eq: prob 12}) and (\ref{eq: P part one final})--(\ref{eq: P part one final 3}), we have proved Lemma~\ref{lemma: Hoeffding} with $C_{B,3}=4$ and $C_{B,4}=32(1+\sigma_0^2)$.
\end{proof}

Let $\mathcal{H}_1=\{u\in\mathcal{H}:\lVert u\rVert_\mathcal{H}=1\}$. By Lemma~S7 of \citet{wong2018kernel}, there exists a constant $R$ such that 
\begin{equation}
	\sup_{u\in\mathcal{H}_1}\lVert u\rVert_\infty\leq R.\label{eq: h1 uniformbound}
\end{equation} Denote $H_\infty(\epsilon,\mathcal{H}_1)$ to be the uniform entropy for $\mathcal{H}_1$; see Definition~2.3 of \citet{geer2000empirical} for details about the uniform entropy.  

\begin{lem} \label{lemma: Entropy inequality}
	Suppose Assumption~A\ref{ass:ratio} holds. There exists a constant $C_{B,5}$, such that for any $n_B\leq N$ and  $S\geq S_0$, we have 
	\begin{eqnarray}
		\sum_{s=S_0}^S2^{-s}RH_\infty^{1/2}(2^{-s}(n_BN^{-1})^{1/2}R,\mathcal{H}_1)\leq C_{B,5}N^{1/2}n_B^{-1/2},
	\end{eqnarray}
	where $S_0=\max\{s:R\leq 2^{-s}n_BN^{-1}\leq 2R\}$. 
\end{lem}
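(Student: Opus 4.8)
The plan is to reduce the claimed bound to a standard entropy estimate for an $l$-th order Sobolev RKHS. First I would recall that the tensor product of $d$ copies of $W^{l,2}[0,1]$ has eigenvalues decaying like $\mu_j \asymp j^{-2l/d}$ (up to logarithmic factors when $d>1$; see Proposition~12.31 of \citet{wainwright2019high}), so that the unit ball $\mathcal{H}_1$ of $\mathcal{H}$ has uniform (metric) entropy
\begin{equation*}
H_\infty(\epsilon,\mathcal{H}_1) \lesssim \epsilon^{-d/l}
\end{equation*}
for all $\epsilon \in (0,1)$. This is exactly the regime where Assumption~A\ref{ass:ratio}, namely $d/l<2$, becomes relevant: it guarantees that the exponent $d/l$ in the entropy is strictly less than $2$, which is the borderline condition needed for the Dudley-type sum over dyadic scales to converge. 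The boundedness $\sup_{u\in\mathcal{H}_1}\lVert u\rVert_\infty \le R$ from (\ref{eq: h1 uniformbound}) is what lets us use the \emph{uniform} entropy here rather than an $L^2(P_N)$-entropy, so the bound is deterministic in the $\bx_i$'s.

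Next I would substitute the entropy bound into the sum. Writing $\epsilon_s = 2^{-s}(n_BN^{-1})^{1/2}R$, each summand is
\begin{equation*}
2^{-s}R\, H_\infty^{1/2}(\epsilon_s,\mathcal{H}_1) \lesssim 2^{-s}R \cdot \epsilon_s^{-d/(2l)} = 2^{-s}R \cdot \bigl(2^{-s}(n_BN^{-1})^{1/2}R\bigr)^{-d/(2l)} \asymp R^{1-d/(2l)} (n_BN^{-1})^{-d/(4l)}\, 2^{-s(1-d/(2l))}.
\end{equation*}
Since $d/l<2$ we have $1 - d/(2l) > 0$, so $\sum_{s\ge S_0} 2^{-s(1-d/(2l))}$ is a convergent geometric series, bounded by a constant multiple of $2^{-S_0(1-d/(2l))}$. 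Now I would invoke the definition $S_0 = \max\{s : R \le 2^{-s} n_B N^{-1} \le 2R\}$, which gives $2^{-S_0} \asymp R N n_B^{-1}$. Plugging this in, the whole sum is bounded by a constant times
\begin{equation*}
R^{1-d/(2l)} (n_BN^{-1})^{-d/(4l)} \bigl(R N n_B^{-1}\bigr)^{1-d/(2l)} \asymp R^{2-d/l}\, N^{1-d/(4l)}\, n_B^{-1+d/(4l)}.
\end{equation*}
At this point the exponents on $N$ and $n_B$ are not obviously $1/2$ and $-1/2$, so I would need to be careful: the statement as written claims the clean bound $C_{B,5} N^{1/2} n_B^{-1/2}$, which suggests that in the actual argument one does \emph{not} free-range $\epsilon$ down to $0$ but rather truncates the dyadic chain at a scale comparable to $\lVert\cdot\rVert_2$-resolution $(n/N)^{1/2}R$ or uses a fixed finite range of $s$; alternatively the intended reading absorbs the $R$-powers and the slack exponents into $C_{B,5}$ under $n_B \le N$, using $N^{1-d/(4l)} n_B^{-1+d/(4l)} = (N/n_B)^{1-d/(4l)} \cdot 1 \le (N/n_B)^{1} \cdot (n_B/N)^{d/(4l)}$, and then bounding crudely. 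I would follow whichever normalization the Supplement's indexing of $S_0$ and $\mathcal{H}_1$ actually forces; the key point is that convergence of the geometric series is the only non-trivial input and that is precisely what $d/l<2$ buys.

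The main obstacle I anticipate is bookkeeping rather than ideas: getting the powers of $(n_B/N)$ and of $R$ to line up with the stated $N^{1/2} n_B^{-1/2}$ requires matching the exact scaling conventions in the definition of $\lVert\cdot\rVert_2$ (which carries the factor $n^{-1} = (n_A+n_B)^{-1}$) and of $S_0$, and deciding whether the sum telescopes from $S_0$ upward or is genuinely two-sided. A secondary subtlety is the logarithmic correction to the eigenvalue decay when $d>1$; I would absorb it into the constant $C_{B,5}$ since $d/l<2$ leaves a strict gap. Once the normalizations are pinned down, the rest is the geometric-series computation sketched above.
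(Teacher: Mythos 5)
Your overall strategy (polynomial sup-norm entropy bound plus a dyadic/geometric-series summation) is the same computational content as the paper's argument, but your execution stops short of the stated inequality, and the hedge at the end does not close the gap. Taking the definition $2^{-S_0}\asymp RNn_B^{-1}$ at face value and bounding every summand by $C_{\mathcal{H}}^{1/2}\,2^{-s}R\,\epsilon_s^{-d/(2l)}$, you correctly arrive at a bound of order $R^{2-d/l}(N/n_B)^{1-d/(4l)}$. Since $d/l<2$ gives $1-d/(4l)>1/2$, and $N/n_B\to\infty$ under Assumption~A\ref{ass: prob sample inc prob}, this is \emph{strictly weaker} than $C_{B,5}(N/n_B)^{1/2}$; no absorption of $R$-powers or "crude bounding" can convert $(N/n_b)^{1-d/(4l)}$ into $(N/n_B)^{1/2}$, and your displayed manipulation $N^{1-d/(4l)}n_B^{-1+d/(4l)}\le (N/n_B)(n_B/N)^{d/(4l)}$ is just a restatement, not an improvement. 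So, as written, the proposal does not prove the lemma.

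The missing ingredient is the treatment of the coarsest scales. The paper's proof factors out $(Nn_B^{-1})^{1/2}$ so that each summand becomes $\epsilon_s H_\infty^{1/2}(\epsilon_s,\mathcal{H}_1)$ with $\epsilon_s=2^{-s}(n_BN^{-1})^{1/2}R$, and then bounds the dyadic sum by (twice) the Dudley-type integral $\int_0^{2R}H_\infty^{1/2}(\epsilon,\mathcal{H}_1)\,\mathrm{d}\epsilon$, which is a \emph{finite constant} depending only on $R$, $d$, $l$ because $d/(2l)<1$ (using the bound $H_\infty(\epsilon,\mathcal{H}_1)\le C_{\mathcal{H}}\epsilon^{-d/l}$ from Lemma~S6 of \citet{wong2018kernel}, so your worry about logarithmic eigenvalue corrections is moot). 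This yields exactly $C_{B,5}(Nn_B^{-1})^{1/2}$ with $C_{B,5}=2^{2-d/(2l)}C_{\mathcal{H}}^{1/2}R^{1-d/(2l)}\{1-d/(2l)\}^{-1}$. Equivalently, in your geometric-series language: because $\sup_{u\in\mathcal{H}_1}\lVert u\rVert_\infty\le R$ by (\ref{eq: h1 uniformbound}), a single sup-norm ball of radius $R$ covers $\mathcal{H}_1$, so $H_\infty(\epsilon,\mathcal{H}_1)=0$ for all $\epsilon\ge R$; hence every term with $\epsilon_s\ge R$ vanishes, and the sum effectively starts at the scale where $\epsilon_s\asymp R$, i.e.\ $2^{-s}\asymp(N/n_B)^{1/2}$, not at $2^{-S_0}\asymp RN/n_B$. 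Your loss of the exponent comes precisely from applying the polynomial entropy bound to those coarse scales where the entropy is actually zero; restarting the geometric series at $2^{-s}\asymp(N/n_B)^{1/2}$ gives $(N/n_B)^{d/(4l)}\cdot(N/n_B)^{1/2-d/(4l)}=(N/n_B)^{1/2}$, matching the claim.
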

\begin{proof}[Proof of Lemma~\ref{lemma: Entropy inequality}]
	By Assumption~A\ref{ass:ratio} and Lemma~S6 of \citet{wong2018kernel}, there exists a constant $C_{\mathcal{H}}$ such that for $\epsilon>0$,
	\begin{equation}\label{eq: entropy bound infty}
		H_\infty(\epsilon,\mathcal{H}_1)\leq C_{\mathcal{H}}\epsilon^{-d/l}.
	\end{equation}
	
	Consider 
	\begin{eqnarray}
		&&\sum_{s=S_0}^S2^{-s}RH_\infty^{1/2}(2^{-s}(n_BN^{-1})^{1/2}R,\mathcal{H}_1)\notag \\
		&=&(Nn_B^{-1})^{1/2}\sum_{s=S_0}^S2^{-s}(n_BN^{-1})^{1/2}RH_\infty^{1/2}(2^{-s}(n_BN^{-1})^{1/2}R,\mathcal{H}_1)\notag\\
		&\leq&2(Nn_B^{-1})^{1/2}\int_0^{2R}H_\infty^{1/2}(\epsilon,\mathcal{H}_1)\mbox{d}\epsilon\notag \\ 
		&\leq&2(Nn_B^{-1})^{1/2}\frac{C_{\mathcal{H}}^{1/2}(2R)^{1-d/(2l)}}{1-d/(2l)}\notag \\ 
		&=&C_{B,5}(Nn_B^{-1})^{1/2},\label{eq: entropy result lemma 4}
	\end{eqnarray}
	where the second inequality holds by (\ref{eq: entropy bound infty}) and $C_{B,5} = 2^{2-d/(2l)}C_{\mathcal{H}}^{1/2}R^{1-d/(2l)}\{1-d/(2l)\}^{-1}$. Thus, we have proved Lemma~\ref{lemma: Entropy inequality} by (\ref{eq: entropy result lemma 4}).
\end{proof}

\begin{lem}\label{lemma: B part exponential inequality}
	Suppose Assumption~A\ref{ass:ratio} and Assumption~A\ref{ass: prob sample inc prob} hold. Then,  for all $0\leq \epsilon<\delta$ and $K>1$, there exists $N_0=N_0(\delta,\epsilon)$, such that for $N\geq N_0$, we have
	\begin{equation*}
		P\left[\left\{\sup_{u\in\mathcal{H}_1}\left\lvert\frac{1}{n_B}\sum_{i=1}^NW_{B,i}u(\bx_i)\right\rvert\geq \delta\right\}\bigcap\left\{\left\lvert\frac{1}{n_B}\sum_{i=1}^NW_{B,i}\right\rvert\leq K\right\}\right]\leq C_{B,6}\exp\left\{-\frac{n_B(\delta-\epsilon)^2}{C_{B,6}R^2}\right\},
	\end{equation*}
	where  $C_{B,6}$ only depends  on $\sigma_0^2$ defined in Lemma~\ref{lemma: exp W2}.
\end{lem}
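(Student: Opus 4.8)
The plan is to prove Lemma~\ref{lemma: B part exponential inequality} by a chaining argument for the design-based linear process $u \mapsto Z_N(u) := n_B^{-1}\sum_{i=1}^N W_{B,i}u(\bx_i)$ indexed by $\mathcal{H}_1$, assembled from ingredients already in hand: the sup-norm bound $\sup_{u\in\mathcal{H}_1}\|u\|_\infty \le R$ from (\ref{eq: h1 uniformbound}); the polynomial metric-entropy bound $H_\infty(\epsilon,\mathcal{H}_1)\le C\epsilon^{-d/l}$ with $d/l<2$, valid under Assumption~A\ref{ass:ratio} (Lemma~S6 of \citet{wong2018kernel}); the entropy-integral estimate of Lemma~\ref{lemma: Entropy inequality}; and the negatively-associated exponential inequality of Lemma~\ref{lemma: Hoeffding}. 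Since $Z_N$ is linear, the statement reduces to a maximal inequality: on the event $\mathcal{E}_K := \{|n_B^{-1}\sum_i W_{B,i}| \le K\}$, bound the probability that $\sup_{u\in\mathcal{H}_1}|Z_N(u)|$ exceeds $\delta$.

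First I would discretize using the scales $\delta_s = 2^{-s}(n_B/N)^{1/2}R$ for $s \ge S_0$, exactly as in Lemma~\ref{lemma: Entropy inequality}, fixing for each $s$ a minimal sup-norm $\delta_s$-net $\mathcal{N}_s$ of $\mathcal{H}_1$ with $\log|\mathcal{N}_s| \le H_\infty(\delta_s,\mathcal{H}_1)$; since $\delta_{S_0}\ge R \ge \sup_{u\in\mathcal{H}_1}\|u\|_\infty$ for $N$ large (because $(N/n_B)^{1/2}\to\infty$), the coarsest net may be taken to be $\{0\}$. For $u\in\mathcal{H}_1$ write $u = \sum_{s>S_0}(\pi_s u - \pi_{s-1}u) + (u-\pi_S u)$, where $\pi_s u\in\mathcal{N}_s$ is nearest to $u$ and $S=S(N)$ is a fixed fine level. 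The increments $v := \pi_s u - \pi_{s-1}u$ obey $\|v\|_\infty\le 3\delta_s$, hence $\sum_i v(\bx_i)^2 \le 9N\delta_s^2 = 9\cdot 2^{-2s} n_B R^2$, and range over a set of cardinality at most $\exp\{2H_\infty(\delta_s)\}$; the residual obeys $|Z_N(u-\pi_S u)| \le \|u-\pi_S u\|_\infty\, n_B^{-1}\sum_i|W_{B,i}|$, and this is exactly where the intersection with $\mathcal{E}_K$ enters: on $\mathcal{E}_K$ one has $n_B^{-1}\sum_i|W_{B,i}| = N^{-1}(\widehat N + N - 2n_B)\le K+2$ with $\widehat N = \sum_{i\in B}\pi_{B,i}^{-1}$, so the residual is at most $(K+2)\delta_S = o(1)$, below $\delta-\epsilon$ for $N$ large.

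Next, for each increment $v$ at level $s$ I would apply Lemma~\ref{lemma: Hoeffding} in Bernstein form, with variance proxy $\asymp (n_B/N)\sum_i v(\bx_i)^2$; this is legitimate because the true variance of $W_{B,i}$ is of order $n_B/N$ — much smaller than the squared sup-bound underlying Lemma~\ref{lemma: exp W2} — and it is what makes the final exponent scale like $n_B$ rather than $n_B^2/N$. It gives $P(|Z_N(v)|\ge a) \le C\exp\{-N a^2 2^{2s}/(cR^2)\}$. Union-bounding over the $\exp\{2H_\infty(\delta_s)\}$ increments and choosing thresholds $a_s$ via $N a_s^2 2^{2s}/(cR^2) = 4H_\infty(\delta_s) + \eta_s$, for a geometrically decaying budget sequence $\eta_s$ whose leading term is a small constant multiple of $n_B(\delta-\epsilon)^2/R^2$, yields $P(\max_v|Z_N(v)|\ge a_s) \le C\exp\{-2H_\infty(\delta_s)-\eta_s\}$ and $a_s \le C' R N^{-1/2}2^{-s}\{H_\infty^{1/2}(\delta_s)+\eta_s^{1/2}\}$. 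Summing, $\sum_{s>S_0}a_s \le C'RN^{-1/2}\sum_s 2^{-s}H_\infty^{1/2}(\delta_s) + (\text{budget part})$; by Lemma~\ref{lemma: Entropy inequality} the first term is $\le C''C_{B,5}\,n_B^{-1/2} = o(1)$, and the budget part plus the Step-2 residual is arranged to be $<\delta-\epsilon$ for $N\ge N_0(\delta,\epsilon)$. Then on $\mathcal{E}_K\setminus\bigcup_s\{\max_v|Z_N(v)|\ge a_s\}$ we have $\sup_u|Z_N(u)|<\delta$, so a union bound gives $P(\{\sup_u|Z_N(u)|\ge\delta\}\cap\mathcal{E}_K) \le \sum_{s>S_0}C\exp\{-2H_\infty(\delta_s)-\eta_s\}$; since $H_\infty(\delta_s)$ grows geometrically in $s$, this collapses to a constant multiple of its leading term $\asymp C\exp\{-c'n_B(\delta-\epsilon)^2/R^2\}$, which is the claimed bound after relabeling constants.

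The main obstacle is the bookkeeping in that last step: picking the net scales and the per-level budget $\eta_s$ so that simultaneously (i) the thresholds $a_s$ sum to strictly less than $\delta-\epsilon$ — which forces control of the entropy integral $\sum_s 2^{-s}RH_\infty^{1/2}(\delta_s)$, precisely what Assumption~A\ref{ass:ratio} ($d/l<2$, so that series converges) and Assumption~A\ref{ass: prob sample inc prob} (the size and inclusion-probability relations between $n_B$ and $N$) deliver — and (ii) the tail sum $\sum_s\exp\{-2H_\infty(\delta_s)-\eta_s\}$ collapses to the single clean exponential $C_{B,6}\exp\{-n_B(\delta-\epsilon)^2/(C_{B,6}R^2)\}$ with the correct power of $n_B$. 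Getting that power right is where one must use the true order $n_B/N$ of $E[W_{B,i}^2]$ rather than the loose constant of Lemma~\ref{lemma: exp W2}; everything else is a routine rejective-sampling adaptation of van de Geer--type chaining, with negative association (Lemma~\ref{lemma: NA decreasing} and Property~P2 of \citet{joag1983negative}) replacing independence.
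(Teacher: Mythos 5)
Your chaining architecture is essentially the paper's: the same dyadic scales $2^{-s}(n_B N^{-1})^{1/2}R$, the same trivial coarsest net (the paper sets $u_\theta^{S_0}=0$), the same use of the event $\{\lvert n_B^{-1}\sum_i W_{B,i}\rvert\le K\}$ to control the finest-level remainder (the paper bounds it by $(K+2)\epsilon/(3K)\le\epsilon$ after choosing $S$ so that $\lVert u_\theta-u_\theta^S\rVert_\infty\le \epsilon/(3K)$), the same union bound over increments with $\lVert u_\theta^s-u_\theta^{s-1}\rVert_\infty\le 3\cdot 2^{-s}(n_BN^{-1})^{1/2}R$, the same entropy-integral input (Lemma~\ref{lemma: Entropy inequality}) combined with Assumption~A\ref{ass: prob sample inc prob} through the condition (\ref{eq: upper bound}), and the same collapse of the level-wise tail sum following Lemma~3.2 of \citet{geer2000empirical}.

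The one substantive deviation is a genuine gap: your per-increment bound invokes Lemma~\ref{lemma: Hoeffding} ``in Bernstein form, with variance proxy $\asymp (n_B/N)\sum_i v(\bx_i)^2$.'' No such inequality is available in the paper. Lemma~\ref{lemma: Hoeffding} is proved from the constant-order bound $E\{\exp(W_{B,i}^2)\}-1\le\sigma_0^2$ of Lemma~\ref{lemma: exp W2} together with negative association (Property~P2 of \citet{joag1983negative} and Lemma~\ref{lemma: NA decreasing}) and Lemma~8.1 of \citet{geer2000empirical}, so its exponent has $\sum_i\gamma_i^2$ in the denominator, not $(n_B/N)\sum_i\gamma_i^2$; upgrading to the true variance $\mathrm{var}(W_{B,i})\asymp n_B/N$ would require proving a Bernstein-type inequality for negatively associated indicators, including a check that the level-$s$ thresholds stay in the sub-Gaussian (rather than the range-driven, sub-exponential) regime, since $\lvert W_{B,i}\rvert$ is of constant order while its standard deviation is only $(n_B/N)^{1/2}$. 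You present this step as a one-line assertion, and moreover your motivation for it (``it is what makes the final exponent scale like $n_B$ rather than $n_B^2/N$'') is not how the paper obtains the stated rate: because the nets are taken at scale $2^{-s}(n_BN^{-1})^{1/2}R$, one has $\sum_i\{u_\theta^s(\bx_i)-u_\theta^{s-1}(\bx_i)\}^2\le 9\cdot 2^{-2s}n_BR^2$, so plugging the plain Lemma~\ref{lemma: Hoeffding} into thresholds $\eta_s(\delta-\epsilon)$ with fractions $\eta_s$ chosen as the maximum of an entropy-driven term and a geometric term (so that $\sum_s\eta_s\le1$, guaranteed by (\ref{eq: upper bound}), i.e.\ by Lemma~\ref{lemma: Entropy inequality} and $N^{1/2}n_B^{-1}=o(1)$) already yields an exponent of the claimed form without any variance refinement. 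So either drop the Bernstein-form claim and follow the fixed-fraction allocation of the paper, or supply and prove the sharper concentration inequality for negatively associated variables that your threshold-inversion bookkeeping actually relies on; as written, that step does not follow from the lemmas you cite.
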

\begin{proof}[Proof of Lemma~\ref{lemma: B part exponential inequality}]

	By (\ref{eq: entropy bound infty}), there exists a finite $N_s$ such that
	$\{u_j^s:j=1,\ldots,N_s\}$ is  a minimal $\{2^{-s}(n_BN^{-1})^{1/2}R\}$-covering set of $\mathcal{H}_1$ for $s=S_0,S_0+1,\ldots,S$ in terms of the $\lVert\cdot\rVert_\infty$ norm, where $S_0=\max\{s:R\leq 2^{-s}n_BN^{-1}\leq 2R\}$ and $S=\min\{s\geq1:2^{-s}(n_BN^{-1})^{1/2}R\leq \epsilon/(2K)\}$.

	By Assumption~A\ref{ass: prob sample inc prob} and Lemma~\ref{lemma: Entropy inequality}, there exists $N_0=N_0(\delta,\epsilon)$, such that when $N\geq N_0$, 
	\begin{equation}
		n_B^{1/2}(\delta-\epsilon)\geq \left\{12C_{B,4}^{1/2}R\sum_{s=S_0+1}^S2^{-s}H_{\infty}^{1/2}(2^{-s}(n_BN^{-1})^{1/2}R,\mathcal{H}_1)\right\}\vee \left\{(1152\log2)^{1/2}C_{B,4}^{1/2}R\right\},\label{eq: upper bound}
	\end{equation}
	where $a\vee b=\max\{a,b\}$.
	
	We adopt the notation convenience from Section~3.2 of \citet{geer2000empirical} and index functions in $\mathcal{H}_1$ by  $\Theta$: $\mathcal{H}_1 = \{u_\theta:\theta\in\Theta\}$. Then, for any $u_\theta\in\mathcal{H}_1$, there exists $u_\theta^S$ such that $\lVert u_\theta-u_\theta^S\rVert_\infty\leq \epsilon/(3K)$. Thus, we have 
	\begin{eqnarray}
		&&\left\lvert \frac{1}{n_B}\sum_{i=1}^{N} W_{B,i}\left\{u_\theta(\bx_i)-u_\theta^S(\bx_i)\right\} \right\rvert\notag\\
		&\leq& \frac{1}{n_B}\left\lvert\sum_{i=1}^N\delta_{B,i}\pi^{-1}_{B,i}\frac{n_B}{N}\left\{u_\theta(\bx_i)-u_\theta^S(\bx_i)\right\}\right\rvert + \frac{1}{N}\left\lvert\sum_{i=1}^N\left\{u_\theta(\bx_i)-u_\theta^S(\bx_i)\right\}\right\rvert\notag \\ 
		&\leq& \left\{\frac{1}{n_B}\left(\sum_{i=1}^N\delta_{B,i}\pi^{-1}_{B,i}\frac{n_B}{N}\right)+1\right\}\max_{i=1,\ldots,N}\left\lvert\left\{u_\theta(\bx_i)-u_\theta^S(\bx_i)\right\}\right\rvert\notag \\ 
		&\leq&\frac{K+2}{3K}\epsilon\leq \epsilon
	\end{eqnarray} on the event $\{n_B^{-1}\lvert\sum_{i=1}^{N} W_{B,i}\rvert\leq K\}$, where the first inequality holds by the definition of $W_{B,i}$ in Lemma~\ref{lemma: exp W2}, the third inequality is due to the fact that the event $\{n_B^{-1}\lvert\sum_{i=1}^{N} W_{B,i}\rvert\leq K\}$ implies $\{n_B^{-1}\sum_{i=1}^N\delta_{B,i}\pi^{-1}_{B,i}n_BN^{-1}\leq K+1\}$, and the last inequality holds since $K> 1$.
	
	Thus, it is enough to show the exponential inequality for 
	\begin{equation}\label{eq: remaining exp inequality}
		P\left\{\sup_{\theta\in\Theta}\left\lvert\frac{1}{n_B}\sum_{i=1}^{N}W_{B,i}u_\theta^S(\bx_i)\right\rvert\geq \delta - \epsilon\right\}.
	\end{equation}
	Define $u_\theta^{S_0}=0$ for $u\in\mathcal{H}_1$, and we have $u_\theta^S = \sum_{s=S_0+1}^S(u_\theta^s-u_\theta^{s-1})$. For any sequence $\{\eta_s:s=S_0+1,\ldots,S\}$ satisfying $\sum_{s=S_0+1}^S\eta_s\leq 1$, we have 
	\begin{eqnarray}
		&&P\left[\sup_{\theta\in\Theta}\left\lvert\frac{1}{n_B}\sum_{i=1}^N\sum_{s=S_0+1}^SW_{B,i}\left\{u_\theta^{s}(\bx_i)-u_\theta^{s-1}(\bx_i)\right\} \right\rvert\geq \delta-\epsilon\right]\notag \\ 
		&\leq& \sum_{s=S_0+1}^SP\left[\sup_{\theta\in\Theta}\left\lvert\frac{1}{n_B}\sum_{i=1}^NW_{B,i}\left\{u_\theta^{s}(\bx_i)-u_\theta^{s-1}(\bx_i)\right\}\right\rvert\geq \eta_s(\delta-\epsilon)\right]\notag \\ 
		&\leq&\sum_{s=S_0+1}^SC_{B,3}\exp\left\{2H_{\infty}(2^{-s}(n_BN^{-1})^{1/2}R,\mathcal{H}_1)-\frac{n_B\eta_s^2(\delta-\epsilon)^2}{9C_{B,4}2^{-2s}R^2}\right\},
	\end{eqnarray}
	where the last inequality holds by Lemma~\ref{lemma: Hoeffding} and 
	\begin{eqnarray}
		\lVert u_\theta^s-u_\theta^{s-1}\rVert_\infty
		&\leq& \lVert u_\theta^s-u_\theta\rVert_\infty + \lVert u_\theta^{s-1}-u_\theta\rVert_\infty\notag \\ 
		&\leq& 2^{-s}(n_BN^{-1})^{1/2}R +2^{-s+1}(n_BN^{-1})^{1/2}R \leq 3\{2^{-s}(n_BN^{-1})^{1/2}R\}.\notag
	\end{eqnarray}

	Now, we consider $$\eta_s = \frac{6RC_{B,4}^{1/2}2^{-s}H_{\infty}^{1/2}(2^{-s}(n_BN^{-1})^{1/2}R,\mathcal{H}_1)}{n_B^{1/2}(\delta-\epsilon)}\vee \frac{2^{-s+S_0}(s-S_0)^{1/2}}{8}.$$
	Then, by (\ref{eq: upper bound}) and a similar argument in the proof of Lemma~3.2 of \citet{geer2000empirical}, we can show that 
	$$
	\sum_{s=S_0+1}^S\eta_s \leq1.
	$$
	Thus, we can show that 
	\begin{eqnarray}
		&&\sum_{s=S_0+1}^SC_{B,3}\exp\left\{2H_{\infty}(2^{-s}(n_BN^{-1})^{1/2}R,\mathcal{H}_1)-\frac{n_B\eta_s^2(\delta-\epsilon)^2}{9C_{B,4}2^{-2s}R^2}\right\}\notag \\ 
		&&\leq \sum_{s=S_0+1}^SC_{B,3}\exp\left\{-\frac{n_B\eta_s^2(\delta-\epsilon)^2}{18C_{B,4}2^{-2s}R^2}\right\}.\notag
	\end{eqnarray}
	
	Thus, we can use a similar argument in the proof of Lemma~3.2 of \citet{geer2000empirical} to conclude the proof of Lemma~\ref{lemma: B part exponential inequality} by (\ref{eq: upper bound}) with $C_{B,6}=\max\{2C_{B,3},C_{B,4}\}$.
\end{proof}

By Lemma~\ref{lemma: Hoeffding} and setting $K$ sufficient large, we have 
\begin{equation}
	P\left(\left\lvert\frac{1}{n_B}\sum_{i=1}^NW_{B,i}\right\rvert> K\right)    \leq C_{B,6}\exp\left\{-\frac{n_B(\delta-\epsilon)^2}{C_{B,6}R^2}\right\},\label{eq: K part} 
\end{equation} 
where the related quantities are defined in Lemma~\ref{lemma: B part exponential inequality}.
Thus, by Lemma~\ref{lemma: B part exponential inequality} and (\ref{eq: K part}), we have 
\begin{eqnarray}
	&&P\left\{\sup_{u\in\mathcal{H}_1}\left\lvert\frac{1}{n_B}\sum_{i=1}^NW_{B,i}u(\bx_i)\right\rvert\geq \delta\right\}\notag \\ 
	&\leq&P\left[\left\{\sup_{u\in\mathcal{H}_1}\left\lvert\frac{1}{n_B}\sum_{i=1}^NW_{B,i}u(\bx_i)\right\rvert\geq \delta\right\}\bigcap\left\{\left\lvert\frac{1}{n_B}\sum_{i=1}^NW_{B,i}\right\rvert\leq K\right\}\right] + P\left(\left\lvert\frac{1}{n_B}\sum_{i=1}^NW_{B,i}\right\rvert> K\right) \notag \\ 
	&\leq& 2C_{B,6}\exp\left\{-\frac{n_B(\delta-\epsilon)^2}{C_{B,6}R^2}\right\}.
\end{eqnarray}
Thus, take $C_{B,7}=2C_{B,6}$ and $\epsilon$ sufficiently small, we conclude  
\begin{equation}\label{eq: unconditional}
	P\left\{\sum_{u\in\mathcal{H}_1}\left\lvert\frac{1}{n_B}\sum_{i=1}^NW_{B,i}u(\bx_i)\right\rvert\geq \delta\right\}\leq C_{B,7}\exp\left\{-\frac{n_B\delta^2}{C_{B,7}R^2}\right\}.
\end{equation}

Denote
\begin{eqnarray}
	S_{N,A}(\bgamma,u) &=& \left( N^{-1}\sum_{i=1}^{N} \left[\delta_{A,i}\left\{ 1+ \left( \frac{N}{n_A} -1\right)r_i \right\}-1\right]u(\bx_i) \right)^2,\label{eq:S2A}
\end{eqnarray}
and
\begin{eqnarray}
	S_{N,B}(u) &=& \left\{ N^{-1}\sum_{i=1}^{N} (\delta_{B,i} \pi^{-1}_{B,i}-1)u(\bx_i)\right\}^2.\label{eq:S2B}
\end{eqnarray}

\begin{proof}[Proof of Lemma~\ref{lemma: R S2}]
	By the fact that 
	$S_N(\bgamma^{\star},u)\leq 2S_{N,A}(\bgamma^{\star},u)+2S_{N,B}(u),\notag\label{eq: sup less than sup p sup}$
	we have 
	\begin{eqnarray}
		&&P\left\{\sup_{u\in\widetilde{\mathcal{H}}_N}\frac{n_BS_N(\bgamma^{\star},u)}{\lVert u\rVert^{d/l}_{\mathcal{H}}}\ge T^2\right\} \notag \\
		&\leq&P\left\{\sup_{u\in\widetilde{\mathcal{H}}_N}\frac{n_BS_{N,A}(\bgamma^{\star},u)}{\lVert u\rVert^{d/l}_{\mathcal{H}}}\ge \frac{T^2}{4}\right\}+
		P\left\{\sup_{u\in\widetilde{\mathcal{H}}_N}\frac{n_BS_{N,B}(u)}{\lVert u\rVert^{d/l}_{\mathcal{H}}}\ge \frac{T^2}{4}\right\},\label{eq: P less that P p P}
	\end{eqnarray}
	where $S_{N,A}(\bgamma,u)$ and $S_{N,B}(u)$ are in (\ref{eq:S2A}) and (\ref{eq:S2B}).
	By Assumptions~A\ref{ass: A ind}--A\ref{ass:ratio}, following a similar argument of Lemma~S1 of \citet{wong2018kernel}, we can show 
	\begin{equation}
		P\left\{\sup_{u\in\widetilde{\mathcal{H}}_N}\frac{NS_{N,A}(\bgamma^{\star},u)}{\lVert u\rVert^{d/l}_{\mathcal{H}}}\ge \frac{T^2}{4}\right\}\leq C_{A,3}\exp\left(-\frac{16T^2}{C_{A,3}^2}\right),\label{eq: part one P}
	\end{equation}
	where $C_{A,3}$ is a constant.  
	Since $n_B\leq N$, we have 
	\begin{eqnarray}
		P\left\{\sup_{u\in\widetilde{\mathcal{H}}_N}\frac{n_BS_{N,A}(\br^{\star},u)}{\lVert u\rVert^{d/l}_{\mathcal{H}}}\ge \frac{T^2}{4}\right\}&\leq&P\left\{\sup_{u\in\widetilde{\mathcal{H}}_N}\frac{NS_{N,A}(\br^{\star},u)}{\lVert u\rVert^{d/l}_{\mathcal{H}}}\ge \frac{T^2}{4}\right\}.\label{eq: part one P2}
	\end{eqnarray}
	By (\ref{eq: part one P})--(\ref{eq: part one P2}), we have 
	\begin{equation}
		P\left\{\sup_{u\in\widetilde{\mathcal{H}}_N}\frac{n_BS_{N,A}(\br^{\star},u)}{\lVert u\rVert^{d/l}_{\mathcal{H}}}\ge \frac{T^2}{4}\right\}\leq C_{A,3}\exp\left(-\frac{16T^2}{C_{A,3}^2}\right).\label{eq: part one Pfinal}
	\end{equation}

	Next, we investigate the second part on the right hand side of (\ref{eq: P less that P p P}). By  
	\begin{eqnarray}
		S_{N,B}(u) &=& \left[ \frac{1}{n_B}\sum_{i=1}^{N} W_{B,i}u(\bx_i)\right]^2,\label{eq:S2Bnon}
	\end{eqnarray}
	where $W_{B,i}$ is defined in Lemma~\ref{lemma: exp W2}.

	By (\ref{eq: unconditional}) and a similar proof for Lemma~8.4 of \citet{geer2000empirical}, there exists a constant $C_{B,8}$ such that 
	\begin{equation}
		P\left\{\sup_{u\in\widetilde{\mathcal{H}}_N}\frac{n_BS_{N,B}(u)}{\lVert u\rVert^{d/l}_{\mathcal{H}}}\ge \frac{T^2}{4}\right\}\leq C_{B,8}\exp\left(-\frac{16T^2}{C_{B,8}^2}\right).\label{eq: second part P}
	\end{equation}
	By (\ref{eq: part one Pfinal}) and (\ref{eq: second part P}), we have validated Lemma~\ref{lemma: R S2} by setting $c=C_{A,3}+C
	_{B,8}$.
\end{proof}

\section{Proof of Theorem~\ref{theorem: convergence rate of proposed estimator}}\label{supp: proof theorem convergence rate}
Recall $n=n_A+n_B$, and $\lVert u\rVert_2^2 = n^{-1}\sum_{i=1}^n u(\bx_i)^2$. Notice that we have assumed $\lVert u\rVert_{\mathcal{H}}<\infty$ for $u\in\mathcal{H}$.
\begin{lem}\label{eq: m bounded}
	Suppose Assumption~A\ref{ass:ratio} holds. Then, for $u\in\mathcal{H}$, we have 
	\begin{equation*}
		E(\lVert u\rVert_2^2)<\infty,\quad E(\lVert u\rVert_2^4)<\infty.
	\end{equation*}
\end{lem}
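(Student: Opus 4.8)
The key input is the uniform sup-norm bound on the unit ball of the RKHS recorded in (\ref{eq: h1 uniformbound}): under Assumption~A\ref{ass:ratio} there is a finite constant $R$, depending only on $\mathcal{H}$, with $\sup_{u\in\mathcal{H}_1}\lVert u\rVert_\infty\le R$, where $\mathcal{H}_1=\{u\in\mathcal{H}:\lVert u\rVert_{\mathcal{H}}=1\}$. The plan is simply to turn this into a deterministic pointwise bound on $u(\bx_i)$ and then read off the two moment bounds; no probabilistic argument is in fact needed, because the resulting bound does not depend on the realized sample.

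First I would dispose of the trivial case $\lVert u\rVert_{\mathcal{H}}=0$, in which $u\equiv0$ and the claim is immediate. For $u\in\mathcal{H}$ with $0<\lVert u\rVert_{\mathcal{H}}<\infty$, the function $u/\lVert u\rVert_{\mathcal{H}}$ lies in $\mathcal{H}_1$, so (\ref{eq: h1 uniformbound}) gives $\lVert u\rVert_\infty\le R\,\lVert u\rVert_{\mathcal{H}}$. In particular $\sup_{\bx\in\mathcal{X}}\lvert u(\bx)\rvert$ is a finite constant once $u$ is fixed, so $\lvert u(\bx_i)\rvert\le R\lVert u\rVert_{\mathcal{H}}$ for every $i$, regardless of the realized auxiliary vectors and sampling indicators.

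Next I would bound $\lVert u\rVert_2^2$ directly. Since $\lVert u\rVert_2^2=n^{-1}\sum_{i=1}^n u(\bx_i)^2$ is an average of the $n$ nonnegative numbers $u(\bx_i)^2$ over the distinct elements of $A\cup B$ (with $n\ge1$), it is at most $\max_{1\le i\le n}u(\bx_i)^2\le\lVert u\rVert_\infty^2\le R^2\lVert u\rVert_{\mathcal{H}}^2$, and consequently $\lVert u\rVert_2^4\le R^4\lVert u\rVert_{\mathcal{H}}^4$. Both bounds are nonrandom, so taking expectations yields $E(\lVert u\rVert_2^2)\le R^2\lVert u\rVert_{\mathcal{H}}^2<\infty$ and $E(\lVert u\rVert_2^4)\le R^4\lVert u\rVert_{\mathcal{H}}^4<\infty$, which is the assertion of the lemma.

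The only point that needs care — rather than a genuine obstacle — is the validity of (\ref{eq: h1 uniformbound}) itself, i.e. the continuous embedding of $\mathcal{H}$ into $C(\mathcal{X})$ with control on the embedding constant; this is exactly where Assumption~A\ref{ass:ratio} (in particular $d/l<2$, guaranteeing enough smoothness of the Sobolev space) enters, and it is supplied by Lemma~S7 of \citet{wong2018kernel}. A secondary technicality is the implicit requirement $n=n_A+n_B\ge1$ so that $\lVert u\rVert_2^2$ is well defined; on the negligible event $\{n=0\}$ one may set $\lVert u\rVert_2^2=0$ by convention, which does not affect the conclusion.
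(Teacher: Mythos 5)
Your proposal is correct and follows essentially the same route as the paper's own proof: invoke the uniform sup-norm bound on the unit ball of $\mathcal{H}$ (Lemma~S7 of \citet{wong2018kernel}, i.e.\ (\ref{eq: h1 uniformbound})) to get $\lVert u\rVert_\infty\le R\lVert u\rVert_{\mathcal{H}}<\infty$, which makes $\lVert u\rVert_2^2$ deterministically bounded and the two moments trivially finite. Your write-up is in fact slightly more explicit than the paper's (handling the $\lVert u\rVert_{\mathcal{H}}=0$ case and the averaging step), but there is no substantive difference.
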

\begin{proof}[Proof of Lemma~\ref{eq: m bounded}]
	By Lemma~S7 of \citet{wong2018kernel}, there exists a constant $R$ such that $\sup_{u\in\mathcal{H}_1}\lVert u\rVert_\infty\leq R$. Thus, for $u\in\mathcal{H}$, $\lVert u\rVert_\infty \leq R\lVert u\rVert_{\mathcal{H}}$. Since we have assumed $\lVert u\rVert_\mathcal{H}<\infty$ for $u\in\mathcal{H}$ in Section~\ref{sec: asymptot}, we conclude that $\lVert u\rVert_\infty <\infty$, so we have proved Lemma~\ref{eq: m bounded}.
	
	
\end{proof}

Denote $u^\star = \argmax_{u\in\tilde{\mathcal{H}}_N}\{S(\bgamma^\star,u)-\lambda_1\lVert u\rVert_\mathcal{H}\}$, and its existence is shown in Appendix~\ref{append: deri}, where $\widetilde{\mathcal{H}}_{N}=\{u\in\mathcal{H}:\norm{u}_{2}=1\}$.  Then, for any $u\in{\mathcal{H}}$, we have 
\begin{equation}\label{eq: f in H tilde space}
	S(\hat{\bgamma},u)-\lambda_1\lVert u\rVert_{\mathcal{H}}^2-\lambda_2Q_A(\hat{\bgamma})\lVert u\rVert_2^2\leq \{S(\bgamma^\star,u^\star)-\lambda_1\lVert u^\star\rVert_{\mathcal{H}}^2-\lambda_2Q_A(\bgamma^\star)\}\lVert u\rVert_2^2.
\end{equation}

\begin{lem}\label{lemma: S rhat m}
	Suppose Assumptions~A\ref{ass:piB}--A\ref{ass: prob sample inc prob} hold. If $\lambda_1\asymp n_B^{-1}$ and $\lambda_2\asymp n_B^{-1}$, we have $S(\hat{\bgamma},u) = O_p(n_B^{-1})\lVert u\rVert_2^2$ for $u\in\mathcal{H}$. 
\end{lem}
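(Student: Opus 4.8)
The plan is to exploit the defining inequality \eqref{eq: f in H tilde space}, which compares the objective at $\hat{\bgamma}$ against its value at $\bgamma^\star$, and then control each of the pieces that appear on the right-hand side. Rearranging \eqref{eq: f in H tilde space}, for every $u\in\mathcal{H}$ we have
\begin{equation*}
S(\hat{\bgamma},u)\leq \lambda_1\lVert u\rVert_{\mathcal{H}}^2 + \lambda_2 Q_A(\hat{\bgamma})\lVert u\rVert_2^2 + \{S(\bgamma^\star,u^\star)-\lambda_1\lVert u^\star\rVert_{\mathcal{H}}^2-\lambda_2Q_A(\bgamma^\star)\}\lVert u\rVert_2^2,
\end{equation*}
so it suffices to show that each term on the right is $O_p(n_B^{-1})$ times $\lVert u\rVert_2^2$ or $\lVert u\rVert_{\mathcal{H}}^2$, after which a bound uniform in $u\in\mathcal{H}$ of the stated order follows (using that $\lVert u\rVert_{\mathcal{H}}<\infty$ and, by Lemma~\ref{eq: m bounded}, $\lVert u\rVert_2$ is finite with finite moments). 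First I would handle the $\lambda_1\lVert u\rVert_{\mathcal{H}}^2$ term: since $\lambda_1\asymp n_B^{-1}$ and $\lVert u\rVert_{\mathcal{H}}$ is finite by the standing assumption in Section~\ref{sec: asymptot}, this contributes $O(n_B^{-1})\lVert u\rVert_{\mathcal{H}}^2$, which is of the required form (and can be rewritten in terms of $\lVert u\rVert_2^2$ using the comparison $\lVert u\rVert_\infty\le R\lVert u\rVert_{\mathcal{H}}$ from the proof of Lemma~\ref{eq: m bounded} if a uniform statement is desired).

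Next I would control the $Q_A$ terms. Because $\bgamma^\star$ has all coordinates $r_i^\star$ bounded in $[C_{r,1},C_{r,2}]$ by \eqref{eq: range r}, and $\hat{\bgamma}$ has coordinates confined to $[\xi_1,\xi_2]$ by construction, the function $r\{\log(r)-1\}$ is bounded on those ranges; hence $Q_A(\hat{\bgamma})=O_p(1)$ and $Q_A(\bgamma^\star)=O_p(1)$ (in fact $Q_A(\bgamma^\star)$ converges to the population quantity $Q(r^\star)\ge 0$ by a law of large numbers under Assumption~A\ref{ass: A ind}). Combined with $\lambda_2\asymp n_B^{-1}$, both $\lambda_2 Q_A(\hat{\bgamma})$ and $\lambda_2 Q_A(\bgamma^\star)$ are $O_p(n_B^{-1})$. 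The remaining and central piece is $S(\bgamma^\star,u^\star)$: here I would invoke Lemma~\ref{lemma: R S2}, which says that $\sup_{u\in\widetilde{\mathcal{H}}_N} n_B S_N(\bgamma^\star,u)/\lVert u\rVert_{\mathcal{H}}^{d/l}=O_p(1)$. Since $u^\star\in\widetilde{\mathcal{H}}_N$ and its $\mathcal{H}$-norm is controlled (the $-\lambda_1\lVert u^\star\rVert_{\mathcal{H}}^2$ term in the definition of $u^\star$ prevents $\lVert u^\star\rVert_{\mathcal{H}}$ from blowing up, so $\lVert u^\star\rVert_{\mathcal{H}}^{d/l}=O_p(1)$ using $d/l<2$ from Assumption~A\ref{ass:ratio}), Lemma~\ref{lemma: R S2} yields $S_N(\bgamma^\star,u^\star)=O_p(n_B^{-1})$. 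Noting $S=S_N$ up to the normalization already absorbed, this gives $S(\bgamma^\star,u^\star)=O_p(n_B^{-1})$, and dropping the non-positive term $-\lambda_1\lVert u^\star\rVert_{\mathcal{H}}^2-\lambda_2Q_A(\bgamma^\star)$ (which only helps) completes the bound on the bracket.

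Assembling these, every term on the right-hand side of the rearranged inequality is $O_p(n_B^{-1})\lVert u\rVert_2^2$ (after converting the $\lVert u\rVert_{\mathcal{H}}^2$ term via the norm comparison, or simply stating the conclusion for fixed $u$ with finite norms), which is exactly $S(\hat{\bgamma},u)=O_p(n_B^{-1})\lVert u\rVert_2^2$. The main obstacle I anticipate is the bound on $\lVert u^\star\rVert_{\mathcal{H}}$: one must argue carefully that the maximizer $u^\star$ of $S(\bgamma^\star,u)-\lambda_1\lVert u\rVert_{\mathcal{H}}^2$ over $\widetilde{\mathcal{H}}_N$ cannot have an exploding RKHS norm — intuitively because $S(\bgamma^\star,u)$ is itself $O_p(n_B^{-1})\lVert u\rVert_{\mathcal{H}}^{d/l}$ by Lemma~\ref{lemma: R S2} while the penalty grows like $\lambda_1\lVert u\rVert_{\mathcal{H}}^2$, and with $\lambda_1\asymp n_B^{-1}$ and $d/l<2$ the penalty dominates for large norm, forcing $\lVert u^\star\rVert_{\mathcal{H}}=O_p(1)$ — but this comparison of a random quantity against the penalty needs to be made uniform, and that is where the argument requires the most care. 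Everything else is bookkeeping with the boundedness of $r\log r$ and the rate assumptions $\lambda_1,\lambda_2\asymp n_B^{-1}$.
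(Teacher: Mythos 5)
Your plan is sound and starts from the same basic inequality \eqref{eq: f in H tilde space} and the same key ingredient (Lemma~\ref{lemma: R S2}), but it diverges from the paper at the one step that matters: the control of $S(\bgamma^\star,u^\star)$. The paper never asserts $\lVert u^\star\rVert_{\mathcal H}=O_p(1)$. Instead it runs the three-case self-bounding argument of Lemma~S3 of \citet{wong2018kernel} on \eqref{eq: basic inequality S r m}: in the case where $S(\bgamma^\star,u^\star)\lVert u\rVert_2^2$ dominates the right-hand side, the retained left-hand term $\lambda_1\lVert u^\star\rVert_{\mathcal H}^2\lVert u\rVert_2^2$ gives $\lambda_1\lVert u^\star\rVert_{\mathcal H}^2\lesssim S(\bgamma^\star,u^\star)\lesssim n_B^{-1}\lVert u^\star\rVert_{\mathcal H}^{d/l}$ by Lemma~\ref{lemma: R S2}, which is solved for $\lVert u^\star\rVert_{\mathcal H}$ and substituted back to yield $S(\hat{\bgamma},u)\le \lambda_1^{-d/(2l-d)}O_p(n_B^{-2l/(2l-d)})\lVert u\rVert_2^2$, a bound valid for general $\lambda_1$; the rates $\lambda_1,\lambda_2\asymp n_B^{-1}$ enter only at the very end. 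Your route instead imposes $\lambda_1\asymp n_B^{-1}$ up front and bounds $\lVert u^\star\rVert_{\mathcal H}=O_p(1)$ directly from the definition of $u^\star$ as a penalized maximizer. That route also works, and the obstacle you flag is already resolved by Lemma~\ref{lemma: R S2} itself: its supremum runs over all of $\widetilde{\mathcal H}_N$, so the bound $S(\bgamma^\star,u)\le O_p(n_B^{-1})\lVert u\rVert_{\mathcal H}^{d/l}$ applies to the random maximizer $u^\star$; comparing the maximized criterion with its value at a fixed reference element $u_0\in\widetilde{\mathcal H}_N$ (e.g.\ the constant function, whose empirical norm equals one) gives $-\lambda_1\lVert u_0\rVert_{\mathcal H}^2\le S(\bgamma^\star,u^\star)-\lambda_1\lVert u^\star\rVert_{\mathcal H}^2\le O_p(n_B^{-1})\lVert u^\star\rVert_{\mathcal H}^{d/l}-\lambda_1\lVert u^\star\rVert_{\mathcal H}^2$, and with $\lambda_1\asymp n_B^{-1}$ and $d/l<2$ (Assumption~A\ref{ass:ratio}) this forces $\lVert u^\star\rVert_{\mathcal H}=O_p(1)$, hence $S(\bgamma^\star,u^\star)=O_p(n_B^{-1})$. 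So your ``main obstacle'' closes with the tools you already cite; the paper's case analysis buys a rate statement without pre-specifying $\lambda_1$, while your version is shorter once the tuning rates are fixed.

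Two minor points. Dropping $-\lambda_2Q_A(\bgamma^\star)$ is legitimate since $r\{\log r-1\}\ge -1$ gives $Q_A\ge 0$, and $Q_A(\hat{\bgamma})=O_p(1)$ follows from $\hat{r}_i\le\xi_2$, exactly as in the paper. However, your suggestion to convert $\lambda_1\lVert u\rVert_{\mathcal H}^2$ into $O_p(n_B^{-1})\lVert u\rVert_2^2$ ``via the norm comparison'' runs the wrong way: $\lVert u\rVert_2\le\lVert u\rVert_\infty\le R\lVert u\rVert_{\mathcal H}$ bounds $\lVert u\rVert_2$ by $\lVert u\rVert_{\mathcal H}$, not conversely. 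For a fixed $u$ the conclusion is harmless (both norms are finite constants, which is also how the paper treats this term), but that inequality cannot be used to make the bound uniform in $u$.
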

\begin{proof}[Proof of Lemma~\ref{lemma: S rhat m}]
	By (\ref{eq: f in H tilde space}), we have 
	\begin{equation} \label{eq: basic inequality S r m}
		S(\hat{\bgamma},u) +\lambda_1\lVert u^\star\rVert_{\mathcal{H}}^2\lVert u\rVert_2^2 + \lambda_2Q_A(\bgamma^\star)\lVert u\rVert_2^2 \leq 
		S(\bgamma^\star,u^\star)\lVert u\rVert_2^2 +\lambda_1\lVert u\rVert_{\mathcal{H}}^2 + \lambda_2Q_A(\hat{\bgamma})\lVert u\rVert_2^2.
	\end{equation}
	
	By Lemma~\ref{lemma: R S2}, we can use a similar argument in the proof of Lemma~S3 of \citet{wong2018kernel} to reach the following result:
	
	Case (i): Suppose that $S(\bgamma^\star,u^\star)\lVert u\rVert_2^2$ is the largest on the right-hand side of (\ref{eq: basic inequality S r m}). If $\lVert u\rVert_2^2>0$, we have $S(\hat{\bgamma},u)\leq \lambda_1^{-d/(2l-d)}O_p(n_B^{-2l/(2l-d)})\lVert u\rVert_2^2$. If $\lVert u\rVert_2^2=0$, we can still get the same result. 
	
	Case (ii): Suppose that $\lambda_1\lVert u\rVert_{\mathcal{H}}^2$ is the largest on the right-hand side of (\ref{eq: basic inequality S r m}). Then, we have $S(\hat{\bgamma},u)\leq 3\lambda_1\lVert u\rVert_{\mathcal{H}}^2$.
	
	Case (iii):  Suppose that $\lambda_2Q_A(\hat{\bgamma})\lVert u\rVert_2^2$ is the largest on the right-hand side of (\ref{eq: basic inequality S r m}). Then, we have $S(\hat{\bgamma},u)\leq 3\lambda_2Q_A(\hat{\bgamma})\lVert u\rVert_2^2$.
	
	By (\ref{eq: h1 uniformbound}) and the proof of Lemma~\ref{eq: m bounded}, $\lVert u\rVert_2 \leq \lVert u\rVert_\infty\leq R\lVert u\rVert_\mathcal{H}<\infty$. Then, we have 
	\begin{equation}
		S(\hat{\bgamma},u) = O_p\left[ \max\{  \lambda_1^{-d/(2l-d)}n_B^{-2l/(2l-d)}\lVert u\rVert_2^2,  \lambda_1\lVert u\rVert_{\mathcal{H}}^2,  \lambda_2Q_A(\hat{\bgamma})\lVert u\rVert_2^2\}   \right].\label{eq: rate of s gamma hat}
	\end{equation}
	By the proposed optimization problem (\ref{eq: object2}), $\hat{r}_i<\xi_2$ for $i=1,\ldots,N$, so $Q_A(\hat{\bgamma}) = O_p(1)$. Thus, by the conditions on $\lambda_1$ and $\lambda_2$, we have $S(\hat{\bgamma},u) = O_p(n_B^{-1})\lVert u\rVert_2^2$. Thus, we have complete the proof of Lemma~\ref{lemma: S rhat m}.
\end{proof}
Based on Lemma~\ref{eq: m bounded}, we can also use a similar argument as the proof for Lemma~S3 of \citet{wong2018kernel} to show that $E\{n_BS(\hat{\bgamma},u)\}<\infty$.   
\begin{proof}[Proof of Theorem~\ref{theorem: convergence rate of proposed estimator}]
	Consider 
	\begin{eqnarray}
		N^{-1}\sum_{i=1}^N(\delta_{A,i}\hat{w}_{i}-1)y_i &=& N^{-1}\sum_{i=1}^N(\delta_{A,i}\hat{w}_{i}-\delta_{B,i}\pi_{B,i}^{-1})y_i  + N^{-1}\sum_{i=1}^N(\delta_{B,i}\pi_{B,i}^{-1}-1)y_i\notag \\ 
		&=& N^{-1}\sum_{i=1}^N(\delta_{A,i}\hat{w}_{i}-\delta_{B,i}\pi_{B,i}^{-1})m_0(\bx_i)  +
		N^{-1}\sum_{i=1}^N(\delta_{A,i}\hat{w}_{i}-\delta_{B,i}\pi_{B,i}^{-1})\epsilon_i\notag \\ 
		&&+
		N^{-1}\sum_{i=1}^N(\delta_{B,i}\pi_{B,i}^{-1}-1)y_i.\label{eq: theorem 1 target}
	\end{eqnarray}
	By Assumption~\ref{ass:ratio} and Lemma~\ref{lemma: S rhat m}, we have 
	\begin{equation}\label{eq: Theorem1 part 1}
		N^{-1}\sum_{i=1}^N(\delta_{A,i}\hat{w}_{i}-\delta_{B,i}\pi_{B,i}^{-1})m_0(\bx_i)  = O_p(n_B^{-1/2}).
	\end{equation}
	Since $\{\epsilon_i:i=1,\ldots,N\}$ are independent with the sampling indicators $\{(\delta_{A,i},\delta_{B,i}):i=1,\ldots,N\}$ as well as the weights $\{(\hat{w}_i,\pi_{B,i}):i=1,\ldots,N\}$, by Assumption~A\ref{ass: epsilon}, we have 
	\begin{eqnarray}
		&&\var\left\{N^{-1}\sum_{i=1}^N\left(\delta_{A,i}\hat{w}_{i}-\delta_{B,i}\pi_{B,i}^{-1}\right)\epsilon_i\right\}\notag \\ 
		&\leq& C_{\sigma,2}E\left\{N^{-2}\sum_{i=1}^N\left(\delta_{A,i}\hat{w}_{i}-\delta_{B,i}\pi_{B,i}^{-1}\right)^2\right\}\notag \\ 
		&\leq&2C_{\sigma,2}E\left(N^{-2}\sum_{i=1}^N\delta_{B,i}\pi_{B,i}^{-2}\right)+2C_{\sigma,2}E\left(N^{-2}\sum_{i=1}^N\delta_{A,i}\hat{w}_i^2\right).\label{eq: Theorem 1 part 2}
	\end{eqnarray}
	
	To show the order of (\ref{eq: Theorem 1 part 2}), we first consider 
	\begin{eqnarray}
		E\left(N^{-2}\sum_{i=1}^N\delta_{B,i}\pi_{B,i}^{-2}\right)&=&N^{-2}\sum_{i=1}^N\pi_{B,i}^{-1}\notag \\ 
		&\leq&C_{B,1}^{-1}N^{-2}\sum_{i=1}^NNn_B^{-1}\notag \\ 
		&=& O(n_B^{-1}),\label{eq: Theorem1 part 21}
	\end{eqnarray}
	where the first inequality holds by Assumption~A\ref{ass: prob sample inc prob}.
	
	In addition, $n_A^{-1}(N-n_A)\to P(\delta=1)^{-1}P(\delta=0)$ almost surely by strong law of large number. Then, by Assumption~A\ref{ass:piB}, there exists a constant $C_{A,4}$, such that $n_A^{-1}(N-n_A)<C_{A,4}$ for $N\geq N_0$, where $N_0$ is determined by $C_{A,4}$. Then, for $N\geq N_0$, since $\hat{w}_i = 1+(N-n_A)/n_{A}\hat{r}_i$, we have
	\begin{eqnarray}
		N^{-2}\sum_{i=1}^N\delta_{A,i}\hat{w}_i^2
		&\leq&N^{-2}\sum_{i=1}^N\delta_{A,i}(1+C_{A,4}\hat{r}_i)^2 \notag \\ 
		&\leq&2N^{-2}\sum_{i=1}^N\delta_{A,i} + 2C_{A,4}^{2}N^{-2}\sum_{i=1}^N\delta_{A,i}\hat{r}_i^2\notag \\ 
		&\leq&2N^{-1} +2C_{A,4}^{2}N^{-2}\sum_{i=1}^N\xi_2^2\notag \\ 
		&=&2N^{-1}(1+C_{A,4}^2\xi_2^2).
		\label{eq: w2 subpart 2} 
	\end{eqnarray}
	Thus, by Assumption~A\ref{ass: prob sample inc prob} and (\ref{eq: w2 subpart 2}), we have 
	\begin{equation}
		E\left(N^{-2}\sum_{i=1}^N\delta_{A,i}\hat{w}_i^2\right) = O(n_B^{-1}).\label{eq: Theorem1 part 22}
	\end{equation}
	
	Thus, by (\ref{eq: Theorem 1 part 2})--(\ref{eq: Theorem1 part 21}) and (\ref{eq: Theorem1 part 22}), we have shown that 
	\begin{equation}\label{eq: theorem1 part2}
		\var\left\{N^{-1}\sum_{i=1}^N\left(\delta_{A,i}\hat{w}_{i}-\delta_{B,i}\pi_{B,i}^{-1}\right)\epsilon_i\right\} = O(n_B^{-1}).
	\end{equation}
	
	By (\ref{eq: theorem 1 target}), (\ref{eq: Theorem1 part 1}), (\ref{eq: theorem1 part2}) and Assumption~A\ref{ass: sample B convergence result}, we complete the proof of Theorem~\ref{theorem: convergence rate of proposed estimator}.
	
	
\end{proof}

\section{Bootstrap variance estimator}\label{supp: bootstrap variance estimator}
Since the inclusion probabilities are unavailable for the non-probability sample $A$, we consider a bootstrap variance estimator \citep{kim2019hypothesis} only taking into consideration the design features associated with the probability sample $B$.    

For $b=1,\ldots,B$, let the bootstrap version of (\ref{eq: proposed estimator kl}) be 
\begin{equation}
	\hat{Y}_N^{(b)*} = N^{-1}\sum_{i\in A}\hat{\omega}_i^{(b)*}y_i,\notag
\end{equation}
where $B$ is the number of bootstrap replications,  $ \hat{\omega}_i^{(b)*} = 1+ ( N n_A^{-1}- 1 )\hat{r}_i^{(b)*}$ for $i\in A$, $\hat{\bgamma}^{(b)*} = (\hat{r}_1^{(b)*},\ldots,\hat{r}_N^{(b)*})$ with $\hat{r}_i^{(b)*}=0$ for $i\notin A$ is obtained by 
\begin{eqnarray}
	&\displaystyle\hat{\bgamma}^{(b)*} = \argmin_{\xi_1\leq r_i\leq  \xi_2}\left[\sup_{u\in{\mathcal{H}}}\left\{\frac{S^{(b)*}(\bgamma,u)}{\lVert u\rVert_2^2}-\lambda_1\frac{\lVert u\rVert_{\mathcal{H}}^2}{\lVert u\rVert_2^2}\right\}- \lambda_2 Q_A (\bgamma) \right],\notag\\
	&\displaystyle	S^{(b)*}(\bgamma,u) = \left[ N^{-1}\sum_{i=1}^N\delta_{A,i} \left\{ 1+ \left( \frac{N}{n_A} -1\right)r_i \right\}u(\bx_i) - N^{-1}\sum_{i=1}^N\delta_{B,i} d_{B,i}^{(b)*}u(\bx_i)\right]^2, \notag
\end{eqnarray}
the set of bootstrap weights $\{d_{B,i}^{(b)*}:i\in B\}$ satisfies $$E^*\{N^{-1}\sum_{i=1}^N\delta_{B,i} d_{B,i}^{(b)*}u(\bx_i)\} = N^{-1}\sum_{i=1}^N\delta_{B,i} \pi_{B,i}^{-1}u(\bx_i),$$ $\var^*\{N^{-1}\sum_{i=1}^N\delta_{B,i} d_{B,i}^{(b)*}u(\bx_i)\}=\hat{V}\{ N^{-1}\sum_{i=1}^N\delta_{B,i} \pi_{B,i}^{-1}u(\bx_i)\}$, $E^*(\cdot)$ and $\var^*(\cdot)$ are the conditional expectation and variance with respect to the bootstrap procedure given the probability sample $B$, and $\hat{V}\{ N^{-1}\sum_{i=1}^N\delta_{B,i} \pi_{B,i}^{-1}u(\bx_i)\}$ is a design-unbiased variance estimator of $ N^{-1}\sum_{i=1}^N\delta_{B,i} \pi_{B,i}^{-1}u(\bx_i)$. Then, a bootstrap variance estimator for (\ref{eq: proposed estimator kl}) can be the sample variance of $\{\hat{Y}_N^{(b)*}:b=1,\ldots,B\}$.
The bootstrap variance estimator is reasonable if we can safely ignore the variability with respect to the non-probability sample $A$, for example, when $n_B = o_p(n_A)$ by Assumption~\ref{ass:piB} and Assumption~\ref{ass: prob sample inc prob}

Specifically, if the probability sample $B$ is generated by a Poisson sampling as shown in Section~\ref{sec: simu}, then a design-unbiased variance estimator of $ N^{-1}\sum_{i=1}^N\delta_{B,i} \pi_{B,i}^{-1}u(\bx_i)$ is 
\begin{equation}\label{eq: ss boot var}
	\hat{V}_B = \sum_{i\in B}\frac{1-\pi_{B,i}}{\pi_{B,i}^2}u^2(\bx_i).\notag
\end{equation}
Then, the distribution to generate the bootstrap weights $d_{B,i}^*$ can be normal with mean $\pi_{B,i}^{-1}$ and variance $(1-\pi_{B,i})\pi_{B,i}^{-2}$.

\section{Proof of Theorem~\ref{theorem: CLT}} \label{supp: proof theorem CLT}
\begin{lem}\label{lemma: population size estimator}
	Suppose that Assumption~A\ref{ass: prob sample inc prob} holds. Then, we have 
	\begin{equation*}
		N^{-1}\sum_{i=1}^N\delta_{B,i}\pi_{B,i}^{-1}=O_p(1).
	\end{equation*}
\end{lem}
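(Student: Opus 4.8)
The plan is to exploit the fact that $\widehat N:=\sum_{i\in B}\pi_{B,i}^{-1}$ is nothing but the Horvitz--Thompson estimator of the population size $N=\sum_{i=1}^{N}1$, so that $N^{-1}\sum_{i=1}^{N}\delta_{B,i}\pi_{B,i}^{-1}=\widehat N/N$ is a nonnegative random variable with a uniformly controlled first moment. Concretely, since $E(\delta_{B,i})=\pi_{B,i}$ by the definition of the first-order inclusion probability, design-unbiasedness of the Horvitz--Thompson estimator \citep{horvitz1952generalization} gives $E\{N^{-1}\sum_{i=1}^{N}\delta_{B,i}\pi_{B,i}^{-1}\}=N^{-1}\sum_{i=1}^{N}1=1$ for every $N$. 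As the quantity is nonnegative, Markov's inequality then yields tightness: for any $\varepsilon>0$, taking $M_\varepsilon=1/\varepsilon$ gives $P\{N^{-1}\sum_{i=1}^{N}\delta_{B,i}\pi_{B,i}^{-1}>M_\varepsilon\}\le M_\varepsilon^{-1}=\varepsilon$, and this bound does not depend on $N$. That is precisely the assertion $N^{-1}\sum_{i=1}^{N}\delta_{B,i}\pi_{B,i}^{-1}=O_p(1)$.

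Alternatively, and this is where Assumption~A\ref{ass: prob sample inc prob} enters directly, I would use the lower bound $\pi_{B,i}\ge C_{B,1}n_BN^{-1}$, hence $\pi_{B,i}^{-1}\le C_{B,1}^{-1}Nn_B^{-1}$, to write
\[
N^{-1}\sum_{i=1}^{N}\delta_{B,i}\pi_{B,i}^{-1}\le \frac{C_{B,1}^{-1}}{n_B}\sum_{i=1}^{N}\delta_{B,i}.
\]
Under a rejective sampling design the sample size is fixed at $n_B$, so $\sum_{i=1}^{N}\delta_{B,i}=n_B$ and the right-hand side equals the deterministic constant $C_{B,1}^{-1}$, which is trivially $O_p(1)$; even without invoking the fixed sample size, the upper bound $\pi_{B,i}\le C_{B,2}n_BN^{-1}$ gives $E\{n_B^{-1}\sum_{i=1}^{N}\delta_{B,i}\}=n_B^{-1}\sum_{i=1}^{N}\pi_{B,i}\le C_{B,2}$, and Markov's inequality again finishes. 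Either route produces a bound free of $N$.

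There is essentially no obstacle here. The only point requiring care is to ensure that the threshold in Markov's inequality is uniform over the sequence of populations and sampling designs, which is automatic because the controlling quantity is either the constant $1$ (Horvitz--Thompson unbiasedness) or the constants $C_{B,1},C_{B,2}$ supplied by Assumption~A\ref{ass: prob sample inc prob}. In particular, unlike the companion results Lemma~\ref{lemma: Hoeffding} and Lemma~\ref{lemma: B part exponential inequality}, this lemma requires none of the negative-association machinery for rejective sampling; plain unbiasedness suffices.
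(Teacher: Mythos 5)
Your proof is correct, but it takes a genuinely different and more elementary route than the paper. The paper argues via second moments: it verifies $E\{N^{-1}\sum_{i=1}^N\delta_{B,i}\pi_{B,i}^{-1}\}=1$, then computes the variance, invokes the negative association of $\{\delta_{B,i}\}$ under rejective sampling to discard the covariance terms, and uses the lower bound on $\pi_{B,i}$ from Assumption~A\ref{ass: prob sample inc prob} to get $\var\{N^{-1}\sum_{i=1}^N\delta_{B,i}\pi_{B,i}^{-1}\}=O(n_B^{-1})$; this yields the stronger conclusion that the ratio is $1+O_p(n_B^{-1/2})$, of which the stated $O_p(1)$ is a consequence. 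Your first route uses only design-unbiasedness plus nonnegativity and Markov's inequality, so it needs neither the negative-association machinery nor, in fact, Assumption~A\ref{ass: prob sample inc prob} at all, and it delivers uniform tightness directly — which is exactly what the lemma claims and all that is used downstream (the proof of Lemma~\ref{lemma: CLT for epsilon} only needs $\sum_{i=1}^N\delta_{B,i}\pi_{B,i}^{-1}=O_p(N)$). The trade-off is that your argument does not recover the $O_p(n_B^{-1/2})$ concentration around $1$ that the paper's Chebyshev-type argument provides, should that ever be wanted. One small caution on your second route: the paper's notion of rejective sampling is broad enough to include Poisson sampling (used in its simulations), for which the realized sample size is not fixed at $n_B$, so the identity $\sum_{i=1}^N\delta_{B,i}=n_B$ is not guaranteed; your hedge via $E\{n_B^{-1}\sum_{i=1}^N\delta_{B,i}\}\le C_{B,2}$ and Markov covers that case correctly, so no gap results.
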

\begin{proof}[Proof of Lemma~\ref{lemma: population size estimator}]
	Since $P(\delta_{B,i})=\pi_{B,i}$, we have  
	\begin{equation}\label{eq: population mean estimator 1}
		E\left(    N^{-1}\sum_{i=1}^N\delta_{B,i}\pi_{B,i}^{-1}\right) = 1.
	\end{equation}
	Consider 
	\begin{eqnarray}
		\var\left(    N^{-1}\sum_{i=1}^N\delta_{B,i}\pi_{B,i}^{-1}\right)&=& N^{-2}\sum_{i=1}^N\var(\delta_{B,i}\pi_{B,i}^{-1}) + N^{-2}\sum_{i\neq j}\pi_{B,i}^{-1}\pi_{B,j}^{-1}\Cov(\delta_{B,i},\delta_{B,j})\notag \\ 
		&\leq& N^{-2}\sum_{i=1}^N\var(\delta_{B,i}\pi_{B,i}^{-1})\notag \\ 
		&=& N^{-2}\sum_{i=1}^N(1-\pi_{B,i})\pi_{B,i}^{-1}\notag \\ 
		&\leq&N^{-2}C_{B,1}^{-1}N^2n_B^{-1}\notag \\ 
		&=&O(n_B^{-1}),\label{eq: population mean estimator 2}
	\end{eqnarray}
	where the first inequality holds since $\{\delta_{B,i}:i=1,\ldots,N\}$ are negatively associated, and the second inequality holds by Assumption~A\ref{ass: prob sample inc prob}. By Assumption~A\ref{ass: prob sample inc prob}, $n_B^{-1}\to0$, so we have proved Lemma~\ref{lemma: population size estimator} by (\ref{eq: population mean estimator 1})--(\ref{eq: population mean estimator 2}). 
\end{proof}

\begin{lem}\label{lemma: CLT for epsilon}
	Suppose that Assumptions~A\ref{ass: A ind}--A\ref{ass: r bound} hold. Then, we have 
	\begin{equation*}
		B_N^{-1}\sum_{i=1}^N(\delta_{A,i}\hat{w}_i - \delta_{B,i}\pi_{B,i}^{-1})\epsilon_i \to N(0,1),
	\end{equation*}
	where $B_N^2 = \sum_{i=1}^N(\delta_{A,i}\hat{w}_i-\delta_{B,i}\pi_{B,i}^{-1})^2\sigma_i^2$. Besides, $B_N^2 \asymp N^{2}n_B^{-1}$ in probability.
\end{lem}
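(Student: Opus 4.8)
The plan is to prove the central limit theorem by a conditional Lindeberg--Feller argument and to extract the order of $B_N^2$ as a by-product. Let $\mathcal{G}$ be the $\sigma$-field generated by $\{\bx_i\}_{i=1}^N$ together with both index sets, i.e. by $\{\delta_{A,i},\delta_{B,i}:i=1,\ldots,N\}$; then $\hat{w}_i$ and $\pi_{B,i}$ are $\mathcal{G}$-measurable, and by Assumption~A\ref{ass: MAR}, the second part of Assumption~A\ref{ass: epsilon}, and the super-population model~(\ref{eq: popu model}), the errors $\{\epsilon_i\}_{i=1}^N$ are jointly independent of $\mathcal{G}$. Writing $a_i=\delta_{A,i}\hat{w}_i-\delta_{B,i}\pi_{B,i}^{-1}$, conditionally on $\mathcal{G}$ the sum $\sum_{i=1}^N a_i\epsilon_i$ is thus a sum of independent mean-zero random variables with conditional variance $B_N^2=\sum_{i=1}^N a_i^2\sigma_i^2$. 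A fact I would use throughout is that the constraint $\hat{r}_i\le\xi_2$ in~(\ref{eq: object2}) and $n_AN^{-1}\to\pi_1\in(0,1)$ give, exactly as in the derivation of~(\ref{eq: w2 subpart 2}), a constant $C$ with $\hat{w}_i=1+(Nn_A^{-1}-1)\hat{r}_i\le C$ for all $i$ with probability tending to one.

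\emph{Step 1 (order of $B_N^2$).} Since $C_{\sigma,1}\le\sigma_i^2\le C_{\sigma,2}$ it suffices to show $\sum_i a_i^2\asymp N^2n_B^{-1}$ in probability. For the upper bound, $\sum_i a_i^2\le 2\sum_i\delta_{A,i}\hat{w}_i^2+2\sum_i\delta_{B,i}\pi_{B,i}^{-2}$; the first sum is $O(N)$ by boundedness of $\hat{w}_i$, and $E\{\sum_i\delta_{B,i}\pi_{B,i}^{-2}\}=\sum_i\pi_{B,i}^{-1}=O(N^2n_B^{-1})$ by Assumption~A\ref{ass: prob sample inc prob}, so $\sum_i a_i^2=O_p(N^2n_B^{-1})$ (using $N=O(N^2n_B^{-1})$). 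For the lower bound I would keep only the terms with $i\in B\setminus A$, for which $a_i^2=\pi_{B,i}^{-2}\ge C_{B,2}^{-2}(Nn_B^{-1})^2$. Conditionally on $\{\bx_i\}$ and $\{\delta_{B,i}\}$, $|A\cap B|=\sum_i\delta_{A,i}\delta_{B,i}$ is a sum of $n_B$ independent $[0,1]$-valued variables with conditional mean at most $C_{A,2}n_B$ by Assumption~A\ref{ass:piB}; since $n_B\to\infty$ under Assumption~A\ref{ass: prob sample inc prob}, Hoeffding's inequality gives $|A\cap B|\le C'n_B$ for some fixed $C'\in(C_{A,2},1)$ with probability tending to one, whence $|B\setminus A|=n_B-|A\cap B|\ge(1-C')n_B$. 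Therefore $\sum_i a_i^2\ge\sum_{i\in B\setminus A}\pi_{B,i}^{-2}\gtrsim N^2n_B^{-1}$ in probability, and $B_N^2\asymp N^2n_B^{-1}$ in probability follows.

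\emph{Step 2 (Lyapunov's condition).} Because $\epsilon_i$ is independent of $\mathcal{G}$, Assumption~A\ref{ass: epsilon} gives $E(|a_i\epsilon_i|^{2+\delta}\mid\mathcal{G})\le C_\delta|a_i|^{2+\delta}$, so it is enough to check $\sum_i|a_i|^{2+\delta}=o_p(B_N^{2+\delta})=o_p(N^{2+\delta}n_B^{-(2+\delta)/2})$. Splitting as before, $\sum_i\delta_{A,i}\hat{w}_i^{2+\delta}=O(N)=o(N^{2+\delta}n_B^{-(2+\delta)/2})$ because $n_B=o(N)$; and $E\{\sum_i\delta_{B,i}\pi_{B,i}^{-(2+\delta)}\}=\sum_i\pi_{B,i}^{-(1+\delta)}=O(N^{2+\delta}n_B^{-(1+\delta)})$, so $\sum_i\delta_{B,i}\pi_{B,i}^{-(2+\delta)}/B_N^{2+\delta}=O_p(n_B^{-\delta/2})=o_p(1)$. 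Hence the conditional Lyapunov ratio $L_N:=B_N^{-(2+\delta)}\sum_i E(|a_i\epsilon_i|^{2+\delta}\mid\mathcal{G})$ satisfies $L_N=o_p(1)$.

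\emph{Step 3 (conditional to unconditional, and the main obstacle).} Let $\phi_N(t)=E\{\exp(itB_N^{-1}\sum_i a_i\epsilon_i)\mid\mathcal{G}\}$. The quantitative form of the Lyapunov central limit theorem yields $|\phi_N(t)-e^{-t^2/2}|\le g(t,L_N)$ for a deterministic $g$ that is bounded by $2$, continuous in its second argument, and satisfies $g(t,0)=0$; since $L_N=o_p(1)$ and $g$ is bounded, dominated convergence gives $E|\phi_N(t)-e^{-t^2/2}|\to0$, hence $E\{\exp(itB_N^{-1}\sum_i a_i\epsilon_i)\}\to e^{-t^2/2}$ for every $t$, and the L\'evy continuity theorem finishes the argument. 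The hard part is the lower bound on $B_N^2$ in Step~1: one must rule out cancellation in the $a_i$ driving $\sum_i a_i^2$ below order $N^2n_B^{-1}$, which is where the ``big data'' feature of the non-probability sample (so that $B\setminus A$ remains of order $n_B$) is essential; the remaining delicacy is keeping the $(2+\delta)$-moment bookkeeping in Step~2 consistent with the rate conditions $n_BN^{-1}=o(1)$ and $N^{1/2}n_B^{-1}=o(1)$ of Assumption~A\ref{ass: prob sample inc prob}.
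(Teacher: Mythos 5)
Your proposal is correct and follows essentially the same route as the paper's proof: condition on the design and auxiliaries, show $B_N^2\asymp N^2n_B^{-1}$ in probability using the boundedness of $\hat{w}_i$ and Assumption~A\ref{ass: prob sample inc prob}, verify a conditional Lyapunov/Lindeberg condition of order $n_B^{-\delta/2}$, and then pass to the unconditional limit (the paper invokes the conditional CLT of Yuan et al.\ 2014 where you argue via conditional characteristic functions and dominated convergence). The only substantive difference is the lower bound on $B_N^2$: you keep the terms $i\in B\setminus A$ and control $\lvert A\cap B\rvert$ by Hoeffding (which tacitly uses independence of the two selection mechanisms), whereas the paper expands $B_N^2$ into three sums and notes the cross term is $O_p(N)=o_p(N^2n_B^{-1})$ because $\hat{w}_i$ is bounded and $\sum_{i=1}^N\delta_{B,i}\pi_{B,i}^{-1}=O_p(N)$, while $\sum_{i=1}^N\delta_{B,i}\pi_{B,i}^{-2}\gtrsim N^2n_B^{-1}$ — a slightly more direct argument that avoids the extra concentration step.
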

\begin{proof}[Proof of Lemma~\ref{lemma: CLT for epsilon}]
	Denote $\mathcal{A}_N = \{(\delta_{A,i},\bx_i):i\in A\}\cup \{(\delta_{B,i},\bx_i):i\in B\}$. Then, given $\mathcal{A}_N$, $B_N^2$ is  the conditional variance of $\sum_{i=1}^N(\delta_{A,i}\hat{w}_i - 1)\epsilon_i$. 
	
	We first consider the stochastic order of $B_N$. On the one hand, we have
	\begin{eqnarray}
		B_N^{2} &\leq&C_{\sigma,2}\sum_{i=1}^N(\delta_{A,i}\hat{w}_i-\delta_{B,i}\pi_{B,i}^{-1})^2\notag \\ 
		&\leq& 2C_{\sigma,2} \sum_{i=1}^B\delta_{A,i}\hat{w}_i^2 + 2C_{\sigma,2}\sum_{i=1}^N\delta_{B,i}\pi_{B,i}^{-2}\notag \\ 
		&=& O_p(N^2n_B^{-1}), \label{eq: lemma 9 part 1}
	\end{eqnarray}
	where the last equality holds by Assumption~A\ref{ass: prob sample inc prob}, (\ref{eq: Theorem1 part 21}) and (\ref{eq: w2 subpart 2}).
	On the other hand, consider 
	\begin{eqnarray}
		B_N^{2} &=& \sum_{i=1}^N\delta_{A,i}\hat{w}_i^2\sigma_i^2 -2 \sum_{i=1}^N\delta_{A,i}\delta_{B,i}\hat{w}_i\pi_{B,i}^{-1}\sigma_i^2 +\sum_{i=1}^N\delta_{B,i}\pi_{B,i}^{-2}\sigma_i^2\notag \\ 
		&\geq&C_{\sigma,1}\sum_{i=1}^N\delta_{A,i}\hat{w}_i^2 - 2C_{\sigma,1}\max\{\hat{w}_i:i=1,\ldots,N\}\sum_{i=1}^N\delta_{B,i}\pi_{B,i}^{-1} +C_{\sigma,1}\sum_{i=1}^N\delta_{B,i}\pi_{B,i}^{-2}\notag \\ 
		&\geq&C_{\sigma,1}\sum_{i=1}^N\delta_{A,i}\hat{w}_i^2 -2C_{\sigma,1}\max\{\hat{w}_i:i=1,\ldots,N\}O_p(N) + C_{\sigma,1}C_{B,1}^{-2}N^{2}n_B^{-1}\notag \\ 
		&\geq&C_{\sigma,1} C_{B,1}^{-2}N^{2}n_B^{-1} + o_p(N^{2}n_B^{-1})\label{eq: lemma 9 part 12}
	\end{eqnarray}
	where the second inequality holds by Lemma~\ref{lemma: population size estimator}, and the last inequality holds by the condition that  $\max\{\hat{w}_i:i=1,\ldots,N\}$ is bounded since $\hat{\omega}_i = 1+ ( N n_A^{-1}- 1 )\hat{r}_i$, $\hat{r}_i\leq\xi_2$ and $( N n_A^{-1}- 1 )<C_{A,4}$, where $C_{A,4}$ is discussed in the proof of Theorem~\ref{theorem: convergence rate of proposed estimator}. Thus, by (\ref{eq: lemma 9 part 1})--(\ref{eq: lemma 9 part 12}), we have shown that $B_N^2 \asymp N^{2}n_B^{-1}$ in probability.
	
	For any $\eta>0$, consider 
	\begin{eqnarray}
		&&B_N^{-2}\sum_{i=1}^NE\{\lvert (\delta_{A,i}\hat{w}_i - \delta_{B,i}\pi_{B,i}^{-1})\epsilon_i\rvert^2I\{\lvert(\delta_{A,i}\hat{w}_i - \delta_{B,i}\pi_{B,i}^{-1})\epsilon_i\rvert\geq B_N\eta\}\mid \mathcal{A}_N\}\notag \\ 
		&\leq& \eta^{\delta}B_N^{-2-\delta}\sum_{i=1}^NE\{\lvert (\delta_{A,i}\hat{w}_i - \delta_{B,i}\pi_{B,i}^{-1})\epsilon_i\rvert^{2+\delta}\mid \mathcal{A}_N\}\notag \\ 
		&\leq&\eta^{\delta}B_N^{-2-\delta}C_{\sigma,1}^{-1}C_\delta\max\{\lvert (\delta_{A,i}\hat{w}_i - \delta_{B,i}\pi_{B,i}^{-1})\rvert^{\delta}:i=1,\ldots,N\}B_{N}^2\notag \\ 
		&=&o_p(1),
	\end{eqnarray}
	where the second inequality holds by Assumption~A\ref{ass: epsilon}, and last equality holds since $\max\{\lvert (\delta_{A,i}\hat{w}_i - \delta_{B,i}\pi_{B,i}^{-1})\rvert^{\delta}:i=1,\ldots,N\} = O(N^{\delta}n_B^{-\delta})$ by Assumption~A\ref{ass: prob sample inc prob} and $B_N \asymp Nn_B^{-1/2}$ in probability.
	By a similar argument leading to Theorem~4.1 of \citet{yuan2014conditional}, we have proved Lemma~\ref{lemma: CLT for epsilon}.
	
\end{proof}

\begin{proof}[Proof of Theorem~\ref{theorem: CLT}]
	
	Consider 
	\begin{eqnarray}\label{eq: theorem2 decomposition}
		&&N^{-1} \sum_{i=1}^N(\delta_{A,i}\hat{w}_i-\delta_{B,i}\pi_{B,i}^{-1})y_i - N^{-1}\sum_{i=1}^N(\delta_{A,i}\hat{w}_i -\delta_{B,i}\pi_{B,i}^{-1})\hat{m}(\bx_i)\notag \\ 
		&=&N^{-1} \sum_{i=1}^N(\delta_{A,i}\hat{w}_i-\delta_{B,i}\pi_{B,i}^{-1})\epsilon_i+N^{-1}\sum_{i=1}^N(\delta_{A,i}\hat{w}_i -\delta_{B,i}\pi_{B,i}^{-1})\{m_0(\bx_i) - \hat{m}(\bx_i)\}.\notag\\
	\end{eqnarray}
	
	Lemma~\ref{lemma: CLT for epsilon} validates the central limit theorem for the first part of (\ref{eq: theorem2 decomposition}). By Lemma~\ref{lemma: R S2} and a similar argument in the proof of Lemma~S3 of \citet{wong2018kernel}, we can show that 
	\begin{equation}\label{eq: neg dif}
		N^{-1}\sum_{i=1}^N(\delta_{A,i}\hat{w}_i -\delta_{B,i}\pi_{B,i}^{-1})\{m(\bx_i) - \hat{m}(\bx_i)\}=o_p(n_B^{-1}).
	\end{equation}
	By the stochastic order of $B_N$ in Lemma~\ref{lemma: CLT for epsilon}, we have proved Theorem~\ref{theorem: CLT}.
\end{proof}

\section{Proof of Corollary~\ref{cor: variance estimator}}\label{ss: cor1}
\begin{proof}[Proof of Corollary~\ref{cor: variance estimator}]
	Denote 
	\begin{eqnarray}
		\tilde{\theta} &=& N^{-1}\sum_{i=1}^N\delta_{B,i}\pi_{B,i}^{-1}m(\bx_i) + N^{-1}\sum_{i=1}^N\delta_{A,i}\hat{w}_i\{y_i - m(\bx_i)\}\notag \\ 
		&=& N^{-1}\sum_{i=1}^N\delta_{B,i}\pi_{B,i}^{-1}m(\bx_i) + N^{-1}\sum_{i=1}^N\delta_{A,i}\epsilon_i.\label{eq: theta tilde}
	\end{eqnarray}
	Then, by (\ref{eq: neg dif}), we have 
	\begin{eqnarray}
		\lvert\hat{\theta}_{prop} - \tilde{\theta}\rvert =\left\lvert N^{-1}\sum_{i=1}^N(\delta_{B,i}\pi_{B,i}^{-1} - \delta_{A,i}\hat{w}_i) \{\hat{m}(\bx_i)-m(\bx_i)\}\right\rvert=o_p(n_B^{-1/2}).\label{eq: theta tilde result}
	\end{eqnarray}
	
	Since the asymptotic order of $\hat{\theta}_{prop}$ is $O_p(n_B^{-1/2})$ by Assumption~A\ref{ass: sample B convergence result} and  Theorem~\ref{theorem: CLT},  it is enough to investigate the variance of $\tilde{\theta}$ in (\ref{eq: theta tilde}) by (\ref{eq: theta tilde result}).
	
	Consider 
	\begin{eqnarray}
		\var(\tilde{\theta}) &=& \var\left[ E\left\{N^{-1}\sum_{i=1}^N\delta_{B,i}\pi_{B,i}^{-1}m(\bx_i) + N^{-1}\sum_{i=1}^N\delta_{A,i}\hat{w}_i\epsilon_i\mid \mathcal{A}_N\right\}\right] \notag \\
		&&+ E\left[ \var\left\{N^{-1}\sum_{i=1}^N\delta_{B,i}\pi_{B,i}^{-1}m(\bx_i) + N^{-1}\sum_{i=1}^N\delta_{A,i}\hat{w}_i\epsilon_i\mid \mathcal{A}_N\right\}\right].\notag\\
		\label{eq: tilde theta decomposition}
	\end{eqnarray}
	Since $\epsilon_{i}$ is independent with $\{\delta_{A,i}:i=1,\ldots,N\}$ and $\{\delta_{B,i}:i=1,\ldots,N\}$ by Assumption~A\ref{ass: sample B convergence result}, we have 
	\begin{eqnarray}
		E\left\{N^{-1}\sum_{i=1}^N\delta_{B,i}\pi_{B,i}^{-1}m(\bx_i) + N^{-1}\sum_{i=1}^N\delta_{A,i}\hat{w}_i\epsilon_i\mid \mathcal{A}_N\right\}
		=N^{-1}\sum_{i=1}^N\delta_{B,i}\pi_{B,i}^{-1}m(\bx_i).\notag
	\end{eqnarray}
	Thus, we conclude that  
	\begin{eqnarray}
		&&\var\left[ E\left\{N^{-1}\sum_{i=1}^N\delta_{B,i}\pi_{B,i}^{-1}m(\bx_i) + N^{-1}\sum_{i=1}^N\delta_{A,i}\hat{w}_i\epsilon_i\mid  \mathcal{A}_N\right\}\right]\notag \\
		&=& \var\left\{N^{-1}\sum_{i=1}^N\delta_{B,i}\pi_{B,i}^{-1}m(\bx_i)\right\}.\label{eq: EVVE part 1}
	\end{eqnarray}
	
	Next, consider 
	\begin{eqnarray}
		&&\var\left\{N^{-1}\sum_{i=1}^N\delta_{B,i}\pi_{B,i}^{-1}m(\bx_i) + N^{-1}\sum_{i=1}^N\delta_{A,i}\hat{w}_i\epsilon_i\mid \mathcal{A}_N\right\}\notag \\ 
		&& = N^{-1}\sum_{i=1}^N\hat{w}_i^2\sigma^2_i.\label{eq: EVVE part 2}
	\end{eqnarray}
	Thus, by (\ref{eq: theta tilde result})--(\ref{eq: EVVE part 2}), a plug-in variance estimator of $\hat{\theta}_{prop}$ is 
	\begin{equation}
		\hat{V}\left\{N^{-1}\sum_{i=1}^N\delta_{B,i}\pi_{B,i}^{-1}\hat{m}(\bx_i) \right\} + N^{-2}\sum_{i=1}^N\delta_{A,i}\hat{w}_i^2\{y_i-\hat{m}(\bx_i)\}^2,
	\end{equation}
	where $\hat{V}\left\{N^{-1}\sum_{i=1}^N\delta_{B,i}\pi_{B,i}^{-1}\hat{m}(\bx_i) \right\}$ is a design-based variance of $N^{-1}\sum_{i=1}^N\delta_{B,i}\pi_{B,i}^{-1}\hat{m}(\bx_i)$ treating $\{\hat{m}(\bx_i):\delta_{B,i}=1\}$ as non-stochastic. Thus, we have finished the proof of Corollary~\ref{cor: variance estimator}.
\end{proof}

\section{Doubly robust estimator by \citet{chen2020doubly}}\label{ss: DRE}
Consider a logistic model for the non-probability sample $A$, $\pi_{A,i} = \pi_A(\bx_i;\btheta_0)$, where $\logit\{\pi_A(\bx_i;\btheta_0)\} = \bx_i^{\T}\btheta_0$, and $\btheta_0$ is the true parameter. An estimator of $\btheta_0$, say $\hat{\btheta}$, is obtained by solving 
\begin{equation}
	\sum_{i\in A}\bx_i- \sum_{i\in B}\pi_{B,i}^{-1}\pi_A(\bx_i;\btheta)\bx_i=\bzero. \label{eq: 1}
\end{equation}
The corresponding estimator is termed as ``maximum pseudo-likelihood estimator'' by \citet{chen2020doubly}.

To overcome the model mis-specification for the sampling mechanism associated with the non-probability sample, \citet{chen2020doubly} also proposed  two double robust estimators by assuming a parametric model for $m_0(\bx) = m(\bx;\bbeta_0)$, where $\bbeta_0$ is an unknown parameter to be estimated. Since missing at random is assumed, we can obtain a consistent estimator of $\bbeta_0$, say $\hat{\bbeta}$, using a standard approach. Then, the doubly robust estimators are 
\begin{equation}
	\hat{Y}_1 = N^{-1}\sum_{i\in A}\{\pi_A(\bx_i;\hat{\btheta})\}^{-1}\{y_i-m(\bx_i;\hat{\bbeta})\} + N^{-1}\sum_{i\in B}\pi_{B,i}^{-1}m(\bx_i;\hat{\bbeta})\label{eq: DR 1}
\end{equation}
and
\begin{equation}
	\hat{Y}_2 = \hat{N}^{-1}\sum_{i\in A}\{\pi_A(\bx_i;\hat{\btheta})\}^{-1}\{y_i-m(\bx_i;\hat{\bbeta})\} + \hat{N}^{-1}\sum_{i\in B}\pi_{B,i}^{-1}m(\bx_i;\hat{\bbeta}),\label{eq: DR 2}
\end{equation}
where $\hat{N} = \sum_{i\in A}\{\pi_A(\bx_i;\hat{\btheta})\}^{-1}$.
The only difference between (\ref{eq: DR 1}) and (\ref{eq: DR 2}) is that a true population size $N$ is used for (\ref{eq: DR 1}), but its estimator is used for (\ref{eq: DR 2}).

\section{Additional simulation result}\label{ss: bootstrap variance estimator}
Under a certain simulation setup, denote $\hat{Y}_N^{(m)}$ and $\hat{V}^{(m)}$ to be the HT\_KL estimator and its bootstrap variance estimator for the $m$-th Monte Carlo simulation for $m=1,\ldots,M$, where $M=1\,000$ in the simulation study; see Section~\ref{supp: bootstrap variance estimator} for details about the bootstrap variance estimator. Let $\hat{V}$ be the sample variance of $\{\hat{Y}_N^{(m)}:m=1,\ldots,M\}$. Then, the relative bias of the bootstrap variance estimator is 
$$
\frac{M^{-1}\sum_{m=1}^M\hat{V}^{(m)} - \hat{V}}{\hat{V}}.
$$

Table \ref{tab:my_label rb boot} shows the relative bias of the bootstrap variance estimator for  HT\_KL under different simulation setups. The relative bias of the bootstrap variance estimator is small regardless of the simulation setups. Thus, the variance of HT\_KL can be reasonably estimated by bootstrap.  
\begin{table}
	\centering
	\caption{Relative bias of the bootstrap variance estimator for   HT\_KL  based on 1\,000 Monte Carlo simulations under different simulation setups. The number of replication is $B=200$. }
	\label{tab:my_label rb boot}
	\begin{center}
		\begin{tabular}{ccc}
			\hline 
			Model  &(5\,000, 1\,000,100) & (10\,000, 2\,000,200)\\
			\hline
			Linear&-0.026&0.031\\
			Nonlinear&0.032&0.032\\
			\hline
			
		\end{tabular}
	\end{center}
\end{table}



\end{document}